\documentclass[11pt]{article}
\usepackage{cancel}
\usepackage[pagebackref, colorlinks=true,linkcolor=red,citecolor=blue]{hyperref}
\usepackage{amssymb}
\usepackage{amsmath}
\usepackage{graphicx, tikz-cd}
\input{liemacs10.sty} 
\usetikzlibrary{fadings}
\numberwithin{equation}{section}

\usepackage{color}
 
\newcommand{\red}[1]{{\color{red} #1}}

\renewcommand{\down}{{\mathop{\downarrow}}}

\newcommand\be{{\bf{e}}}

\renewcommand\up{{\uparrow}}
\newcommand{\Mob}{{\rm\textsf{M\"ob}}}
\newcommand\RR{{\mathbb R}}
\newcommand\ZZ{{\mathbb Z}}

\newcommand{\sH}{{\sf H}}
\newcommand{\vNA}{{\rm{vNA}}}
\newcommand{\sK}{{\sf K}}
\newcommand{\sV}{{\tt V}}

\newcommand{\sE}{{\tt E}}

\newcommand{\BGL}{\mathop{{\rm BGL}}\nolimits}

\newcommand{\PdS}{\mathop{{\rm PdS}}\nolimits}

\newcommand{\POh}{\mathrm{P}\mathcal{O}_h}

\renewcommand{\bO}{\mathbb O}

\renewcommand{\phi}{\varphi}

\newcommand{\Stand}{\mathop{{\rm Stand}}\nolimits}

\renewcommand\mlabel{\label}

\begin{document}

\title{Orthogonal pairs of Euler elements II: \\Geometric Bisognano--Wichmann and Spin--Statistics Theorems}
\author{Vincenzo Morinelli, Karl-Hermann Neeb, Gestur \'Olafsson} 

\maketitle

\begin{abstract}
Models in Algebraic Quantum Field Theory (AQFT) may be generalized including Lie groups of symmetries whose Lie algebras admit an Euler element $h$, characterized by the property that $\ad h$ is diagonalizable with eigenvalues in $\{-1, 0, 1\}$. These elements becomes fundamental to the formal description of wedge localization. In this paper, we extend the geometric analysis of Euler wedges and investigate their applications within the AQFT framework.
  We {call  a pair of Euler elements $(h, k)$ orthogonal if
    $e^{i \pi \operatorname{ad} h}(k) = -k.$} 
Using the geometric framework established in our previous work, we derive both a Bisognano–Wichmann Theorem and a Spin--Statistics Theorem for nets of standard subspaces and von Neumann algebras. Our results {show} how this generalized approach recovers classical results in the AQFT literature while providing a deeper structural understanding of the underlying geometry in established models.
\end{abstract}
  
\section{Introduction}

The interplay between geometric properties, algebraic structures, and physically relevant quantities is a distinctive feature of Algebraic Quantum Field Theory (AQFT). A central interest of the algebraic approach to QFT is to isolate the fundamental structural ingredients responsible for the salient properties of a theory. Notable examples include the study of the relation between localization properties and the braiding symmetries of charged Doplicher--Haag--Roberts (DHR) sectors, when different index sets of local algebras are considered
(see  \cite[\S 23]{Ro04} and reference therein). They also include
the notion of geometric modular action introduced in  \cite{BDFS00}, which plays a key role in the analysis of spontaneous symmetry breaking in AQFT
in \cite{BS05}.
These and related works concern models that can be described by nets of von Neumann algebras over general index sets, developed within abstract frameworks that can be adapted to concrete spacetime settings. Depending on the application, the index set may consist of double cones, spacelike cones, wedge regions, or even all   open subsets.

Motivated by the construction of new models from fundamental assumptions of
AQFT, most notably the Bisognano--Wichmann (BW) property and PCT symmetry
(cf.~\cite{SW00}), and a geometric insight, a generalized framework for the study of AQFT models has been developed in recent years. In this approach, the index set is described either abstractly in terms of abstract Euler wedges (cf.~Section~\ref{subsec:4}) or
concretely by ``wedge'' regions in causal homogeneous spaces. This dual perspective allows a systematic construction
and analysis of a broad class of models, formulated either as nets on
abstract Euler wedges  
or as nets on open subsets of homogeneous spaces, that covers many well known examples from the AQFT literature (\cite{MN21, MN22, FNO25a, NO21, Ne26,
NO26}).

A fundamental role in this framework is played by Euler elements
of Lie algebras. They are the generators of modular
  groups in the symmetry Lie algebra (\cite{MN24}),
  and thus provide a natural bridge between
representation-theoretic structures, localization properties in AQFT,
and the modular theory of operator algebras. 
Within this context, the language of standard subspaces emerges as a central tool, providing a unifying perspective on the geometric and modular aspects of the theory.

In this paper, we present a Bisognano--Wichmann Theorem and a Spin--Statistics Theorem in the generalized  geometric setting we introduced in our previous works for nets of standard subspaces (\cite{MN21, MN24, MNO25, FNO25a, Ne26}), and we explain how these results apply to nets of von Neumann algebras.
In this way, we generalize some key results in the existing literature
and embed them into a unified geometric framework for AQFT:
\begin{itemize}
\item Models in AQFT on Minkowski spacetime are described by nets of von Neumann algebras indexed by open regions {(non-empty, connected, open subsets)} of spacetime, satisfying the fundamental principles of quantum theory and relativity. These include, in particular, isotony, locality, Poincar\'e covariance, the spectral condition, and the existence of a vacuum vector enjoying the Reeh--Schlieder property.

Nets of standard subspaces arise in at least two natural ways: first, as one-particle nets associated with irreducible antiunitary representations of the Poincar\'e group, from which free quantum fields are constructed via second quantization; and second, by applying the self-adjoint parts of local von Neumann algebras to a cyclic and separating vacuum vector.

\item The term ``Spin--Statistics'' is motivated by the Algebraic Quantum Field Theory setting where, from a mathematical point of view,
it expresses a relation between the action of a spatial
    $2\pi$-rotation (in the center of the spin group)
and the statistics phase of a superselection sector (\cite{GL95, GL96}). In the context of field algebras on Minkowski spacetime (see \cite{DR89,DR90}),
it follows from a relation between a spatial
$2\pi$-rotation and the commutation relations of local algebras, as in
condition (SS) in Subsection~\ref{subsec:nets-real}
and (SSA) in Subsection~\ref{subsec:res-vNa} (cf.~\cite{GL95}).
We refer to \cite{Do09} for a detailed discussion of this property. \textit{In our context, we interpret it as a relation between the representation of some central
  elements of the symmetry group $G$ and the twisted locality property for a net of standard subspaces} (cf.~\cite{LMR16, MT19}).

\item  To any Rindler wedge $W$ 
  in Minkowski spacetime we can associate a
  canonical one-parameter group of Lorentz boosts that leave $W$ invariant.
  The Bisognano--Wichmann  property now states that, for any Rindler wedge $W$,
  the modular group of the von Neumann algebra associated
  to $W$, with respect to the vacuum state, coincides with this
   one-parameter group. 
  As a consequence, via Tomita--Takesaki Theory,
  the algebraic structure of the model encodes information about
  the symmetry group of the theory which is mostly generated by
  modular groups. Among other results, this property is remarkably helpful in establishing spin–statistics theorems (\cite{GL95,GL96}),
in uncovering new relations between different field theories
(\cite{GLW98, LMPR19, MR20}), and in computing entropy in QFT (\cite{W18});
recent advances are \cite{DMMR25, HLM26}.
\end{itemize}

We now recall the main objects we have developed in recent articles
\cite{MN21, MNO23, MN24, MNO25}
and summarize the main results established in our paper.
 {Throughout the paper, illustrative examples are provided to clarify and motivate the underlying geometric structures.}
In the context of nets of real subspaces, as they arise by modular localization in  AQFT, the unitary representation $U \colon G \to \U(\cH)$ of the
connected symmetry group $G$ is required to satisfy the Bisognano--Wichmann condition:
there exists an element $h \in \g$, the Lie algebra of $G$,
for which $\Delta := e^{2\pi i \cdot \partial U(h)}$ is the modular
operator of a standard subspace associated to a so-called wedge region.
Assuming $\ker U$ to be discrete, by the Euler Element Theorem (\cite{MN24}),
basic AQFT axioms imply that it is natural to assume that $h$ is an {\it Euler element}, i.e., $\ad h$ is non-zero and the Lie algebra
is the direct sum of the eigenspaces $\g_j(h) = \ker(\ad h - j \1)$ for
$j = 1,0,-1$. Here it is allowed that one, but not both, of the
spaces $\fg_{\pm 1}(h)$ is trivial. Simple Lie algebras containing
Euler elements are listed in Appendix \ref{app:list}.
Then $\tau_h^\g := e^{\pi i \ad h}$ is an involutive automorphism of $\g$
and if it integrates to an automorphism $\tau_h$ of $G$
(which is always the case if $G$ is simply connected),
we can form the extended group
\begin{equation}
  \label{eq:gtauh}
  G_{\tau_h} := G\rtimes \{ \1,\tau_h\}.
  \end{equation} 
  We can also assume by \cite{MN24} that the unitary representation $U$
  extends to an antiunitary representation of~$G_{\tau_h}$.
  Now
  \[\sV = \sV(U,h) := \Fix(U(\tau_h)e^{\pi i \partial U(h)}) \]
is a standard subspace 
    and its orbit $U(G).\sV$ in the set $\Stand(\cH)$ of standard subspaces of~$\cH$ represents a net of
    standard subspaces on wedge regions in a causal manifold.
    According to \cite{FNO25a}, such nets exist in abundance for
    all unitary representations of semisimple Lie groups; see
    \cite[\S\S 5.4-5.6]{Ne26} for
      extensions of these results and a  survey of the state of the theory. 

\textit{We are interested in the relation 
between pairs of standard subspaces contained in
such an orbit, and in particular in understanding
how they encode geometric locality conditions for a geometric AQFT.
} 

By shifting the focus from geometric wedge regions,
  the fundamental localization regions for AQFT, 
to Euler elements in the 
Lie algebra, the locality data is encoded abstractly
as follows: The {\it abstract wedge space} is 
  \begin{equation}
    \label{eq:cGGtau}
 \cG(G_{\tau_h}) := \{ (x,\sigma)\in \g \times G\tau_h
 \subeq \g \times G_{\tau_h}    \colon \sigma^2 = e, \Ad(\sigma)x = x\}.
  \end{equation}
  Abstract wedges $(x,\sigma)$ are, on the Lie group level, an analog of the
  Tomita--Takesaki pairs $(i\log\Delta, J)$ satisfying the modular relation
  $J\Delta^{it}J=\Delta^{it}, t \in \R,$
  \footnote{Given a von Neumann algebra $\cM\subset\cB(\cH)$ with a cyclic and separating vector $\Omega\in\cH$, the Tomita operator $S$ is the closure of the anti-linear involution $\cM\Omega\ni a\Omega\mapsto a^*\Omega\in\cM\Omega$.
    Its polar decomposition $S=J\Delta^{1/2}$ defines the antiunitary modular conjugation $J$ and the selfadjoint modular operator $\Delta$.}. 
In view of \cite{MN21, MN24}, the most relevant ones are the {\it
  abstract Euler wedges}, denoted $\cG_E(G_{\tau_h})$, for which
$x$ is an Euler element and $\Ad(\sigma) = \tau_x$.
The set $\cG(G_{\tau_h})$ carries a natural
$G$-action by $g.(x,\sigma)= (\Ad(g)x, g\sigma g^{-1})$,
order structures defined by invariant cones,
and a complementation operation (see Section~\ref{subsec:4}).
For a classification of Euler elements in simple Lie algebras 
  we refer to 
  \cite{MN21}.
The orbit
\begin{equation}
  \label{eq:cw+}
  \cW_+ := G.(h,\tau_h) \subeq \cG_E(G_{\tau_h})
\end{equation}
of the basic pair $W_0 := (h,\tau_h)$ plays the role of an
abstract Lie theoretic version 
of the orbits $U(G).\sV \subeq\Stand(\cH)$, arising from covariant 
nets of standard subspaces. 
 
A direct connection between abstract wedges and standard
subspaces is implemented by
the Brunetti--Guido--Longo (BGL) construction (cf.~\cite{MN21}).
It specifies for
every antiunitary representation $(U,\cH)$ of $G_{\tau_h}$ a natural $G$-equivariant map
\begin{equation}
  \label{eq:bgli}
 \cG(G_{\tau_h}) \to\Stand(\cH), \quad
 \sH_U(x,\sigma) = \Fix(U(\sigma)e^{\pi i \cdot \partial U(x)}), 
\end{equation}
mapping the base pair $(h,\tau_h)$ to $\sV(U,h) = \Fix(U(\tau_h)e^{\pi i \cdot \partial U(h)})$. To understand locality properties of nets on $\cW_+$, we
observe that $h$ is {\it symmetric}, in the sense that $-h \in
\cO_h := \Ad(G)h$, if and only if $\cW_+$ contains a {\it twisted central  complement} 
\[ W^{'\alpha} = (-h,\alpha \tau_h) \quad \mbox{ for some } \quad \alpha \in Z(G).\]
To study locality properties of the kind
\[ \sH(W^{'\alpha}) = Z_\alpha  \sH(W)' \quad \mbox{ for some } \quad
  Z_\alpha \in \U(\cH) \cap U(G)'\]
on the level of  nets on $\cW_+$, it is therefore
necessary that $h$ is symmetric.
The BGL construction \eqref{eq:bgli} assigns
for an antiunitary representation $U$ to the abstract wedge
$W^{'\alpha}$ the standard
subspace
\[ \sV^{'\alpha} := \Fix(U(\alpha)J \Delta^{-1/2})
  = Z_\alpha \sV',\]
where $Z_\alpha \in U(G)'$ is a unitary operator that satisfies 
\[ J Z_\alpha J = Z_\alpha^{-1} \quad \mbox{ and }  \quad
  Z_\alpha^2 = U(\alpha)\quad \mbox{ for  } \quad
J = U(\tau_h)\]
(cf.\ \cite[\S 2.1]{MN21}).
This locality condition appears in the literature on 
standard subspaces \cite{LMR16} and in connection with
von Neumann algebras \cite{GL95, DLR07}.
The possible complements of the abstract wedge
$(h,\tau_h) \in \cG_E(G_{\tau_h})$ are para\-met\-rized by a central
subgroup
\[ Z_2(G) := \{ \alpha \in Z(G) \colon W_0^{\alpha} = (h, \alpha \tau_h) \in \cW_+\}.\] 
If $G$ is simply connected, then
$Z_2(G) \cong \pi_1(\cO_h)$ is the fundamental group
of the adjoint orbit $\cO_h = \Ad(G)h \subeq \g$, 
and the map $\cW_+ \to \cO_h, (x,\sigma) \mapsto x$,
is a simply connected covering.
One of the central results in Part I  \cite{MNO25} 
of this project  asserts that,
if $\g$ is simple and $h \in \g$ is an
Euler element, then $\pi_1(\cO_h)$ is either trivial, $\Z$ or $\Z_2$.
More specifically, infinite fundamental
groups occur precisely 
when $\g$ is hermitian (e.g.~$\so_{2,d}(\R)$, $d \in \{1,3,4,\ldots\}$,
cf.~Appendix~\ref{app:list}),
$\Z_2$ occurs for the so-called split-type cases
(e.g.~$\fsl_{2r}(\R), r>1$), and for $\g = \so_{1,d}(\R), d > 2,$
it is trivial
(cf.\ Table $2$ and Theorem~4.1 in \cite{MNO25}).
For general Lie algebras, \cite[Thm.~4.2]{MNO25} provides a
reduction to the case of simple ones.

In this way, we obtain an abstract framework for nets of standard subspaces and nets of von Neumann algebras with fundamental structural properties that naturally extend those of AQFT to nets with symmetry group $G$ 
(see Definitions~\ref{def:ssnet} and \ref{def:vnnet}).

Within this framework, i.e., a unitary $G$-representation $(U,\cH)$
and a net $\sH$ of standard subspaces on $\cW_+$, we show that,
if $U$ satisfies the spectrum condition
with respect to a generating  cone, and $\sH$ satisfies
isotony and central twisted locality, then it also
has the Bisognano--Wichmann property  (Theorem~\ref{thm:1BW}).
If, in addition, the net is regular, then it has
the Spin--Statistics property  (Theorem~\ref{thm:spst1}):  
\[ U(\alpha)=Z_\alpha^2, \quad \mbox{ where } \quad
  \alpha \in Z_3(G) := Z_2(G) \cup \{ \alpha \in Z(G) \colon
  W_0^{'\alpha} = (-h, \alpha \tau_h) \in \cW_+\}.\] 
Here a key point is that our symmetry data contains only a unitary
  representation $U$ of $G$ and not an antiunitary extension to $G_{\tau_h}$.
  Accordingly, our strategy is to specify properties of the net $\sH$ that ensure
  the existence of an extension $U \colon G_{\tau_h} \to \AU(\cH)$, for which
  $\sH$ is the BGL net~$\sH_U$ from \eqref{eq:bgli}.

We further clarify a relevant geometric aspect of this property by showing that,   given two centrally complementary wedges $W, W^{'\alpha}\in \cW_+$, then  $\alpha$ can be written as a product of central elements of the integral subgroups of 
$\fsl_2(\R)$-subalgebras, namely
{\[ \alpha=\zeta_1,\ldots\zeta_n,\qquad \zeta_j\in Z(S_j) \subeq Z(G),\, j=1,\ldots, n,
 \] 
where $S_j\subset G$ is an integral subgroup with Lie algebra
$\Lie(S_j)\cong \fsl_2(\RR)$. Then
\[  \sH(W^{'\alpha}_0) = Z_{\zeta_1}\cdots Z_{\zeta_n} \sH(W_0)'
  \quad \mbox{ with }  \quad Z_{\zeta_j}^2=U(\zeta_j),\quad  W_0 = (h, \tau_h).\] 
\textit{Thus, the Spin--Statistics property  and the locality property of the net
  rely on the restriction of $U$ to $Z(G)$, resp.,
  the centers of the integral subgroups of $\fsl_2(\RR)$-subalgebras of $\fg$ (Theorem~\ref{thm:slspsp1}). } This provides a general framework into
which the results of \cite{GL95,MT19} naturally fit. We apply these results to nets of von Neumann algebras with a cyclic and separating vacuum vector,
to obtain the Bisognano--Wichmann Theorem \ref{thm:vnaBW} and the 
  Spin--Statistics Theorem~\ref{thm:spst2}. These theorems naturally
extend the results in \cite{Lo08, BGL93, GL95, DLR07}. 

In this picture, (symmetrically) orthogonal pairs of Euler elements
  emerge as a relevant geometric structure. They have been used
  in \cite{MN21} and classified in \cite{MNO25}.
On the level of standard subspaces, they correspond to pairs 
$(\sV_1, \sV_2)$  satisfying the ``orthogonality relations''
\label{eq:j1j2}
\begin{equation}
 J_{\sV_2}  \Delta_{\sV_1} J_{\sV_2}  = \Delta_{\sV_1} \quad \mbox{ and }  \quad 
 J_{\sV_1} \Delta_{\sV_2} J_{\sV_1}  = \Delta_{\sV_2}.
\end{equation}
  Such pairs arise naturally from antiunitary representations
  of the group $\tilde\SL_2(\R)_{\tau_h}$ and they are first visible in \cite{GL95} for Poincar\'e covariant theories and in \cite[\S~5]{Lo08} for
  M\"obius covariant theories. On the level of Euler elements, 
  this leads to the concept of an {\it orthogonal pair}
  $(h, k)$ of Euler elements, which means that
  $\tau_h^\g(k) = -k$. In
  \cite{MN21} it is shown that
  an Euler element $h$ in a simple Lie algebra $\g$ is
   symmetric if and only if
  there exists a second Euler element $k$ such that $(h,k)$ is orthogonal.
  If this is the case, then $(k,h)$ is also orthogonal and $h,k$ generate a
  real Lie subalgebra $\fs_{h,k} \cong \fsl_2(\R)$. Accordingly,
  a typical example of an
  orthogonal pair of  Euler elements in $\fsl_2(\R)$ is 
\begin{equation}
  \label{eq:h0k0-intro}
 h_0 := \frac{1}{2}\pmat{1 & 0 \\ 0 &-1} \quad \mbox{ and } \quad 
 k_0 := \frac{1}{2}\pmat{0 & 1 \\ 1 &0}.
\end{equation}

Concerning pairs of Euler elements, we find the following geometric  picture very instructive.  
  The first results on pairs of modular groups
  are due to Borchers and Wiesbrock (cf.\ \cite[Thm.~3.21]{Lo08}, \cite{Wi93},
  \cite{Wi97}, \cite{GLW98}, \cite{Bo00}). They concern halfsided  modular inclusions. These are pairs $\sV_1 \subeq \sV_2$, for which there exists 
a positive energy representation of the affine group $\Aff(\R)$ such that
$\sV_1 = U(1,1)\sV_2$, where $(1,1).x = x + 1$ represents a unit translation.
More generally, pairs $(\sV_1, \sV_2)$ {are said to have
  {\it $+$modular intersection} if $\sV_1\cap\sV_2$ is cyclic and
  \[ \Delta_{\sV_j}^{-it}(\sV_1\cap\sV_2)\subset \sV_j 
    \quad \mbox{ for } \quad t>0, j =1,2,\] 
and the strong limit (a scattering operator) 
\[ S := \lim_{t \to \infty} \Delta_{\sV_1}^{it} \Delta_{\sV_2}^{-it}
\quad \mbox{ satisfies } \quad J_{\sV_1} S J_{\sV_1} = S^{-1}.\] }
This is equivalent to the existence of a general antiunitary
representation of $\Aff(\R)$ with $\sV_1= U(1,1)\sV_2$}
(cf.\ \cite[Thm.~3.22]{NO17}). 
With these concepts, Wiesbrock obtained in
\cite[Thm.~6]{Wi97} a characterizations of triples
$(\sV_1, \sV_2, \sV_3)$ of standard subspaces such that their
modular groups and conjugations generate an antiunitary representation
of the extended M\"obius group $\PGL_2(\R) \cong \PSL_2(\R)_{\tau_h}
{=\Mob_{\tau_h}}$.
In \cite{Lo08} it is shown that such a triple  generates a $\Mob$-covariant net
of standard subspaces on the circle. An analogous result for von Neumann algebras appears in \cite{GLW98}.

The space 
$\cE(\fsl_2(\R))$ of Euler elements in $\fsl_2(\R)$
can be identified with a $2$-dimensional one-sheeted hyperboloid
(de Sitter space~$\dS^2$) and the space of Euler involutions is
$\PdS^2 =\dS^2/\{\pm\1\}$. 
Here pairs with modular intersections correspond to pairs
of Euler elements lying on a 
  lightray in $\dS^2$, i.e., an affine line in the
  hyperboloid (\cite[Thm.~3.22]{NO17}).
  In contrast, orthogonal pairs lie on a closed spacelike geodesic in such a way
  that their involutions are antipodal in $\PdS^2$, resp., related by a
  $90^\circ$-rotation in $\SO_{1,2}(\R)$. They may also be characterized
  by the involutions $\tau_h$ and $\tau_k$ to commute.

A classification of adjoint orbits of orthogonal pairs $(h,k)$ of Euler elements
in simple real Lie algebras is contained in \cite{MNO25}
(and recalled in Appendix~\ref{app:a.1}). It is based on our classification of Euler elements in \cite{MN21} and uses some structural results on $3$-graded
simple Lie algebras. 
If the simple Lie algebra $\g$ is hermitian, i.e., contains a
pointed generating invariant closed convex cone $C$,
then $C$ specifies a non-trivial order
structure on the abstract wedge space $\cW_+$ (see Sect.~\ref{subsec:4}).
In this context,  orthogonal pairs $(h,k)$ occur in $3$ types:
\begin{itemize}
\item positive/negative timelike: $[h,k] \in \pm C$, and 
\item spacelike $[h,k] \not\in C \cup - C$. 
\end{itemize}

The relation between orthogonal pairs of Euler elements and the 
  Spin--Statistics Theorem is that adjoint orbits of orthogonal
  pairs $(h,k)$ correspond to adjoint orbits of $\fsl_2(\RR)$ subalgebras
  $\fs_{h,k} = \R h + \R k + \R [h,k]$, and the corresponding central elements
  $\zeta_{h,k} = \exp(2\pi [h,k]) \in Z(G)$ are
  the structural origin of the twisted locality relations.

  As an illustration, we discuss in Section~\ref{sect:12ssconf} the massless conformal free field in $1$+$2$ dimensions. Owing to the absence of the Huygens principle, the associated one-particle net satisfies spacelike locality,
  but only timelike twisted locality. This phenomenon is related to the fact that the unitary positive energy representation of the conformal group restricts to a unitary representation of the Lorentz group and to the double covering $\Mob^{(2)}$ of the group $\Mob$, fixing the compactification of the time axis.
We stress that the Lorentz group is isomorphic to $\PSL_2(\RR)$, and that the origin of the different locality properties for spacelike and timelike regions in this geometric framework lies in the fact that the Lorentz subalgebra and the
Lie subalgebra $\L(\Mob)$ of the conformal Lie algebra $\so_{2,3}(\R)$
are generated by inequivalent orthogonal pairs of Euler elements.  
This is illustrated by the fact that a unitary representation of the conformal group satisfying the spectral condition restricts to a unitary representation of the Lorentz group that does not satisfy any spectral condition, whereas, 
{the restriction to $\Mob^{(2)}$ does.}

The structure of  this paper is as follows:
\begin{itemize}
\item In Section~\ref{sec:2} we introduce the geometric and analytic framework, recalling results from previous works. Readers from the AQFT community  will be guided through examples from the AQFT literature to see how they fit into our setting. 
\item In Section~\ref{sec:spinstat}
  we present the Bisognano--Wichmann and Spin--Statistics theorems and relate them to the regularity condition. 
 \item In Section~\ref{sect:appssn} we  
  describe how our results apply to
  one particle nets on $1$+$2$-dimensional Minkowski
  space and to $2$-dimensional conformal nets. 
\item In Section~\ref{sect:appvnan} we prove the corresponding theorems
  for nets of von Neumann algebras. We also
  compare our results with AQFT results in $1$+$3$-dimensional Minkowski space, on the chiral circle, and on the Einstein universe.
\item The final section contains an outlook on future perspectives.
\end{itemize}

\nin\textbf{Acknowledgements:}

  The work of Vincenzo Morinelli is partly supported by INdAM-GNAMPA, University of Rome Tor Vergata funding OANGQS CUP E83C25000580005, and the MIUR Excellence Department Project MatMod@TOV awarded to the Department of Mathematics, University of Rome Tor Vergata, CUP E83C23000330006.

  Karl-H. Neeb is supported by the DFG grant  Ne 413/10-2, 
  ``Nets of standard subspaces on ordered symmetric spaces''.

\pagebreak 

\nin {\bf Notation and conventions:} 
\begin{itemize}
\item Throughout this paper $G$ denotes a finite-dimensional connected 
  Lie group with Lie algebra $\g = \L(G)$, $e \in G$ its identity element, and
  $G_e$ its identity component. 
\item For $x \in \g$, we write $G^x := \{ g \in G \colon \Ad(g)x = x \}$ 
for the stabilizer of $x$ in the adjoint representation 
and $G^x_e = (G^x)_e$ for its identity component. 
\item For $h \in \g$ and $\lambda \in \R$, we write 
$\g_\lambda(h) := \ker(\ad h - \lambda \1)$ for the corresponding eigenspace 
in the adjoint representation.
\item For a Lie subalgebra $\fs \subeq \g$, we write 
$\Inn_\g(\fs)= \la e^{\ad \fs} \ra \subeq \Aut(\g)$ for the subgroup 
  generated by $e^{\ad \fs}$. We call $\fs$ {\it compactly embedded} if
  the group $\Inn_\g(\fs)$ has compact closure.
\item For $x \in \g$, we write $\cO_x := \Inn(\g)x$ for its adjoint orbit.   
\item If $\g$ is a Lie algebra, we write $\cE(\g)$ for the set of 
{\it Euler elements} $h \in \g$, i.e., $\ad h$ is non-zero and diagonalizable 
with $\Spec(\ad h) \subeq \{-1,0,1\}$. We call $h$ {\it symmetric} if 
$-h \in\cO_h$. The involution of $\g$ specified by~$h$
is denoted $\tau_h^\g := e^{\pi i \ad h}$. 
\item If $G$ is a group with Lie algebra $\g$, we
write $\tau_h \in \Aut(G)$ for the corresponding involution on $G$,
provided it exists, and $G_{\tau_h} := G \rtimes \{\id_G, \tau_h\}.$
\item Abstract Euler wedges:
  $\cG_E(G_{\tau_h}) = \{ (x,\sigma) \in
  \cE(\g) \times G\cdot \tau_h \colon
  \sigma^2 =e, \Ad(\sigma) = \tau_x^\g\}.$
\item An {\it antiunitary representation} of $G_{\tau_h}$ is a
  homomorphism $U \colon G_{\tau_h} \to \AU(\cH)$ for which
  $U(\tau_h)$ is an antiunitary involution, i.e., a conjugation.
  It is always assumed to be strongly continuous, i.e., to have continuous orbit maps.
\item Given a Lie group $G$, a non trivial pointed cone $C\subset \fg$ and a unitary (or antiunitary) representation $U$ of $G$,
  we shall say that $U$ satisfies the \textit{spectral condition}
  or is a \textit{positive energy representation} if
  $ C \subeq C_U := \{ x \in \g \colon -i\cdot \partial U(x) \geq 0\}.$ 
  \item For a unitary representation $(U,\cH)$ of $G$ we write:
$\partial U(x) = \derat0 U(\exp tx)$ for the infinitesimal
    generator of the unitary one-parameter group $(U(\exp tx))_{t \in\R}$
    in the sense of Stone's Theorem. 
\end{itemize}

\tableofcontents

\section{Preliminaries}
\mlabel{sec:2}

Let $G$ be a connected Lie group
with Lie algebra $\g$ and $h \in \g$ be an Euler element.
We assume that the involution $\tau_h^\g = e^{\pi i \ad h}$ integrates
to  an involution $\tau_h$ on $G$ and form the non-connected Lie group
\[ G_{\tau_h} := G \rtimes \{  \1, {\tau_h}\}.\]

\subsection{Abstract Euler wedges} 
\mlabel{subsec:4}

In this section we introduce some basic notation related to the space
  $\cG_E$, in particular the $\alpha$-twisted complements and the
  order structures. Following
  \cite[Def.~2.5]{MN21}, we recall the {\it
  abstract (Euler) wedge space 
\[ \cG_E := \cG_E(G_{\tau_h}) := \{ (x,\sigma)\in \cE(\g) \times G\tau_h
  \subeq \g \times G_{\tau_h}    \colon \sigma^2 = e, \Ad(\sigma) = e^{\pi i \ad x}\}\]
of $G_{\tau_h}$} as a structure that
encodes much information about the $G$-action on standard subspaces
of Hilbert spaces, modular groups and modular conjugations.
We refer to the introduction for more details. 
The group $G_{\tau_h}$ acts naturally on $\cG_E$  by 
\begin{equation}
  \label{eq:cG-act}
 g.(x,\sigma) := (\Ad^\eps(g)x, g\sigma g^{-1}),
\end{equation}
where $\Ad^\eps(g) = \eps_G(g) \Ad(g)$ is the {\it twisted adjoint action}.
Here $\eps_G \colon G_{\tau_h} \to \{\pm 1\}$ is
the surjective group homomorphism with $\ker \eps_G = G$.

We define the {\it complementary wedge} of 
$W = (x,\sigma) \in \cG_E$ by
$W' := (-x,\sigma) = \sigma.W.$
Note that $(W')' = W$ and $(gW)' = gW'$ for $g \in G$ 
by \eqref{eq:cG-act}.  
The relation $\sigma.W = W'$ is our main 
motivation to introduce  the twisted adjoint action.
This fits the relation $J_\sV \sV = \sV'$ for standard
subspaces (cf.\ Subsection~\ref{subsec:stand-subs}). 
It also matches the geometric interpretation in terms
of wedge regions in spacetime manifolds 
and the modular theory of operator algebras.

We consider the pair 
\[ W_0 := (h, {\tau_h}) \in \cG_E \]
as a base point and call its $G$-orbit 
\begin{equation}
  \label{eq:cw+b}
  \cW_+ := G.W_0 = \{ (\Ad(g)h, g {\tau_h}(g)^{-1} \cdot {\tau_h}) \colon g \in G\}
\end{equation}
the corresponding {\it abstract wedge space}.
We write
\[  Z(G)^- := \{  z \in Z(G) \colon  \tau_h(z) = z^{-1} \},\]
and consider the homomorphism 
\begin{equation}
  \label{eq:partialW}
  \partial_h \colon  G^{\{\pm h\}} := \{ g \in G \colon \Ad(g)h \in \{\pm h \}\} \to Z(G)^-,  \quad g \mapsto g \tau_h(g)^{-1}
\end{equation}
 (cf.\ \cite[\S 2.4]{MN21}). 
 For $\alpha \in Z(G)^-$, we define the 
{\it $\alpha$-twisted  complement}  of 
$W = (x,\sigma) \in \cG_E(G)$ by 
\begin{equation}
  \label{eq:xsigma'}
 W^{'\alpha} = (x,\sigma)^{'\alpha} := {(-x,\alpha\sigma)}.
\end{equation}
We will refer to  couples of the form $W^{'\alpha}$ as 
{\it central complements of the abstract wedge $W$.} 
We consider $W^{'\alpha}$ as a ``complement'' of {$W = (x,\sigma)$ 
  because the corresponding Euler element is $-x$}.

 The kernel of the restriction of $\partial_h$
 to $G^h$ is the open subgroup
 \begin{equation}
   \label{eq:gw0}
   G^{W_0} = G^{h,\tau_h} = \{ g \in G \colon  g.W_0 = W_0 \}
   \subeq G^h. 
 \end{equation}
 We consider the central subgroups
\begin{equation}
  \label{eq:z123}
 Z_1(G) := \partial_h(Z(G)) \subeq Z_2(G)
  := \partial_h(G^h) \subeq Z_3(G) := \partial_h(G^{\{\pm h\}})
  \subeq Z(G)^-.
\end{equation}
The subgroup $Z_2(G)$ describes the orbit
$G^h.W_0 = \{h\}  \times Z_2(G) \cdot{\tau_h} \subeq \cW_+.$
It is the fiber over the Euler element $h$ for the projection map
\begin{equation}
  \label{eq:qtoOh}
  q \colon \cW_+ \to \cO_h =  \Ad(G)h, \quad (x,\sigma) \mapsto x.
\end{equation}
As $\cW_+ \cong G/G^{W_0} = G/G^{h,\tau_h}$, 
and $G^{h,\tau_h}$ is an open subgroup of $G^{\tau_h}$, the projection
$q$ is a $G$-equivariant covering map.
If $h$ is not symmetric, then $G^{\{\pm h\}} = G^h$ and $Z_2(G) = Z_3(G)$,
and if $h$ is symmetric, then $G^h \subeq G^{\{\pm h\}}$ is a subgroup
of index $2$. So we always have $|Z_3(G)/Z_2(G)| \leq 2$.
Equality is characterized by
\begin{equation}
  \label{eq:lem:5.2}
  W_0' \in \cW_+ \quad \Leftrightarrow \quad
  (\exists g \in G^{\tau_h}) \ \Ad(g)h = -h
  \quad \Leftrightarrow \quad
  Z_3(G) = Z_2(G).  
\end{equation}
{\rm(\cite[Lemma~3.3]{MNO25})} 

To a pointed closed convex $\Ad^\eps(G_{\tau_h})$-invariant cone 
    $C \subeq \g$,
    we associate the {\it $C$-order on $\cG_E(G_{\tau_h})$} as follows
(\cite[Def.~2.5]{MN21}).  
  The subset  
\[ S_{W_0} := \exp(C_+) G^{W_0} \exp(C_-)  \subeq G
  \quad \mbox{ with} \quad
  C_\pm := \pm C \cap \g_{\pm 1}(h)\]
is a closed subsemigroup of $G$, and its unit group
is the stabilizer $G^{W_0}$ (cf.\ \cite[Thm.~3.4]{Ne22}).
It determines a $G$-invariant partial order on $\cW_+$ by 
\begin{equation}
  \label{eq:cG-ord}
g_1.W_0 \leq g_2.W_0  \quad :\Longleftrightarrow \quad 
g_2^{-1}g_1 \in S_{W_0}.
\end{equation}
In particular, $g.W_0 \leq W_0$ is equivalent to $g \in S_{W_0}$.
Here $C = \{0\}$ is also permitted and specifies the trivial order.

\begin{rem} \mlabel{rem:z2pi1oh} 
  Let $q_G \colon \tilde G \to G$ denote the simply connected covering group 
  (on which $\tau_h$ always exists) 
and put $\tilde G_{\tau_h} = \tilde G \rtimes \{ \1,\tau_h\}$.

\nin (a) The subgroup $\tilde G^{\tau_h}$ is connected
(\cite[Thm.~IV.3.4]{Lo69}) with Lie algebra $\g^{\tau_h} = \g_0(h) = \ker(\ad h)$,  
hence equal to $\tilde G^h_e$, so that
$\tilde W_0 := (h, \tau_h) \in \cG_E(\tilde G_{\tau_h})$ has the connected
stabilizer group $\tilde G^{h,\tau_h} = \tilde G^h_e$. Therefore $\tilde \cW_+ := \tilde G.\tilde W_0$
is simply connected, $\tilde W_0' \not\in  \tilde\cW_+$, and the map
\begin{equation}
  \label{eq:tildew-cov}
  \tilde q \colon \tilde \cW_+ \to \cO_h, \quad (x,\sigma) \mapsto x
\end{equation}
is the simply connected covering space of the adjoint orbit $\cO_h$. We thus 
obtain the following topological interpretation of $Z_2(\tilde G)$: 
\begin{equation}
  \label{eq:pi1oh}
  Z_2(\tilde G) = \partial_h(\tilde G^h)
  \cong \tilde G^h/\tilde G^h_e 
  \cong \pi_0(\tilde G^h)
 \cong \pi_1(\cO_h).
\end{equation}

\nin (b) If $h$ is symmetric, then $-\cO_h = \cO_h$, so that we also obtain a
projective adjoint orbit
\[ \POh := \cO_h/\{ \pm \1\} \cong \tilde G/\tilde G^{\{\pm h\}}, \]
for which $\tilde \cW_+$ also is the universal covering.
Accordingly,
\begin{equation}
  \label{eq:pi1z3}
  Z_3(\tilde G) = \partial_h(\tilde G^{\{\pm h\}}) \cong
  \pi_0(\tilde G^{\{\pm h\}}) \cong \pi_1(\POh). 
\end{equation}
\end{rem}

\subsection{Orthogonal pairs of Euler elements} 
\begin{defn} \mlabel{def:diag-act}
We call a pair $(h,k)$ of Euler elements in a Lie algebra
  $\g$ {\it orthogonal} if $\tau_h^\g(k) = -k$.
We write $(h,k) \sim (h',k')$ if there exists a $\phi \in \Inn(\g)$ with
$(\phi(h),\phi(k)) = (h',k')$. 
\end{defn}

\begin{ex}
  \mlabel{ex:sl2} {\rm(Euler elements in $\fsl_2(\R)$)} 
In $\g= \fsl_2(\R)$ we have the orthogonal Euler elements 
\begin{equation}
  \label{eq:h0k0}
 h_0 = \frac{1}{2}\pmat{1 & 0 \\ 0 &-1} \quad \mbox{ and } \quad 
 k_0 := \frac{1}{2}\pmat{0 & 1 \\ 1 &0}. 
\end{equation}
The group $\Inn(\g)$ acts transitively on the set 
  \[\cE(\fsl_2(\R))
    = \big\{ x \in \fsl_2(\R) \colon \det(x) = -{\textstyle\frac{1}{4}}\big\} = 
    \Big\{ \pmat{a & b \\ c & -a} \colon a^2 + bc = \frac{1}{4}\Big\}
  \]
  of Euler elements, which is diffeomorphic to $2$-dimensional de Sitter space
$\dS^2$ (a one-sheeted hyperboloid) (cf.~\cite[Rem.~2.7]{MN22}).
From
  \[ \tau_{h_0}\pmat{a & b \\ c & -a}
    = \pmat{a & -b \\ -c & -a} \]
  it follows that the Euler elements orthogonal to $h_0$ are those of the
  form
  \[ k = \frac{1}{2} \pmat{0 & b \\ b^{-1} & 0},\]
  and we obtain in particular $k_0$ for $b = 1$.
We  also have   $\tau_{k_0}(h_0) = -h_0$. 

Now it is easy to see that the pairs $(h_0, k_0)$ and $(h_0, -k_0)\sim (k_0,h_0)$ represent the two conjugacy classes of orthogonal pairs.
For 
\begin{equation}
  \label{eq:z0}
  z_0 :=\frac{1}{2} \pmat{0 & 1 \\ -1 & 0} =  [h_0, k_0] \in \so_2(\R).
\end{equation}
we then have
    \begin{equation}
      \label{eq:z0-rot}
e^{\mp\frac{\pi}{2} \ad z_0} h_0 = \mp [z_0, h_0] = \pm k_0 \quad \mbox{ and  } \quad        
e^{-\pi \ad z_0} h_0 =  e^{\pi \ad z_0} h_0 =  -h_0.
\end{equation}
In particular, $e^{\pi \ad z_0} \in \Aut(\fsl_2(\R))$ maps $h_0$ to $-h_0$, but
this element is fixed by $\tau_h$ because 
$\tau_h(e^{\pi \ad z_0}) = e^{-\pi \ad z_0} = e^{\pi \ad z_0}$.
If $S$ is a connected Lie group with Lie algebra
  $\fs = \fsl_2(\R)$, we have
  \begin{equation}
    \label{eq:Zi(S)-sl2}
Z(S) = \exp(2\pi \Z z_0) = Z_3(S) \supeq
Z_1(S) = Z_2(S) = \exp(4\pi \Z z_0).
  \end{equation}

    We also note that $\theta(x) = - x^\top$ defines a Cartan involution
    on $\fsl_2(\R)$ for which $\Fix(\theta) = \so_2(\R) = \R z_0$. The
   $\ad h$-eigenvector $e_0 :=  \pmat{0 & 1 \\ 0 & 0}$ 
satisfies
    \[ h_0 = -\frac{1}{2}[e_0,\theta(e_0)], \quad
      k_0 = \frac{1}{2}(e_0 - \theta(e_0)), \quad \mbox{ and } \quad
      z_0 = \frac{1}{2}(e_0 + \theta(e_0)).\]
    The unique pointed generating invariant cone $C \subeq \fsl_2(\R)$,
    containing $z_0$, is given by
    \[ C = \Big\{ \pmat{a & b \\ c & -a} \colon
      a^2 + bc \leq 0, b \geq c \Big\}.\] 
    \end{ex}

\begin{rem}\label{rmk:gengen} It follows from Example~\ref{ex:sl2}
that any Euler element $h$ contained in some
$\fsl_2(\R)$-subalgebra $\fs \subeq \g$ 
is symmetric and admits a partner $k$ for which
$(h,k)$ and $(k,h)$ are orthogonal. 
We call pairs with this property {\it symmetrically orthogonal}.
  If $\g$ is simple, all  orthogonal pairs of Euler elements are
  symmetrically orthogonal by \cite[Thm.~3.13]{MN21}.
Conversely, if $(h,k)$ is symmetrically orthogonal, then $h$, $k$ and
$z_{h,k} := [h,k]$  span a
subalgebra $\fs_{h,k}$ isomorphic to $\fsl_2(\R)$
(\cite[Thm.~2.6]{MNO25}) and
\begin{equation}
  \label{eq:zetahkintro}
  \zeta_{h,k} := \exp(2\pi z_{h,k}) \in Z(G)
\end{equation}
generates the cyclic center of the integral
subgroup  $S_{h,k} := \la \exp \fs_{h,k} \ra \subeq G$. 
In view of \eqref{eq:z0-rot},
the element $r_{h.k} = \exp(\frac{\pi}{2} z_{h,k}) \in S$ satisfies
\[ \Ad(r_{h,k})h = -k \quad \mbox{ and }  \quad
 r_{h.k}^2.(h, \tau_h) = (-h, \zeta_{h,k} \tau_h),\]
  so that
  \[ \tau_{k} \tau_{h}
    =  r_{h,k}\tau_{h} r_{h,k}^{-1} \tau_{h}
    =  r_{h,k}^2,\]
and we obtain the representation of the central element
  $\zeta_{h,k}$ as a commutator: 
  \begin{equation}
    \label{eq:gengen}
\tau_{k} \tau_{h}\tau_{k} \tau_{h}
= r_{h,k}^4 =  \partial_h(r_{h,k}^2) = \zeta_{h.k} \in Z_3(G)
  \end{equation}
\end{rem}

The following theorem shows that all twisted complements
  of abstract Euler wedges can be obtained in several
  steps by acting with central elements of $\SL_2(\R)$-subgroups.

\begin{thm} \mlabel{cor:z3gen} {\rm(\cite[Cor.~5.11]{MNO25})}
  {\rm($Z_3$-Theorem)} Let
  $G$ be a connected Lie group with Lie algebra $\g$ 
and $h \in \g$ a symmetric Euler element for
  which the Lie algebra involution  $\tau_h^\g$ integrates to~$G$.
  Then $Z_3(G)$ is generated by finitely many elements
  of the form $\zeta_{h,k} := \exp(2\pi [h,k])$, where $(h,k)$ is a pair of orthogonal   Euler elements, and $Z_2(G)$ is generated by elements of the form
  $\zeta_{h,k,k'} := \zeta_{h,k} \zeta_{h,k'}^{-1}$, where
    $(h,k')$ is another orthogonal pair. 

If, in addition, $\g$ is simple, then $Z_2(G)$ is cyclic and generated
    by an element $\zeta_{h,k,k'}$, where the orthogonal pairs $(h,k)$
    and $(h,k')$ can be choosen as follows: 
If $\fs \cong \fsl_2(\R)^{\oplus r}$ as in
    {\rm Theorem~\ref{thm:1.4}}, where
    $h= (h_0, \cdots, h_0)$,
    $k = (k_0, \cdots, k_0)$ and 
    $k' = (-k_0, k_0,\cdots, k_0)$, then 
    \[ \zeta_{h,k,k'} = \exp(4\pi (z_0, 0,\cdots, 0)).\] 
\end{thm}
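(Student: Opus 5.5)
The plan is to read $Z_3(G)$ and $Z_2(G)$ as images of the homomorphism $\partial_h$, and to produce enough elements of $G^{\{\pm h\}}$ from $\fsl_2(\R)$-subgroups.

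\emph{Step 1 ($\partial_h$ is a homomorphism with a known kernel).} For $a,b \in G^{\{\pm h\}}$ we have $\partial_h(ab) = a\,\partial_h(b)\,\tau_h(a)^{-1} = \partial_h(b)\partial_h(a)$, because $\partial_h(b)$ is central. So $\partial_h$ is a homomorphism. For $x \in \g_0(h) = \g^{\tau_h}$ we get $\partial_h(\exp x) = \exp x \exp(-x) = e$. Hence $G^h_e \subeq \ker \partial_h$, and $\partial_h$ factors through the finitely generated group $\pi_0(G^{\{\pm h\}})$. Finite generation follows from a Cartan decomposition $G^h = K^h \exp(\fp^h)$ with $K^h$ compact modulo a central part. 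Consequently $Z_2(G)$ and $Z_3(G)$ are generated by the $\partial_h$-images of representatives of generators of $\pi_0(G^h)$, respectively $\pi_0(G^{\{\pm h\}})$.

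\emph{Step 2 (reduction of $Z_3$ to $Z_2$ and the $Z_2$-generators).} Since $h$ is symmetric, there is $k$ with $(h,k)$ orthogonal (\cite{MN21}, Remark~\ref{rmk:gengen}). By Remark~\ref{rmk:gengen}, $r_{h,k}^2 \in G^{\{\pm h\}}\setminus G^h$ and $\partial_h(r_{h,k}^2) = \zeta_{h,k}$. Because $G^h$ has index $2$ in $G^{\{\pm h\}}$, the group $Z_3(G)$ is generated by $\zeta_{h,k}$ together with $Z_2(G)$.

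For a second orthogonal pair $(h,k')$, the element $r_{h,k}^2 r_{h,k'}^{-2}$ lies in $G^h$, and by Step 1
\[ \partial_h\bigl(r_{h,k}^2 r_{h,k'}^{-2}\bigr) = \zeta_{h,k}\zeta_{h,k'}^{-1} = \zeta_{h,k,k'}. \]
Every such $\zeta_{h,k,k'}$ is itself a quotient of two $\zeta$'s. So both assertions of the general part follow once we show the following: \emph{$\pi_0(G^h)$ is generated by the classes of elements $r_{h,k}^2 r_{h,k'}^{-2}$.}

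\emph{Step 3 (the main obstacle: generating $\pi_0(G^h)$).} First I would reduce to the simply connected cover $\tilde G$, since $Z_j(G)$ is the image of $Z_j(\tilde G)$ under $q_G$. There $\pi_0(\tilde G^h) \cong \pi_1(\cO_h)$ by Remark~\ref{rem:z2pi1oh}. Next I would use the reduction of \cite[Thm.~4.2]{MNO25} to split $\g$ into an abelian/solvable part, on which $\tilde G^h$ is connected, and simple ideals containing components of $h$. This reduces the claim to simple $\g$.

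For simple $\g$ I would use the subalgebra $\fs \cong \fsl_2(\R)^{\oplus r}$ of Theorem~\ref{thm:1.4}, with $h = (h_0,\ldots,h_0)$. Its compact part $\R z_0^{\oplus r}$ sits in a maximal torus of $K^h$, and the strategy is to show that $\pi_0(K^h) = \pi_0(G^h)$ is the image of the lattice generated by $\exp(4\pi(z_0,0,\ldots,0))$ and its Weyl conjugates.

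Concretely, I would take $k = (k_0,\ldots,k_0)$ and $k' = (-k_0,k_0,\ldots,k_0)$. A direct computation in $\tilde\SL_2(\R)$ then gives
\[ r_{h,k}^2 r_{h,k'}^{-2} = \exp\bigl(\pi(z_0,0,\ldots,0)\bigr)\exp\bigl(\pi(z_0,0,\ldots,0)\bigr), \]
whose $\partial_h$-image is $\exp(4\pi(z_0,0,\ldots,0))$.

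\emph{Step 4 (cyclicity and the explicit generator).} Finally I would invoke the computation of $\pi_1(\cO_h) \in \{1,\Z,\Z_2\}$ from \cite[Thm.~4.1]{MNO25}. It remains to check that this element represents a generator: in the hermitian case it is the positive generator of $\Z$, and in the split case it is the nontrivial class. I would verify this by comparing it with the loop in $\cO_h$ obtained by rotating the first $\fsl_2$-factor through $\exp(t z_0)$, $t \in [0,4\pi]$. This identification of the generator is the step I expect to require the case-by-case structure from Part~I.
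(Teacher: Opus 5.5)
Your Steps 1 and 2 and the passage to $\tilde G$ are correct. The map $\partial_h$ is a homomorphism trivial on $G^h_e$, $\partial_h(r_{h,k}^2)=\zeta_{h,k}$, the element $r_{h,k}^2r_{h,k'}^{-2}\in G^h$ maps to $\zeta_{h,k,k'}$, and $Z_j(G)=q_G(Z_j(\tilde G))$. But this is only the formal part. The actual content of the theorem is that the classes of such elements generate $\pi_0(\tilde G^h)\cong\pi_1(\cO_h)$, and that for simple $\g$ the class of $\exp(2\pi z_1)=r_{h,k}^2r_{h,k'}^{-2}$ alone does. Here $z_j=e_j+\theta(e_j)$ is the copy of $z_0$ in $\fs_j$. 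You leave exactly this open, and the mechanism you sketch for it does not work:
\begin{itemize}
\item $\Spann\{z_1,\ldots,z_r\}$ is not contained in $\fk^h$. Since $\tau_h(z_j)=-z_j$, it lies in $\fq_\fk$, and $\exp(tz_j)$ rotates $h$. If it were a torus in $K^h$, it would lie in $K^h_e$ and contribute nothing to $\pi_0$.
\item Components of $\tilde G^h$ are represented by $\exp(2\pi z_j)$, not $\exp(4\pi z_j)$; the latter are their $\partial_h$-images.
\item The loop $t\mapsto e^{t\ad z_1}h$, $t\in[0,4\pi]$, lifts to a path ending at $\exp(2\pi z_1)^2$, so it represents twice the class in question. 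It is null-homotopic when $\pi_1(\cO_h)\cong\Z_2$. The relevant loop uses $t\in[0,2\pi]$.
\item Knowing abstractly that $\pi_1(\cO_h)\in\{1,\Z,\Z_2\}$ does not tell you which element generates.
\end{itemize}

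The missing ingredient is a polar decomposition for the symmetric pair $(\fk,\fh_\fk)$.
\begin{itemize}
\item As $\theta(h)=-h$, the group $\tilde G^h$ is $\theta$-stable. Hence $\tilde G^h=\tilde K^h\exp(\fh_\fp)$ and $\pi_0(\tilde G^h)=\pi_0(\tilde K^h)$.
\item For $x,y\in\fq$ one has $[[h,x],[h,y]]=-[x,y]$. So $a+x\mapsto a+i[h,x]$ ($a\in\fh_\fk$, $x\in\fq_\fk$) is an isomorphism of $\fk$ onto the compact dual $\fh_\fk\oplus i\fq_\fp$ of $\g^*$, sending $z_j$ to $ik_j$.
\item Therefore $\fc_\fk:=\Spann\{z_1,\ldots,z_r\}$ is maximal abelian in $\fq_\fk$, and $\tilde K=\tilde K^h_e\exp(\fc_\fk)\tilde K^h_e$.
\item Every component of $\tilde K^h$ then contains some $\exp(\sum_j t_jz_j)$, and the condition $\Ad(\exp(\sum_j t_jz_j))h=h$ forces all $t_j\in2\pi\Z$.
\item Thus $\pi_0(\tilde G^h)$ is generated by the classes of $\exp(2\pi z_j)=r_{h,k}^2r_{h,k'}^{-2}$, where $k=(k_0,\ldots,k_0)$ and $k'$ differs from $k$ only by the sign of the $j$-th entry.
\item The Weyl group of $\Sigma(\fk,\fc_\fk)\cong\Sigma(\g^*,\fc)$ (type $A_{r-1}$, $C_r$ or $D_r$) is realized in $\tilde K^h_e$ and permutes the $z_j$ up to sign. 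Conjugating the central elements gives $\exp(4\pi z_j)=\exp(\pm4\pi z_1)$. This yields cyclicity with the stated generator, with no case-by-case check.
\end{itemize}

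For non-simple $\g$ you still have to make precise the reduction to the simple ideals of a Levi complement. This includes checking that orthogonal partners in the ideals assemble into orthogonal Euler elements of $\g$. That reduction, not a Cartan decomposition of $G^h$ (which is unavailable for non-reductive $G$), is also where finite generation comes from.
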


\subsection{Examples} 
The examples below show that
there is a natural correspondence between abstract wedges,
their order structure and their complements,
and wedge regions in manifolds, their inclusions and causal complements,
  as they occur in AQFT models.
The correspondence between abstract wedges and wedge regions
in Minkowski space is briefly discussed in  Example~\ref{ex:mink} below.
For de Sitter space we refer to \cite[Sect.~2.1]{Mo25}, 
\cite[Ex.~2.7]{MN24} and \cite{FNO25b}, and for the  chiral 
circle to \cite[Ex.~2.10(c)]{MN21}.
We also refer to the lecture notes \cite{Ne26} for various concrete
  examples.
\begin{example}\label{ex:mink}
Consider  Minkowski spacetime $M=\RR^{1,d}$ of space dimension $d>0$,
 with the Lorentzian metric given by the bilinear form  
 \[ (x_0,\bx) (y_0, \by) := x_0 y_0 - \bx \by = x_0 y_0 - x_1 y_1
   - \cdots - x_d y_d.\]
 It isometry group is the {\it Poincar\'e group}
 \[ \cP := \R^{1,d} \rtimes \cL, \quad \mbox{ where } \quad
   \cL := \OO_{1,d}(\R) \]
 is the {\it Lorentz group}. The causal structure on $M$ is specified
 by the {\it closed forward light cone} 
 \[ C=\{x\in M:x^2\geq0, x_0\geq0\}\subset\RR^{1,d}.\]
 It is left invariant by the  {\it orthochronous} subgroup
 $\cP^\up = \R^{1,d} \rtimes \cL^\up$. Writing
 $\cP_+ = \R^{1,d} \rtimes \SO_{1,d}(\R)$
 for the subgroup of orientation preserving (=proper)
 isometries, 
 \[ \cP^\up_+ = \cP^\up \cap \cP_+  = \R^{1,d} \rtimes \cL_+^\up \]
 is the identity component.

The {\it causal complement} of a region $\cO\subset M$ is
\begin{equation}
  \label{eq:O'}  \cO'=\{y\in\RR^{1,d}: ( \forall x\in\cO)\, (x-y)^2<0\}^\circ.
\end{equation}
We always have $\cO \subeq \cO''$, and we
say that $\cO$ is {\it causally complete} if $\cO=\cO''$.
A \textit{wedge region} in $\RR^{1,d}$
is a Poincar\'e transform  $W=gW_R,  g\in \cP_+^\up$,
where
\[   W_R=\{x\in\RR^{1,d}:|x_0|<x_1\}\]
is the  Rindler wedge  $W_R$.
Wedge regions are always causally complete with $(gW_R)' = g.(-W_R)$. 
 
We consider the one-parameter group of boosts $\Lambda_{W_R}(t)=\exp(th_{W_R}) \in \cP_+^\uparrow$, where
\[ h_{W_R} x= (x_1, x_0, 0,\ldots, 0).\]
There is a one-to-one correspondence between wedge regions
and one-parameter groups of boosts, resp., their generators, via
\[ h_W:=\Ad(g) h_{W_R}\stackrel{1:1}\longleftrightarrow
  \Lambda_W:=g\Lambda_{W_R}g^{-1}\stackrel{1:1}
  \longleftrightarrow W=gW_R \quad \mbox{ for } \quad g \in\cP_+^\up, \] 
where $h_W\in\Lie(\cP_+^\up)$ generates $\Lambda_W(t)=\exp(th_W)$.
To each wedge region $W$, we also associate the involution
\[ j_W := e^{\pi i h_W} \res_{\R^{1,d}} \in \cP_+^\down,
  \quad \mbox{ and in particular } \quad
  j_{W_R}(x) =(-x_0,-x_1,x_2,\ldots,x_d). \]
This  establishes a one-to-one
correspondence between abstract wedges
$W = (h_W, j_W) \in\cW_+(\cP_+)$ and wedge regions $W \subeq \RR^{1,d}$.
Further, $j_W(W) = W'$ (cf.~\eqref{eq:O'}).
We further note that Poincar\'e covariance and
$W_R' = - W_R$, resp., $h_{W_R'} = - h_{W_R}$,
imply that $\Lambda_{W'}(t)=\Lambda_W(-t)$, resp., $h_{W'}=-h_W$.
 Note that
\[ \cP_+=\cP_+^\up\cup\cP_+^\downarrow=\cP_+^\up\cup j_{W_R}\cP_+^\up.\] 

As $Z(\cP_+^\up) = \{e\}$, the projection map
  $q : \cG_E(\cP_+) \to \cE(\L(\cP)) = \cO_{h_W}$ is bijective,
  so that the abstract wedge space  $\cG_E(\cP_+)$ can be 
  identified with the set $\cE(\L(\cP))$ of Euler elements.  Associating to each 
  Euler element $h$, considered as an affine vector field on $M$, 
  its positivity region
  \begin{equation}
    \label{eq:WM+}
    W_M^+(h) = \{ x \in M \colon h(x) \in C^\circ\},
  \end{equation}
  we further obtain a bijection from
  $\cE(\L(\cP))$ onto the set of wedge regions in $M$
  (\cite[Lemma~4.12]{NO17}).

For $d>1$, there exists for the Euler elements $h_{W_R}$, associated to $W_R$,
Euler elements $h_{W_{s}}\in\cO_{h_{W_R}}$, associated to 
the wedges
\begin{equation}
  \label{eq:Ws}
  W_{s}:=\{x\in\RR^{1,d}:|x_0|<x_s\},\quad s=2,\ldots,d,
\end{equation}
such that all pairs $(h_{W_R}, h_{W_s})$ are orthogonal
(cf.~\cite[Sect.3]{MN21}). For $d=1$,
the Poincar\'e--Lie algebra contains no orthogonal pair
of Euler elements. 
For more details we refer to \cite{Mo25} and the references therein.
\end{example}

\begin{example}\label{ex:mob} (a) (M\"obius group on the circle,
  \cite[Ex.~2.10(c)]{MN21})
The M\"obius group $G := \Mob :=\PSL_2(\R)$ 
acts on the compactification $\R_\infty = \R \cup \{\infty\}
\cong \bS^1$ 
of the real line by 
\[ g.x := \frac{a x + b}{cx + d}  \qquad 
\mbox{ for }\quad g = \pm\pmat{a & b \\ c & d}\in \PSL_2(\R).\] 
The group $\Mob$ is generated by any two of the following
one-parameter subgroups (cf.~Example~\ref{ex:sl2}):
 \begin{itemize}
 \item Dilations: $\delta(t)(x)=  e^t x$ for $t \in \R$,
   generated by the Euler element 
   $h_0 := \displaystyle{\frac12 \pmat{1  & 0 \\ 0 & -1}}$.
 \item Rotations: $\rho(\theta).x
   = \frac{\sin(\theta/2) + \cos(\theta/2)x}{\cos(\theta/2) - \sin(\theta/2)x}$,
   generated by $z_0 := \displaystyle{\frac{1}{2}\pmat{ 0 & 1 \\ -1 & 0}}$.
    \item $\kappa(t).x
   = \frac{\sinh(t/2) + \cosh(t/2)x}{\cosh(t/2) + \sinh(t/2)x} 
   =\frac{e^{t}(1+x) + x-1}{e^t(1+x)+1 - x}$ for $t \in \R$,
      generated by the Euler element
   $k_0 := \displaystyle{\frac12 \pmat{0 & 1 \\ 1& 0}}$. 
 \end{itemize}

We call  a non-dense, non-empty open connected subset 
$I \subeq \bS^1$ an {\it interval} and 
write $\cI(\bS^1)$ for the set of intervals in $\bS^1$. 
The stabilizer of the positivity region
$W^+_{\bS^1}(h_0) = (0,\infty)$ in $G$ is 
the dilation subgroup $\delta(\R)$, which coincides with 
the stabilizer of $h_0$ under the adjoint action. 
So
  \begin{equation}
    \label{eq:moeb-euler}
 \cW_+(W_0) \to \cE(\fsl_2(\R)) \to \cI(\bS^1), \quad
 (h,\sigma) \mapsto h \mapsto W_{\bS^1}^+(h)
  \end{equation}
are equivariant bijections.
In particular, we obtain for an interval $I = W^+_{\bS^1}(h)$ 
the reflection $\tau_I=\tau_h$ and the one-parameter group 
$\delta_I(t) = \exp(th)$. The involution $\tau_I$ is
orientation reversing and maps $I$
to the complementary open interval that we denote~$I'$. 

\nin (b) (Covering of the M\"obius group on the line;
\cite[Ex.~2.10(d)]{MN21}) Consider  the universal covering $\tilde G$
of the M\"obius group and let $\tilde G_{\tau_{h_0}}
:= \tilde G\rtimes \{\1,\tau_{h_0}\}$, where
$\tau_{h_0}$ is the Euler involution, integrated to $\tilde G$. Let $q_G:\tilde G_{\tau_{h_0}} \rightarrow G_{\tau_{h_0}}$ be the covering map.
The action of $G$ on $\bS^1$ lifts 
to an action of the connected group $\tilde G$
on the universal covering $\tilde\bS^1 \cong \R$, 
where $\tilde\rho(\theta).x = \theta + x$
is the lift to $\tilde G$ of the one-parameter group
$\rho(\theta)$ of rotations in $G$. We specify the simply connected covering
\[ q_{\bS^1} \colon \R \to \R_\infty, \quad 
q_{\bS^1}(\theta) = q_{\bS^1}(\tilde\rho(\theta).0) = \rho(\theta).0.\] 
The involution
$\tau_{h_0}$ acts on the covering $\RR$ as the point reflection $\tau_{h_0}.x = - x$ in the base 
point~$0$. We also note that
$ \ker(q_G)  = \tilde\rho(2\pi \Z)= Z(\tilde G)\simeq \Z$ is 
the group of deck transformations of the covering~$q_{\bS^1}$, which acts by 
\begin{equation}
  \label{eq:deck}
  \tilde\rho(2\pi n).x = x + 2 \pi n \quad \mbox{ for } \quad
  n \in \Z, x\in\RR.
\end{equation}

We call  a non-empty interval $I \subeq \R$ {\it admissible} 
if its length is strictly smaller than $2\pi$ 
and write $\cI(\R)$ for the set of admissible intervals. 
An interval $I \subeq \R$ is admissible if and only if 
there exists an interval $\uline I \in \cI(\bS^1)$ 
such that $I$ is a connected component of $q_{\bS^1}^{-1}(\uline I)$. 
The group $Z(\tilde G)$ 
acts {simply} transitively on the set of these connected components. 
As $G$ acts transitively on $\cI(\bS^1)$, \eqref{eq:deck} shows that 
$\tilde G$ acts transitively on the set $\cI(\R)$.
We have an equivariant bijective map 
\[ \cW_+(\tilde G)\ni g.(h_0,\tau_{h_0})
\mapsto {g.(0,\pi)}\in\cI(\RR), \qquad g\in \tilde G.\] 
The connected components of the positivity region
  $W_\R^+(h_0)$ (cf.\ \eqref{eq:WM+}) of the Euler  element $h_0$
  are the intervals
  \begin{equation}
    \label{eq:I2n}
    \tilde I_{2n} := (2n\pi,(2n+1)\pi), \quad n \in \Z.
    \end{equation}
    Likewise the intervals $\tilde I_{2n+1}$, $n \in \Z$, are
    the connected components of the positivity domain of $-h_0$.
    Furthermore, $\tilde G^{h_0} = Z(\tilde G) \exp(\R h_0)$
    implies that $Z_2(\tilde G)=2Z(\tilde G)$ and
      $Z_3(\tilde G) = Z(\tilde G)$. 
For further details, see \cite[Ex.~2.10(d)]{MN21}.
\end{example}

\subsection{Space- and timelike orthogonal pairs}\label{sect:stort}

Let $(\g, C)$ be a simple Lie algebra and
$C \subeq \g$ be a pointed generating $\Inn(\g)$-invariant
closed convex cone.
We recall that such a cone exists if and only if
$\g$ is {\bf hermitian} (cf.\ \cite{MNO23}, \cite{Vi80}). 
Let $h \in \cE(\g)$ be an Euler element (so that $\g$ is of tube
type by \cite{MN21}) and we obtain the symmetric Lie algebra
$(\g, \tau_h^\g)$ 
(cf.~Appendix~\ref{app:list}).
We use the notation from Theorem~\ref{thm:1.4}. 
Here the restricted root system $\Sigma(\g,\fa)$ is of type C$_r$
and $\Sigma(\g^*, \fc)$ is of type A$_{r-1}$, so that 
the $G^h$-orbits in $\cE(\g) \cap \fq$ are represented by elements
\[ k^0, k^1, \ldots, k^r.\]
So we have $r+1$ pairs $(h,k^j)$, $j = 0,\ldots, r$, representing
the $\Inn(\g)$-conjugacy classes of pairs of orthogonal Euler elements
in~$\g$.

It is instructive to describe these elements in a subalgebra
  of the form 
\[ \fs := \fs_1 + \cdots + \fs_r \cong \fsl_2(\R)^{\oplus r},\]
where 
\begin{equation}
  \label{eq:h-eq}
  h  = (h_0,\cdots, h_0) \quad \mbox{ and } \quad 
k^j := (\underbrace{k_0, \cdots k_0}_{j\ \text{times}},
  \underbrace{-k_0, \cdots, -k_0}_{r-j\ \text{times}}).
\end{equation}
(\cite[Cor.~3.5]{MNO23}). 
It contains the $3$-dimensional Lie subalgebras
\[ \fs^j := \Spann \{ h, k^j, [h,k^j]\} \cong \fsl_2(\R).\]

We recall from the introduction that
the orthogonal pairs $(h,k)$ occur in $3$ types:
\begin{itemize}
\item {\it positive/negative timelike}: $[h,k] \in \pm C$, and 
\item {\it spacelike} $[h,k] \not\in C \cup - C$. 
\end{itemize}

\begin{prop} \mlabel{prop:timelike-pair}
  We have $\fs^j \cap C \not=\{0\}$ if and only if
  $j = 0,r$.  More specifically, the pair $(h,k^0)$ is negative
  timelike, and $(h,k^r)$ is positive timelike. In particular, 
  $\Inn(\g)$ acts transitively on the set of negative/positive timelike
  pairs. 
\end{prop}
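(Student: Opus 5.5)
Within the subalgebra $\fs = \fs_1 + \cdots + \fs_r \cong \fsl_2(\R)^{\oplus r}$ we have, by \eqref{eq:h-eq}, $[h_0,\pm k_0] = \pm z_0$, hence
\[ [h,k^j] = (\underbrace{z_0,\ldots,z_0}_{j},\underbrace{-z_0,\ldots,-z_0}_{r-j}). \]
Since $\fs^j \cong \fsl_2(\R)$ is simple, $\fs^j \cap C$ is an $\Inn(\fs^j)$-invariant closed convex cone. It is therefore either $\{0\}$ or one of the two pointed cones $\pm C_{\fs^j}$, where $C_{\fs^j}$ is the cone of Example~\ref{ex:sl2} transported to $\fs^j$. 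Every nonzero such cone contains one of $\pm [h,k^j]$, the generator of the compact Cartan subalgebra $\so_2(\R)$ of $\fs^j$. So $\fs^j \cap C \neq \{0\}$ holds if and only if $[h,k^j] \in C \cup -C$, and the first claim reduces to determining for which $j$ the element $(\pm z_0,\ldots,\pm z_0)$ lies in $\pm C$.

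For this I will use the structure of $C \cap \fs$ from \cite{MNO23}. The $\fs_j$ correspond to strongly orthogonal roots, and the $z_0$-components span a compact abelian subspace $\ft_0 = \R z_0^{(1)} \oplus \cdots \oplus \R z_0^{(r)}$ of $\fs$. I would take the orientation of each $\fs_i$ (the choice of sign of $k_0$ in \eqref{eq:h-eq}, which is the normalization of \cite[Cor.~3.5]{MNO23}) to be such that $z_0^{(i)} \in C_{\min}$ for the minimal cone. With this normalization, $C \cap \ft_0$ is the positive orthant $\{\sum c_i z_0^{(i)} : c_i \ge 0\}$: it contains the orthant since $C_{\min}\subeq C$, and it is contained in the orthant because $C \subeq C_{\max}$ and the projection of $C_{\max}$ to each $\fs_i$ lies in the cone of that $\fsl_2(\R)$. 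This is the step I expect to be the main obstacle. It requires checking that the $\fs_i$ are aligned with the Harish-Chandra strongly orthogonal roots, so that $C_{\min}\cap\ft_0$ and $C_{\max}\cap\ft_0$ coincide with the orthant; I would try to get this from the projection/convexity description of invariant cones in \cite{MNO23} (Paneitz/Vinberg, cf.\ \cite{Vi80}).

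Granting the orthant description, $[h,k^j]$ has $j$ entries $+1$ and $r-j$ entries $-1$ with respect to the basis $z_0^{(i)}$. It lies in $C$ exactly when $j = r$ and in $-C$ exactly when $j = 0$, and for $0 < j < r$ it lies in neither. This gives $\fs^j \cap C \neq \{0\}$ if and only if $j \in \{0, r\}$, with $(h,k^0)$ negative timelike and $(h,k^r)$ positive timelike.

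For transitivity, recall from the discussion preceding the proposition that every orthogonal pair is $\Inn(\g)$-conjugate to exactly one $(h,k^j)$. The timelike type is preserved by $\Inn(\g)$ because $C$ is $\Inn(\g)$-invariant and $\phi([h,k]) = [\phi h, \phi k]$. Hence every negative timelike pair is conjugate to $(h,k^0)$ and every positive timelike pair to $(h,k^r)$, which proves that $\Inn(\g)$ acts transitively on each of these two sets.
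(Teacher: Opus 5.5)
Your argument is correct and is essentially the paper's. Both proofs use the classification of pointed invariant cones in $\fs^j\cong\fsl_2(\R)$ to reduce the question to testing a single vector against the product structure $C\cap\fs=\sum_j C\cap\fs_j$, where each factor is the standard $\fsl_2(\R)$-cone, so that the relevant slice of $C$ is an orthant.

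The only difference is the slice you work in. You test $[h,k^j]$ in the compact slice $\mathfrak t_0$, which matches the definition of timelike directly. The paper instead tests $e^j=e_1+\cdots+e_j-e_{j+1}-\cdots-e_r$ in the slice $\fs\cap\g_1(h)$ and simply asserts the product structure. Your $C_{\min}\subeq C\subeq C_{\max}$ duality sketch does justify that structure. The alignment you worried about holds: maximal abelian subspaces of $\fp$ are $\Inn_\g(\fk)$-conjugate, and long restricted roots have one-dimensional root spaces, so the $\fs_i$ are the Harish-Chandra $\fsl_2(\R)$-subalgebras.
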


\begin{prf} We write $e_j \in \fs_j$ for the element corresponding to $e_0 \in \fsl_2(\R)$. First we observe that 
  \[ C_\fs := C \cap \fs = \sum_{j = 1}^r C_j \quad \mbox{ with } \quad
    C_j := C \cap \fs_j \quad \mbox{ and }\quad
    C_j \cap \g_1(h) = [0,\infty) e_j.\] 
  It follows that $C_\fs$ is a pointed generating invariant
  cone in $\fs$. This Lie algebra contains the $2^r$ invariant 
  cones $\sum_{j=1}^r \eps_j C_j$ with $\eps_j \in \{\pm 1\}$. 

  We have
  \[ [h, k^j]
    = \sum_{\ell=1}^j (e_\ell + \theta(e_\ell))
   -  \sum_{\ell=j+1}^r (e_\ell + \theta(e_\ell))
   = e^j + \theta(e^j) \quad \mbox{ for } \quad
   e^j = e_1 + \cdots + e_j - e_{j+1} - \cdots - e_r.\] 
 So $\fs^j \cap \g_1(h) = \R e^j$, and
 $C \cap \fs^j$ is non-zero if and only if
 $C \cap \R e^j \not=\{0\}$. Here we use that
 every closed generating invariant cone in $\fsl_2(\R)$ 
 intersects the restricted root spaces.
 Now
 \[ C \cap \fs_1(h) = C \cap (\R e_1 + \cdots + \R e_r)
   = [0,\infty) e_1 + \cdots + [0,\infty) e_r\]
 implies that $C \cap \R e^j$ is non-zero if and only if
 $j \in \{0,r\}$.
In particular, we have
 \[ [h,k^0] = -e - \theta(e) \in - C \quad \mbox{ and } \quad 
   [h,k^r] = -[h,k^0] = e +  \theta(e) \in  C.\]
 This completes the proof.
\end{prf}

\section{Geometric Bisognano--Wichmann and \\ Spin--Statistics}
\mlabel{sec:spinstat}

We keep the context of the preceding section, where $G$ is
a connected Lie group, $h \in \g$ an Euler element
and $\tau_h$ the corresponding involution on $G$.
We further fix a pointed generating closed convex
  $\Ad^\eps$-invariant cone $C\subeq\fg$, and the abstract wedge space
\[ \cW_+ = G.W_0 \subeq \g \times G_{\tau_h} \quad \mbox{ with } \quad 
  W_0 = (h,\tau_h) \]
from \eqref{eq:cw+b} in Section~\ref{subsec:4}, endowed with the
$C$-order.

\subsection{Nets of standard subspaces on abstract wedges}
\mlabel{subsec:nets-real}

In this section we recall the concept of
a (one-particle) net of standard subspaces   and its relevant properties.
Fundamental facts on 
standard subspaces are recalled in Appendix~\ref{subsec:stand-subs}.
  
\begin{defn}\label{def:ssnet} 
Let $(U,\cH)$ be a unitary representation of $G$,
$\Stand(\cH)$ the set of standard subspaces of $\cH$, and 
\begin{equation}
  \label{eq:neta}
\sH \colon \cW_+ \to \Stand(\cH) 
\end{equation}
be a map (see~\eqref{eq:cw+} for $\cW_+$),
also called a {\it net of standard subspaces on abstract wedges}. 
In the following we denote this data as $(\cW_+,U,\sH)$.
If we want to stress the dependence of the net $\sH$
on the geometric and the representation data, we refer to it as a
quadruple $(G,\cW_+, U,\sH)$.

We consider the following properties: 
\begin{itemize}
\item[\rm(HK1)] {\bf Isotony (Iso):} $\sH(W_1) \subeq \sH(W_2)$ for $W_1 \leq W_2$
  in the sense of \eqref{eq:cG-ord}. 
\item[\rm(HK2)] {\bf Covariance (Cov):} $\sH(gW) = U(g)\sH(W)$ for 
$g \in {G}$, $W \in \cW_+$. 
\item[\rm(HK3)] {\bf Spectral condition (SC):} 
$ C \subeq C_U := \{ x \in \g \colon -i\cdot \partial U(x) \geq 0\}.$ 
We then say that $U$ is {\it $C$-positive}.  
\item[\rm(HK4)]{\bf Central twisted locality (CTL):} 
For every $\alpha \in Z(G)^-$
with $W_0^{'\alpha} = (-h, \alpha \tau_h) \in\cW_+$, there exists a
unitary $Z_\alpha\in U(G)'$ satisfying 
\begin{equation}
  \label{eq:ZJ2}
 \quad 
J_{\sH(W_0)} Z_\alpha J_{\sH(W_0)}=Z_\alpha^{-1}, 
\end{equation}
and such that 
\begin{equation}\label{eq:tcl2}\sH(W_0^{'\alpha}) \subseteq Z_\alpha \sH(W_0)'.
\end{equation} 

\item[\rm(HK5)] {\bf Bisognano--Wichmann property (BW):} 
$U(\exp(th)) = \Delta_{\sH(W_0)}^{-it/2\pi}$ for $t\in \R$. 
\item[(HK6)] \textbf{Central twisted Haag Duality (CTHD)}:
 For every $\alpha \in Z(G)^-$
 such that $W_0^{'\alpha}\in\cW_+$, there exists 
$Z_\alpha \in U(G)'$ satisfying \eqref{eq:ZJ2} and 
\[ \sH(W_0^{'\alpha})=Z_\alpha \sH(W_0)'.\]
\item[(HK7)] \textbf{Twisted} $G_{\tau_h}$\textbf{-covariance  (TCov)}: 
  For every 
  $\alpha \in Z(G)^-$ such that $W_0^{'\alpha}\in\cW_+$, there exists an
  extension $U^\alpha$  of $U$ to an
  antiunitary representation of $G_{\tau_h}$, such that:
\begin{equation}
  \label{eq:twitcovara}
 \sH(g *_\alpha W) = U^\alpha(g) \sH(W) \quad \mbox{ for } \quad 
g \in G_{\tau_h}, W \in \cW_+,
 \end{equation}
where $*_\alpha$ is the $\alpha$-twisted action of 
$G_{\tau_h}$ on $\cG_E(G_{\tau_h})$, defined by
\begin{equation} \label{eq:twistact}
 g *_\alpha (x,\sigma) 
:=\begin{cases}
g.(x,\sigma) & \text{ for } g \in G \\ 
g.(x,\alpha^{-1}\sigma) & \text{ for } g \in G \tau_h
\end{cases}
\end{equation} 
(see \cite{MN21} for more details).  
\item[(HK8)] \textbf{Modular reflection (MRef):} 
(HK7) holds for the extension $U^\alpha$, specified by 
\[ U^\alpha(\tau_h)= Z_\alpha J_{\sH(W_0)},\]  
where $Z_\alpha$ is a unitary satisfying \eqref{eq:ZJ2}.
\item [\rm(SS)] \textbf{Spin--Statistics:} (HK4) holds and, in addition, 
  $Z_\alpha^2 = U(\alpha).$
\item[(Reg)] \textbf{Regularity:}
  There exists an $e$-neighborhood $N \subeq G$ such that
$\bigcap_{g \in N} U(g)\sH(W_0)$ 
  is cyclic. 
\end{itemize}

   \end{defn}

\begin{defn} \mlabel{def:regular-net} 
We call the net $\sH$ {\it regular} if it satisfies (Reg),
(Iso), (Cov) and (SC). 
\end{defn}

\begin{remark}
  (a) The properties (Cov), (SC) and (BW) imply isotony (Iso)
  by \cite[Prop.~4.10]{MN21}. 

\nin (b) If $C=\{0\}$, then the order on $\cG_E(G_{\tau_h})$ is trivial, so that
  (Iso)  and (SC) hold trivially.

 \nin (c) For the sake of easier verifiability, we formulate
    conditions (HK4-6) and (HK8) as conditions on $W_0$, but
    if the covariance condition (Cov) is satisfied,
    the homogeneity of $\cW_+ = G.W_0$ implies that 
    these conditions hold for every $W \in \cW_+$ if 
    they hold for $W_0$.

  \nin (d) If $h$ is not symmetric, i.e., $-h \not\in \cO_h$, then
    (CTL) = (HK4), (SS) and (HK6-8) hold trivially because
    $W^{'\alpha}$ is never contained in $\cW_+$.     
\end{remark}

Let $(U,\cH)$ be an antiunitary representation of $G_{\tau_h}$ and
$W  = (x,\sigma) \in \cG_E$ be an Euler wedge.
Then we have a natural map 
\begin{equation}
  \label{eq:bgl}
  \sH_U \colon \cG_E(G_{\tau_h}) \to \Stand(\cH), \qquad
  \sH_U(W):=\Fix(U(\sigma) e^{\pi i \partial U(x)}), 
\end{equation}
called the {\it Brunetti--Guido--Longo (BGL)  net} associated to $U$.
By the following theorem, it satisfies all the
  assumptions in Definition \ref{def:ssnet}. Therefore 
  a key feature of our setting is that it starts from a unitary
  representation $U \colon G\to \U(\cH)$. Such a representation may
  have no antiunitary extension to $G_{\tau_h}$, but if there is
  one, then it follows from \cite[Thm.~2.11]{NO17} that they
  are all conjugate under unitary operators
  $T \in U(G)'$. If $J = U(\tau_h)$, then $J_T = TJT^{-1}$ corresponds to the
  conjugate representation, so that all these extensions are parametrized
  by the operators~$J_T$. If, in addition, $JTJ = T^{-1}$, then
  $J_T = T^2 J$, and these modifications appear in the (CTL) condition.
  Theorem~\ref{thm:ee} provides an important sufficient condition for the
  existence of an extension.
  
  \begin{theorem} \mlabel{thm:BGL}
    {\rm(\cite[Thm~4.12, Prop.~4.16]{MN21})} 
    Let $(U,\cH)$ be an antiunitary
    $C$-positive representation of $G_{\tau_h}$, where $h\in \cE(\fg)$. 
    Then the BGL net $\sH_U$
 satisfies {\rm(HK1)-(HK8)} and \rm{(SS)} in {\rm Definition~\ref{def:ssnet}}.
\end{theorem}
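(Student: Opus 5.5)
The plan is to verify each property directly from the definition $\sH_U(W) = \Fix(U(\sigma)e^{\pi i\,\partial U(x)})$ and the standard correspondence between standard subspaces and pairs $(\Delta,J)$. For $W=(x,\sigma)\in\cG_E$, the operator $J := U(\sigma)$ is a conjugation because $\sigma^2=e$ and $\sigma\in G\tau_h$. The relation $\Ad(\sigma)x = x$ gives $J e^{it\partial U(x)} J = e^{it\partial U(x)}$, since $J$ is antilinear and $\Ad^\eps(\sigma)x=-x$ turns into commutation at the operator level. Hence $\Delta := e^{2\pi i\,\partial U(x)}$ satisfies $J\Delta J=\Delta^{-1}$, and $J\Delta^{1/2}$ is a Tomita operator with fixed space $\sH_U(W)$, with $\Delta_{\sH_U(W)}=\Delta$ and $J_{\sH_U(W)} = J$. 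Several properties are then immediate:
\begin{itemize}
\item (BW) at $W_0$ is this identification of $\Delta$, up to the sign convention $U(\exp th)=\Delta^{-it/2\pi}$.
\item (Cov) follows from $U(g)\Fix(A)=\Fix(U(g)AU(g)^{-1})$ together with the twisted action \eqref{eq:cG-act}.
\item (SC) is a hypothesis.
\end{itemize}

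For the locality-type conditions, let $\alpha\in Z(G)^-$ with $W_0^{'\alpha}\in\cW_+$ and $J=U(\tau_h)$. Then $\sH_U(W_0^{'\alpha}) = \Fix(U(\alpha)J\Delta^{-1/2})$. Take $Z_\alpha := U(\beta)$, where $\beta = \exp(\tfrac12\log\alpha)$ is a square root of $\alpha$ of the form $\beta = g\tau_h(g)^{-1}$ from $\alpha=\partial_h(g)$. I expect the cleaner route is the one recalled in the introduction: $\sV^{'\alpha}=Z_\alpha\sV'$ with $Z_\alpha\in U(G)'$, $Z_\alpha^2=U(\alpha)$ and $JZ_\alpha J = Z_\alpha^{-1}$. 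Since $U(\alpha)$ is a unitary in the center of $U(G_{\tau_h})$'s commutant with $JU(\alpha)J=U(\alpha)^{-1}$, Borel functional calculus with the principal branch of the square root gives such a $Z_\alpha$. It commutes with $\Delta$ and satisfies $JZ_\alpha J=Z_\alpha^{-1}$, hence $U(\alpha)J\Delta^{-1/2} = Z_\alpha (J\Delta^{-1/2}) Z_\alpha^{-1}$, and therefore
\[
\Fix\big(U(\alpha)J\Delta^{-1/2}\big)=Z_\alpha\,\Fix\big(J\Delta^{-1/2}\big)=Z_\alpha\sH_U(W_0)'.
\]
This proves (CTHD), hence (CTL), and (SS) holds by construction. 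For (TCov) and (MRef), set $U^\alpha(g)=U(g)$ for $g\in G$ and $U^\alpha(\tau_h)=Z_\alpha J$. One checks that this is an antiunitary representation, using $(Z_\alpha J)^2=1$ and the fact that conjugation by $Z_\alpha J$ induces $\tau_h$ on $U(G)$ because $Z_\alpha\in U(G)'$. Equation \eqref{eq:twitcovara} then follows by the same fixed-point computation, using the twisted action \eqref{eq:twistact}.

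The main obstacle is isotony (HK1). For $g\in S_{W_0}=\exp(C_+)G^{W_0}\exp(C_-)$ we must show $U(g)\sV\subeq\sV$. The factor $G^{W_0}$ preserves $\sV$ by covariance. For $x\in C_+\subeq \g_1(h)$, the group $t\mapsto U(\exp tx)$ has positive generator $-i\partial U(x)\ge0$ and satisfies $\Delta^{-is/2\pi}U(\exp tx)\Delta^{is/2\pi}=U(\exp e^{s}tx)$, together with the appropriate $J$-relation. By the Borchers/half-sided modular inclusion theorem (the converse direction, e.g.\ \cite[Thm.~3.22]{NO17} or Longo's lemma), this yields $U(\exp tx)\sV\subeq\sV$ for $t\ge0$. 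The factor $C_-$ is handled symmetrically, via $\sV'$ and $\tau_h$. Sign conventions between $\g_{\pm1}(h)$, $C_\pm$ and the direction of the inclusion will need care.
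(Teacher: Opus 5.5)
Your proposal is correct and follows the same route as the proof in \cite[Thm.~4.12, Prop.~4.16]{MN21}, which the paper cites without reproducing: you read off the modular data from the BGL definition, obtain isotony from the Borchers--Wiesbrock theorem applied to $\exp(C_\pm)$ and $G^{W_0}$, and get (CTL), (CTHD), (SS), (TCov), (MRef) from a Borel square root $Z_\alpha\in\{U(\alpha)\}''\subseteq U(G)'$ of $U(\alpha)$ together with $U^\alpha(\tau_h)=Z_\alpha J$. A few slips need cleaning up:
\begin{itemize}
\item Drop the abandoned choice $Z_\alpha=U(\beta)$. Square roots of $\alpha$ need not exist in $Z(G)$, e.g.\ $\alpha=\exp(2\pi z_0)$ in $\tilde\SL_2(\R)$, and $U(\beta)$ need not lie in $U(G)'$ for non-central $\beta$.
\item The correct operator identity is $J\,U(\exp tx)\,J=U(\exp tx)$, i.e.\ $Je^{it\,\partial U(x)}J=e^{-it\,\partial U(x)}$, not the version you wrote.
\item $U(\alpha)$ does not in general commute with $J$. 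It satisfies $JU(\alpha)J=U(\alpha)^*$, so $J$ commutes with its spectral projections, and $JZ_\alpha J=Z_\alpha^{-1}$ then holds for any Borel square root, not only the principal branch.
\item For $C_-$ you need no detour through $\sV'$: the Borchers--Wiesbrock theorem applies directly with the opposite sign, since $-i\,\partial U(y)\le 0$ and $U(\exp sh)U(\exp ty)U(\exp(-sh))=U(\exp(e^{-s}ty))$ for $y\in C_-$.
\end{itemize}
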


\begin{thm} \mlabel{thm:ee} {\rm(Euler Element Theorem)}
  {\rm(\cite[Thm.~3.1]{MN24})} 
  Let $G$   be a connected Lie group with 
  Lie algebra $\g$ and $h \in \g$. 
  Let $(U,\cH)$ be a unitary 
  representation of $G$ with discrete kernel. 
  Suppose that $\sV$ is a standard subspace
    and $N \subeq G$ an identity neighborhood such that 
  \begin{itemize}
  \item[\rm(a)] $U(\exp(t h)) = \Delta_\sV^{-it/2\pi}$ for $t \in \R$,
    i.e., $\Delta_\sV = e^{2\pi i \, \partial U(h)}$, and 
  \item[\rm(b)] $\sV_N := \bigcap_{g \in N} U(g)\sV$ is cyclic.
  \end{itemize}
  Then $h$ is {central or}
  an Euler element, and the conjugation $J_\sV$ satisfies
  \begin{equation}
    \label{eq:J-rel}
 J_\sV U(\exp x) J_\sV = U(\exp \tau_h^\g(x)) \quad \mbox{ for } \quad
\tau_h^\g = e^{\pi i \ad h}, x \in \g.
  \end{equation}
\end{thm}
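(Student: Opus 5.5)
The plan is to extract from (a) and (b) an analytic continuation of the conjugated one-parameter groups $U(\exp(th))U(\exp(sx))U(\exp(-th))$ into the strip $\{0\le \Im z\le \pi\}$. This continuation forces strong restrictions on the spectrum of $\ad h$. The relation \eqref{eq:J-rel} is then read off at the boundary $\Im z=\pi$ from the characterization of standard subspaces, $\sV=\{v\in\cD(\Delta_\sV^{1/2})\colon J_\sV\Delta_\sV^{1/2}v=v\}$.

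\emph{Step 1 (the strip).} Fix $v\in\sV_N$ and $g\in G$ with $g^{-1}N\subeq N$, e.g.\ $g$ in a small identity neighborhood $N_0$ with $N_0^{-1}N_0 \subeq N$ after shrinking. Then $U(g)v\in\sV$, so $U(g)v\in\cD(\Delta_\sV^{1/2})$ and $J_\sV\Delta_\sV^{1/2}U(g)v=U(g)v$. By (a),
\[ t\mapsto \Delta_\sV^{-it/2\pi}U(g)v=U(\exp(th)g\exp(-th))\,\Delta_\sV^{-it/2\pi}v \]
extends to a bounded continuous function on $\{0\le\Im z\le\pi\}$, holomorphic in the interior. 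Its value at $z=\pi i$ is $\Delta_\sV^{1/2}U(g)v=J_\sV U(g)v$.

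\emph{Step 2 (restricting the spectrum).} Take $x$ in a (complex) generalized eigenspace of $\ad h$ with eigenvalue $\lambda$, and $g=\exp(sx)$ with $s$ small. Then $\exp(th)g\exp(-th)=\exp(e^{t\ad h}sx)$. I will compare the continuation from Step 1 with the formal continuation $z\mapsto U(\exp(e^{z\ad h}sx))\Delta_\sV^{-iz/2\pi}v$, working on the dense space of analytic vectors of $U$ to make the comparison rigorous. The claim to establish is that a unitary one-parameter group $e^{w\,\partial U(y)}$ admits bounded extensions in a complex direction $w$ only if $-i\partial U(y)$ is semibounded. An arc $\{e^{\lambda z}s\colon 0\le\Im z\le\pi\}$ that sweeps through both imaginary half-planes would therefore force $-i\partial U(y)\ge0$ and $\le0$, i.e.\ $\partial U(y)=0$. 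Since $\ker U$ is discrete, this means $y=0$.

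This excludes non-real $\lambda$, since the rotating phase $e^{i\lambda\Im z}$ produces both signs. It excludes $|\lambda|>1$, since the arc $e^{i\lambda\theta}$, $\theta\in[0,\pi]$, then exceeds a half-turn. It also excludes nontrivial Jordan blocks, whose polynomial terms $z^k$ again produce arguments of both signs. Hence $\ad h$ is diagonalizable with spectrum in $[-1,1]\cap\R$. Requiring in addition that the endpoint value $e^{i\pi\lambda}$ be real, which is needed for Step 3 to produce a real Lie algebra element, gives $\lambda\in\{-1,0,1\}$. So $h$ is central or an Euler element.

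\emph{Step 3 (the conjugation).} For $x\in\g_\lambda(h)$, evaluating at $z=\pi i$ gives
\[ J_\sV U(\exp sx)v = U(\exp(e^{i\pi\lambda}sx))\,J_\sV v \quad\mbox{ for } v\in\sV_N. \]
Because $\sV_N$ is cyclic, $\sV_N+i\sV_N$ is dense, and the antilinearity of $J_\sV$ handles the factor $i$. Therefore $J_\sV U(\exp sx)J_\sV=U(\exp\tau_h^\g(sx))$ for small $s$, hence for all $s$. Since $G$ is connected, it is generated by such one-parameter groups, which gives \eqref{eq:J-rel}.

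The main obstacle is Step 2: making the semiboundedness-versus-bounded-continuation argument rigorous. The vector-valued continuation from Step 1 has to be identified with an operator-level continuation of $U(\exp(wx))$ along complex $w$, and the contribution of the factor $\Delta_\sV^{-iz/2\pi}v$ has to be separated from it. I would first try this for $x$ in a single eigenspace using analytic vectors and a Phragmén--Lindelöf type bound, and then treat Jordan blocks by the same method applied to the top vector of the block.
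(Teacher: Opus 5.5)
The paper does not prove this statement; it quotes it from [MN24, Thm.~3.1], so your argument has to stand on its own. Steps~1 and~3 are fine as far as they go. Granting the boundary identity, the passage from $\sV_N$ to the dense subspace $J_\sV(\sV_N+i\sV_N)$ via antilinearity is correct.

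The gap is Step~2, which carries the entire content of the theorem. The mechanism you propose there is not merely unproven but wrong. The strip continuation only concerns particular vectors. At best, once the factorization is justified, it says that $\Delta_\sV^{-iz/2\pi}v$ lies in the domain of $e^{se^{\lambda z}\partial U(x)}$. Your plan to justify that factorization via analytic vectors is itself problematic, since regularizing $v\in\sV_N$ with non-compactly supported kernels destroys the property $U(g)v\in\sV$. More importantly, a vector can lie in the domains of both $e^{bP}$ and $e^{-bP}$ for a $P$ that is not semibounded, so no operator-level semiboundedness follows.

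In fact your principle proves too much. Since $\exp(sx)\in N^{-1}$ for all small $|s|$ of \emph{both} signs, already for $\lambda=1$ the arcs $se^{i\theta}$, $0\le\theta\le\pi$, with $s>0$ and with $s<0$ sweep both half-planes. Your principle would therefore force $\partial U(x)=0$ for every $x\in\g_1(h)$. The most basic example refutes this. Take a positive energy representation of the $ax+b$ group, with $[h,x]=x$, $-i\partial U(x)\ge0$ and $\partial U(x)\neq0$, together with its BGL subspace $\sV$. Borchers' theorem gives $U(\exp s_0x)\sV\subseteq\sV_N$ for $N$ relatively compact and $s_0$ large, so (a) and (b) both hold. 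Hence ``the arc exceeds a half-turn'' cannot be what excludes $|\lambda|>1$.

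Two further points in Step~2 are unjustified. The reality of $e^{i\pi\lambda}$ is something you must prove, not a requirement you may impose. And for non-real $\lambda$ the eigenvector is not in $\g$, so $\exp(sx)$ is not even defined.

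What actually separates $|\lambda|>1$ from $|\lambda|\le1$ is a growth condition on the vectors themselves. One way to make this precise is as follows.
\begin{itemize}
\item Restrict to the two-dimensional subgroup generated by $h$ and a real eigenvector $x\in\g_\lambda(h)$; hypothesis (b) and discreteness of the kernel pass to it.
\item Use the Mackey model. On the spectral subspace where $P:=-i\partial U(x)>0$, one has $P=e^{\lambda q}$, and $\Delta_\sV^{-it/2\pi}$ acts by translation in $q$. Your ``formal continuation'' then becomes the honest pointwise identity $e^{ire^{\lambda(q+z)}}\tilde v(q+z)$.
\item Use the uniform bound $\|\Delta_\sV^{\theta/2\pi}U(\exp rx)v\|\le\|v\|$ for $0\le\theta\le\pi$ and both signs of $r$. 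It shows that $\tilde v$, as a function of $w=e^{\lambda\zeta}$, is holomorphic on a sector of opening $|\lambda|\pi$ with decay $e^{-c|\mathrm{Im}\,w|}$.
\item For $|\lambda|\le1$ this is harmless, since $e^{\pm icw}$ satisfies it. For $|\lambda|>1$ the sector contains the negative real axis in its interior, where the profile $-c|\sin\arg w|$ has a concave kink. A Phragm\'en--Lindel\"of (trigonometric convexity) argument then forces $\tilde v=0$, and cyclicity of $\sV_N$ gives $P=0$.
\item Non-integral $\lambda$ with $|\lambda|<1$ is excluded at $\theta=\pi$ by the isometry $\|e^{-s\sin(\pi\lambda)P}J_\sV v\|=\|J_\sV U(\exp sx)v\|=\|v\|$ for all small $|s|$. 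Convexity of the left side in $s$ forces $PJ_\sV v=0$ unless $\sin(\pi\lambda)=0$.
\item Non-real eigenvalues and nilpotent parts of $\ad h$ need separate arguments.
\end{itemize}
None of this is in your proposal, so as it stands Step~2, and with it the theorem, is unproved.
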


\begin{rem}\label{rmk:ext} (a) 
Theorem \ref{thm:ee} does not require an antiunitary
    extension of $U$ to $G_{\tau_h}$, only a unitary representation of~$G$.
    Actually, it is an important consequence of \eqref{eq:J-rel}
    that $\tau_h^\g\in \Aut(\g)$ 
    integrates to an involutive automorphism 
    on the group $\uline G := U(G) \cong G/\ker(U)$
(cf.~\cite[Remark 3.6]{MN24}).
Whenever the involution $\tau_h$ on $G$ exists,
$U$ extends by \eqref{eq:J-rel}
to an antiunitary representation of $G_{\tau_h}$.
This is in particular the case
on the simply connected  covering group $q_G \colon \tilde G \to G$, 
and the corresponding unitary representation $U \circ q_G$ of~$\tilde G$.

\nin (b) In view of (a), up to substituting $G$ by~$\uline G$,
the representation $U$ extends antiunitarily to
$\uline G_{\tau_h}$ by $U(\tau_h)=J_{\sV}$, so that 
$\sV=\Fix(U(\tau_h)e^{\pi i\cdot\partial U(h)})$ holds by Theorem \ref{thm:ee}.
By construction, the BGL net $\sH_U$ satisfies
\[  \sH_U(g.W_0) = U(g)\sV \quad \mbox{ for } \quad g \in G,
W_0=(h,\tau_h)\in\cG_E(G_{\tau_h}).\]
\end{rem}

  \begin{prop} \mlabel{prop:BGL-natural} {\rm(Naturality of the
      BGL net)} 
    For $j = 1,2$, let $G_j$ be connected Lie groups and $h_j \in \g_j$
    Euler elements for which the groups $G_{j, \tau_{h_j}}$ exist.
    Further, let $\phi \colon G_{1, \tau_{h_1}}  \to G_{2, \tau_{h_2}}$
      be a morphism of Lie groups with 
      $\L(\phi)h_1 = h_2$ and $\phi(e,\tau_{h_1}) = (e,\tau_{h_2})$.
      Then  we obtain a natural map 
    \[ \phi^{\cW} \colon
      \cW_+^{G_1} = G_1.(h_1, \tau_{h_1}) \to \cW_+^{G_2} = G_2.(h_2, \tau_{h_2}),
      \quad (x,\sigma) \mapsto (\L(\phi)x, \phi(\sigma)).\]
    If $(U_j, \cH_j)_{j = 1,2}$ are antiunitary 
    representations of $G_{j, \tau_{h_j}}$, 
    and $\Phi\colon \cH_1 \to \cH_2$ is unitary,  satisfying
    $U_2 \circ \phi = \Phi \circ U_1(\cdot) \circ \Phi^{-1},$
    then the corresponding BGL nets satisfy
    \begin{equation}
      \label{eq:h-nat}
      \sH^{\rm BGL}_{U_2} \circ \phi^{\cW}= \Phi \circ \sH^{\rm BGL}_{U_1}
      \colon \cW_+^{G_1} \to \Stand(\cH_2). 
    \end{equation}
  \end{prop}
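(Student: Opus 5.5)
The proof rests on one fact: the modular objects of a standard subspace are determined by the subspace itself. With that in hand, the statement reduces to a direct computation.

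\emph{Step 1: the wedge map.} Since $\phi$ is a group morphism with $\phi(e,\tau_{h_1}) = (e,\tau_{h_2})$, it maps $G_1$ into $G_2$ (the identity components) and satisfies $\phi\circ\tau_{h_1} = \tau_{h_2}\circ\phi$ on $G_1$. For $g \in G_1$ we have $g.(h_1,\tau_{h_1}) = (\Ad(g)h_1, g\tau_{h_1}g^{-1})$. Applying $\phi$ and using $\L(\phi)\circ\Ad(g) = \Ad(\phi(g))\circ\L(\phi)$ together with $\L(\phi)h_1 = h_2$ gives
\[ \phi^{\cW}(g.W_0^1) = \phi(g).W_0^2 .\]
This shows that $\phi^{\cW}$ is well defined with values in $\cW_+^{G_2}$, and that it is equivariant with respect to $\phi|_{G_1}$. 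One also checks that $\phi(\sigma)^2 = e$, which is automatic, and that $\Ad(\phi(\sigma))$ acts on $\L(\phi)x$ compatibly. Strictly, however, only the orbit description is needed.

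\emph{Step 2: BGL nets.} For $W = (x,\sigma) \in \cW_+^{G_1}$ we have $\sH_{U_1}(W) = \Fix(U_1(\sigma)e^{\pi i\,\partial U_1(x)})$. The intertwining relation $U_2\circ\phi = \Phi U_1(\cdot)\Phi^{-1}$ gives
\[ U_2(\phi(\sigma)) = \Phi U_1(\sigma)\Phi^{-1}. \]
Applying the same relation to the one-parameter groups $U_2(\exp t\L(\phi)x) = U_2(\phi(\exp tx)) = \Phi U_1(\exp tx)\Phi^{-1}$, and then Stone's theorem, yields
\[ \partial U_2(\L(\phi)x) = \Phi\,\partial U_1(x)\,\Phi^{-1}. \]
Functional calculus then gives $e^{\pi i\,\partial U_2(\L(\phi)x)} = \Phi e^{\pi i\,\partial U_1(x)}\Phi^{-1}$, including equality of domains, because $\Phi$ is unitary. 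The closed antilinear operator defining $\sH_{U_2}(\phi^{\cW}(W))$ is therefore $\Phi\, U_1(\sigma)e^{\pi i\,\partial U_1(x)}\,\Phi^{-1}$, and its fixed-point space is $\Phi\,\sH_{U_1}(W)$. This is \eqref{eq:h-nat}.

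\emph{Main obstacle.} The only delicate point is the domain issue with the unbounded operator $e^{\pi i\,\partial U(x)}$. It is handled entirely by unitary conjugation invariance of functional calculus. A secondary point is to make sure $\Phi$ is $\C$-linear, so that $\Phi i = i\Phi$ in the Stone generators. Beyond these, the argument is routine.
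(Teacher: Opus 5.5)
Your proof is correct. It rests on the same two identities the paper uses:
\[ \Phi U_1(\sigma)\Phi^{-1} = U_2(\phi(\sigma)) \quad\mbox{ and }\quad \Phi e^{\pi i\,\partial U_1(x)}\Phi^{-1} = e^{\pi i\,\partial U_2(\L(\phi)x)}. \]
The arguments are organized differently. The paper uses the $G_1$-covariance of both sides of \eqref{eq:h-nat}, with respect to $U_2\circ\phi$, to reduce to the base point $(h_1,\tau_{h_1})$. There it identifies the two standard subspaces by comparing their modular data $(\Delta,J)$. You instead conjugate the defining antilinear operator by $\Phi$ at every wedge $(x,\sigma)$ and take fixed points directly.

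In your version, the equivariance from Step 1 is needed only to check that $\phi^{\cW}$ lands in $\cW_+^{G_2}$. It also shows that the identity holds for every pair $(x,\sigma)$ on which the BGL formula makes sense, not just on the orbit. The paper's version is shorter once covariance of BGL nets is taken as known.

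One small correction to your framing: your opening sentence says the proof rests on the fact that the modular objects determine the subspace, but Step 2 never uses this. The elementary identity $\Fix(\Phi S\Phi^{-1}) = \Phi\,\Fix(S)$ is all you need. That uniqueness fact is what the paper's base-point comparison relies on.
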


  \begin{prf} The proof is an easy  consequence of the definitions.
    As both sides  of \eqref{eq:h-nat} are $G_1$-covariant
    with respect to the representation
    $U_2 \circ \phi$ of $G_1$ on $\cH_2$, it suffices to verify
    that they take the same values on $W_1 = (h_1, \tau_{h_1})$.
    For the map on the left, this is the standard subspace~$\sV_L$ 
    with modular objects 
    \[ \Delta_L := e^{2\pi i\cdot \partial U_2(h_2)}
      \quad \mbox{ and }  \quad J_L = U_2(\phi(\tau_{h_1}))
      = U_2(\tau_{h_2}).\]
    For the map on the right, we obtain the standard subspace
    $\sV_R = \Phi(\sV_1)$, where
    \[ \Delta_1 := e^{2\pi i \cdot\partial U_1(h_1)}
      \quad \mbox{ and }  \quad J_1 = U_1(\tau_{h_1}).\]
    Hence the assertion follows from
    \[ \Phi \Delta_1 \Phi^{-1} = e^{2\pi i \cdot\partial U_2(h_2)}
      \quad \mbox{ and }  \quad \Phi J_1 \Phi^{-1} = U_2(\tau_{h_2}).\qedhere\]
 \end{prf}

  \begin{remark} \mlabel{ex:quotientBGL} (a) If $(U,\cH)$ is an
    antiunitary representation
    of $G_{\tau_h}$ with discrete kernel
    and $\uline G := G/\Gamma$, for a discrete central subgroup
    $\Gamma \subeq \ker(U)$ {such that $\tau_h(\Gamma)=\Gamma$},
    then the quotient map
    $\phi \colon G \to \uline G$ satisfies $\L(\phi) = \id_\g$
    (if we identify the Lie algebras of both groups in the natural way).
    Now Proposition~\ref{prop:BGL-natural} applies to $\phi$ and
    leads to a natural map
    $\phi^\cW \colon \cW_+^G \to \uline \cW_+^{\uline G}$, for which 
    \[  \sH^{\rm BGL}_{\uline U} \circ \phi^{\cW}= \sH^{\rm BGL}_{U},\]
and the representation $\uline U(g\Gamma) := U(g)$ 
is well-defined and satisfies $U = \uline U \circ \phi$
by Theorem~\ref{thm:spst1}.

    \nin (b) The preceding construction works in particular
    for $\Gamma = \ker U$ if $\ker U$ is discrete.
\end{remark}

\begin{lem}\label{lem:fact}
Suppose that $G_1$ and $G_2$ are connected Lie groups
    with the same Lie algebra~$\g$, and $h \in \cE(\g)$ is such that
    $\tau_h$ integrates to $G_1$ and $G_2$. If
    $\phi \colon G_{1,\tau_{h}} \to G_{2,\tau_{h}}$ is a
    covering morphism with discrete kernel $D\subeq Z(G_1)$
    and $(U, \cH)$ an antiunitary
    representation of $G_{1,\tau_{h_1}}$, then
the BGL net $\sH_{U}  \colon  \cW_+^{G_1} \to \Stand(\cH)$ 
  factors through a map 
  \[ \oline  \sH_{U}  \colon  \cW_+^{G_2} \to \Stand(\cH)
    \quad \mbox{ with }  \quad
    \oline \sH_{U}  \circ \phi^\cW = \sH_{U}  \]
  if and only if 
  \begin{equation}
    \label{eq:d-incl}
    \partial_h(G_1^{h_1}) \cap D = Z_2(G_1) \cap D \subeq \ker(U).
  \end{equation}
\end{lem}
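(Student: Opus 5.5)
The plan is to reduce the factorization question to a statement about fibers of $\phi^\cW$ and stabilizers. Since $\phi$ is a covering morphism with $\L(\phi)=\id_\g$ and $\phi(\tau_h)=\tau_h$, the map $\phi^\cW\colon \cW_+^{G_1}\to\cW_+^{G_2}$, $g.W_0\mapsto \phi(g).W_0$, is surjective and $G_1$-equivariant. Both $\sH_U$ and any candidate $\oline\sH_U$ are determined by $G_1$-covariance from their value at the base point. Hence $\sH_U$ factors through $\phi^\cW$ if and only if $\sH_U$ is constant on the fibers of $\phi^\cW$. By equivariance, it suffices to check this on the fiber over $\phi^\cW(W_0)$.

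Next I describe this fiber. We have $\phi(g).W_0=W_0$ in $\cW_+^{G_2}$ if and only if $\Ad(g)h=h$ and $\phi(g\tau_h(g)^{-1})=e$. The latter means $\partial_h(g)\in D$. Since $g.W_0=(h,\partial_h(g)\tau_h)$ for $g\in G_1^h$, the fiber is
\[ \{(h,\alpha\tau_h)\colon \alpha\in Z_2(G_1)\cap D\}. \]
This also justifies the identity $\partial_h(G_1^{h})\cap D=Z_2(G_1)\cap D$ stated in \eqref{eq:d-incl}, which is just the definition $Z_2(G_1)=\partial_h(G_1^h)$.

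It then remains to compare BGL values. For $\alpha\in Z_2(G_1)\cap D$, pick $g\in G_1^h$ with $\partial_h(g)=\alpha$. Then
\[ \sH_U(h,\alpha\tau_h)=\Fix\big(U(\alpha)U(\tau_h)e^{\pi i\partial U(h)}\big)=U(g)\sH_U(W_0). \]
If $U(\alpha)=\1$, this equals $\sH_U(W_0)$, which gives sufficiency. For necessity, suppose $\Fix(U(\alpha)J\Delta^{1/2})=\Fix(J\Delta^{1/2})$, where $J=U(\tau_h)$ and $\Delta=e^{2\pi i\partial U(h)}$. Equal standard subspaces have equal Tomita operators, so $U(\alpha)J\Delta^{1/2}=J\Delta^{1/2}$ on the common domain. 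Since $U(\alpha)$ is unitary and commutes with $\Delta$, uniqueness of the polar decomposition gives $U(\alpha)J=J$, hence $U(\alpha)=\1$.

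The main obstacle is the necessity step. It is cleanest to argue that $U(\alpha)$ commutes with $\Delta$ because $\alpha$ is central, so $U(\alpha)J$ is antiunitary and $U(\alpha)J\Delta^{1/2}$ is already a polar decomposition. Uniqueness then forces $U(\alpha)J=J$. This uses only that the modular data of a standard subspace determine it, as recalled in Appendix~\ref{subsec:stand-subs}.
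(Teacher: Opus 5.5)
Your proof is correct and follows the same route as the paper. Both reduce factorization to the requirement that the $G_1$-stabilizer $\{g\in G_1^h\colon \partial_h(g)\in D\}$ of the base wedge in $\cW_+^{G_2}$ fixes $\sH_U(W_0)$; you phrase this equivalently as $\sH_U$ being constant on the fiber of $\phi^\cW$. Your comparison of modular conjugations, $J_{U(g)\sH_U(W_0)}=U(\partial_h(g))J$, makes explicit the step the paper only asserts, namely ``this means that $\partial_h(g)\in\ker(U)$.''
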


\begin{prf} We consider the action of $G_1$ on
  $\cG_E(G_{2,\tau_{h_2}})$ induced by $\phi$.
  Then the stabilizer of $(h_2, \tau_{h_2}) \in \cG_E(G_{2,\tau_{h_2}})$ in $G_1$ is
  \begin{equation}
    \label{eq:stabil}
 G_1^{(h_2, \tau_{h_2})} = \{ g \in G_1^{h_1} \colon \partial_h(g) \in D \}.
  \end{equation}
  The BGL net $\sH_{U}$ on $\cW_+^{G_1}$
  factors through a map  on $\cW_+^{G_2} \cong
  G_2/G_2^{(h_2, \tau_{h_2})}$ if and only if $G_1^{(h_2, \tau_{h_2})}$ fixes
  $\sH_{U} (h_1,\tau_{h_1})$. By \eqref{eq:stabil}, this means that,
  for $g \in G_1^{h_1}$ {with $\partial_h(g) \in D$,} we have 
  $\partial_h(g) \in \ker(U)$. This is \eqref{eq:d-incl}.
\end{prf}

  As the left hand side of \eqref{eq:d-incl} may be a proper subgroup of $D$,
  this condition does {\bf not} imply that $D \subeq \ker(U)$,
  i.e., that $U$ factors through an antiunitary representation
  of $G_{2,\tau_{h_2}}$. We only need that the intersection
   $Z_2(G_1) \cap D$ is contained in $\ker(U)$. If, however,
   $D \subeq Z_2(G_1)$ and $D \subeq \ker U$,
   then $U$ factors through~$G_2$.

   \subsection{Results on nets of standard subspaces}

In the presence of symmetric Euler elements,
  the following theorem derives the Bisognano--Wichmann property and the
  Central Twisted Haag Duality
  for an isotone covariant, central twisted local
  net of standard subspaces from the spectral condition.  
  We further assume that the $\Ad^\eps(G_{\tau_h})$-invariant cone $C$ is
  ``large'' in a suitable sense that will be specified below.

  \begin{theorem}\label{thm:1BW}
    {\rm(Geometric Bisognano--Wichmann Theorem)} 
Let $\sH \colon \cW_+\to \Stand(\cH)$ be a net of standard subspaces
satisfying {\rm(Iso), (Cov), (SC), (CTL)}. 
Assume that the cones $C_\pm = \pm C \cap \g_{\pm 1}$ have inner points, 
  that $\g_0 = [\g_1, \g_{-1}]$, and
  that $\ker U$ is discrete. 
  Suppose further that $W_0^{'\alpha}\in\cW_+$ for some
    $\alpha \in Z(G)^-$, i.e., that $h$ is symmetric. Then
  the Bisognano--Wichmann property {\rm (BW)} and
  twisted Haag duality {\rm(CTHD)}
  hold.
\end{theorem}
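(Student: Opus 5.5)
The plan is to derive (BW) from Borchers' theorem on half-sided modular inclusions, applied once to $\sV := \sH(W_0)$ and once to its twisted complement. This shows that $\Delta_\sV^{it}$ and $U(\exp(-2\pi t h))$ act in the same way on $U(G)$ by conjugation. The remaining discrepancy, a one-parameter group in $U(G)'$, is then removed using (CTL); the same argument yields (CTHD).

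\emph{Step 1: the positive root space.} For $x \in C_+ = C \cap \g_1(h)$ we have $\exp(tx) \in S_{W_0}$ for $t \ge 0$, hence $\exp(tx).W_0 \le W_0$. By (Iso) and (Cov), $U(\exp tx)\sV \subeq \sV$ for $t\geq 0$, and by (SC) the generator $-i\,\partial U(x)$ is positive. Borchers' theorem then gives
\[ \Delta_\sV^{it}U(\exp sx)\Delta_\sV^{-it} = U(\exp(e^{-2\pi t}s x)), \qquad J_\sV U(\exp sx)J_\sV = U(\exp(-sx)).\]
Since $C_+$ has interior points in $\g_1(h)$, this holds for all $x\in \g_1(h)$.

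\emph{Step 2: the negative root space.} Choose $g_0\in G$ with $g_0.W_0 = W_0^{'\alpha}$, so $\Ad(g_0)h=-h$. The wedge $W_0^{'\alpha}$ has Euler element $-h$, and its semigroup contains $\exp(C\cap \g_1(-h)) = \exp(-C_-)$. Put $\sK := U(g_0)\sV$. By (CTL), $\sK \subeq Z_\alpha\sV'$, and $\exp(-C_-)$ maps $\sK$ into itself. I would rather apply Borchers directly to $Z_\alpha \sV'$. Its modular objects are $Z_\alpha\Delta_\sV^{-1}Z_\alpha^{-1}$ and $Z_\alpha J_\sV Z_\alpha^{-1}$, and conjugation by $Z_\alpha\in U(G)'$ does not change how they act on $U(G)$. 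For this, $Z_\alpha\sV'$ must be invariant under $U(\exp(-C_-))$. This follows from $\sV' = J_\sV\sV$, the $J$-relation of Step 1 (which sends $\g_1$-translations to their inverses), and a second application of isotony to the complementary side. The outcome is
\[\Delta_\sV^{it}U(\exp y)\Delta_\sV^{-it} = U(\exp(e^{2\pi t}y)) \quad\text{for } y\in\g_{-1}(h).\]

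\emph{Step 3: reduction to a commutant.} By $\g_0 = [\g_1,\g_{-1}]$, the subspaces $\g_{\pm1}$ generate $\g$, and $G$ is connected. So $T(t) := \Delta_\sV^{it}U(\exp(2\pi t h))$ implements the identity automorphism of $U(G)$, i.e.\ $T(t) \in U(G)'$. Since $\exp(\R h)\subeq G^{W_0}$, covariance gives $U(\exp th)\sV = \sV$, so $U(\exp th)$ commutes with $\Delta_\sV$ and $J_\sV$. Hence $T$ is a continuous one-parameter unitary group that commutes with $\Delta_\sV$, $J_\sV$ and $U(G)$, and preserves $\sV$. Discreteness of $\ker U$ is used only in the form of the Euler Element Theorem~\ref{thm:ee}, which is consistent with these relations. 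The goal is now $T\equiv \1$.

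\emph{Step 4: $T$ is trivial (main obstacle).} Using $T\in U(G)'$ and $\Ad(g_0)h=-h$, I compute
\[ \Delta_\sK^{it} = U(g_0)\Delta_\sV^{it}U(g_0)^{-1} = T(t)\,U(\exp(2\pi t h)), \qquad \Delta_{Z_\alpha\sV'}^{it} = Z_\alpha T(t)^{-1}Z_\alpha^{-1}\,U(\exp(2\pi th)). \]
The inclusion $\sK\subeq Z_\alpha \sV'$ is an inclusion of standard subspaces. The key lemma is: if $\sK\subeq\sH$ and $\Delta_\sH^{it}\sK = \sK$ for all $t$, then $\sK = \sH$, and in particular $\Delta_\sK=\Delta_\sH$. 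The factor $U(\exp 2\pi th)$ preserves $\sK$, since $\exp(\R h)$ stabilizes $W_0^{'\alpha}$. So everything reduces to controlling the unitary $Z_\alpha T(t)^{-1}Z_\alpha^{-1}$ on $\sK$. This is the step I expect to be hardest.

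First attempt: use the relation $J_\sV Z_\alpha J_\sV = Z_\alpha^{-1}$ together with $[T,J_\sV]=0$, and show that $T$ lies in the center of the von Neumann algebra $U(G)'$. Then $T$ commutes with $Z_\alpha$, so $Z_\alpha T^{-1}Z_\alpha^{-1} = T^{-1}$, and $T^{-1}$ preserves $\sK$. The lemma then gives $\sK = Z_\alpha\sV'$. Comparing the two modular groups yields $T(t)=T(t)^{-1}$ for all $t$, hence $T\equiv\1$ by continuity.

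Second attempt, if centrality of $T$ fails: exploit that $T(t)$ commutes with $\Delta_\sV$ and preserves $\sV$. Then $\log\Delta_\sV$ and $-2\pi i\,\partial U(h)$ differ by a commuting selfadjoint operator $A$ affiliated with $U(G)'$, and positivity of the half-sided generators on both sides (Steps 1 and 2) should force $A=0$.

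Once $T\equiv\1$, we have $\Delta_\sV^{it}=U(\exp(-2\pi t h))$, which is (BW). The equality $\sK = Z_\alpha\sV'$ obtained in Step 4 is $\sH(W_0^{'\alpha}) = Z_\alpha\sH(W_0)'$, which is (CTHD).
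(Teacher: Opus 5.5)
Your Steps~1--3 follow the paper's proof. Borchers' theorem handles $y\in C_\pm$, and the one-parameter group $T(t)=\Delta_\sV^{it}U(\exp 2\pi th)$ (the paper's $z_{W_0}$) lies in $U(G)'$. Your formula $\Delta_\sK^{it}=T(t)U(\exp 2\pi th)$ is the paper's observation $z_{W_0^{'\alpha}}=z_{W_0}$. In Step~2, the justification you sketch does not give invariance of $Z_\alpha\sV'$ under $U(\exp(-C_-))$. None is needed, though: $\exp(C_-)\subseteq S_{W_0}$, so Borchers' theorem applies to $\sV$ directly with the opposite sign.

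The genuine gap is Step~4, and neither of your attempts closes it. Nothing in (Iso), (Cov), (SC), (CTL) makes $T$ central in $U(G)'$ or makes it commute with $Z_\alpha$. Positivity is exhausted once $T\in U(G)'$ is known, because it only controls how $\Delta_\sV^{i\R}$ acts on $U(G)$. The paper's proof passes over the same point. Both when applying Lemma~\ref{lem:inc} and in the final step $\Delta_{\sH(W^{'\alpha})}=\Delta_{\sH(W)}^{-1}$, it treats the modular operator of $Z_\alpha\sH(W_0)'$ as $\Delta_{\sH(W_0)}^{-1}$. In fact it equals $Z_\alpha\Delta_{\sH(W_0)}^{-1}Z_\alpha^{-1}$, and the two agree only if $Z_\alpha$ commutes with $T$.

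The step cannot be closed, because the statement fails without an extra assumption. Here is a counterexample.
\begin{itemize}
\item Let $\tilde U_0$ be a $C$-positive antiunitary representation of $G_{\tau_h}$ on $\cH_0$ with discrete kernel, e.g.\ $G=\tilde\SL_2(\R)$ with a lowest weight representation. Its BGL net $\sH_0$ satisfies $\sH_0(W_0^{'\alpha})=Z^0_\alpha\sH_0(W_0)'$ for suitable $Z^0_\alpha\in U_0(G)'$ with $J_{\sH_0(W_0)}Z^0_\alpha J_{\sH_0(W_0)}=(Z^0_\alpha)^{-1}$, by Theorem~\ref{thm:BGL}.
\item On $\C^2$, let $\mathsf L$ be the standard subspace with $J_{\mathsf L}(a,b)=(\bar b,\bar a)$ and $\Delta_{\mathsf L}=\mathrm{diag}(\lambda,\lambda^{-1})$, $\lambda\neq 1$. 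Let $Y(a,b)=(b,a)$. Then $J_{\mathsf L}YJ_{\mathsf L}=Y=Y^{-1}$ and $Y\mathsf L'=\mathsf L$.
\item On $\cH_0\otimes\C^2$ take $U=U_0\otimes\1$ and put $\sH(W):=\sH_0(W)\bar\otimes\mathsf L$, the closed real span of elementary tensors. Its modular objects are $\Delta_{\sH_0(W)}\otimes\Delta_{\mathsf L}$ and $J_{\sH_0(W)}\otimes J_{\mathsf L}$.
\item (Iso), (Cov) and (SC) are inherited, and $\ker U=\ker U_0$ is discrete.
\item (CTL) holds even with equality for $Z_\alpha:=Z^0_\alpha\otimes Y\in U(G)'$. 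Indeed $J_{\sH(W_0)}Z_\alpha J_{\sH(W_0)}=Z_\alpha^{-1}$ and $\sH(W_0^{'\alpha})=Z^0_\alpha\sH_0(W_0)'\bar\otimes Y\mathsf L'=Z_\alpha\sH(W_0)'$.
\item However $T(t)=\1\otimes\Delta_{\mathsf L}^{it}\neq\1$, so (BW) fails.
\end{itemize}
In this example $Z_\alpha T(t)Z_\alpha^{-1}=T(-t)$. So the two modular groups you compare in Step~4 coincide identically and carry no information about $T$.

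Your first attempt does work under the additional hypothesis that $Z_\alpha$ commutes with $\Delta_{\sH(W_0)}$; this is automatic, for example, if $U(G)'$ is abelian. Then $T(-t)$ preserves $\sK$, Lemma~\ref{lem:inc} gives $\sK=Z_\alpha\sV'$, and $T(t)=T(-t)$ forces $T\equiv\1$.
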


\begin{proof} For $W = (x,\sigma) \in \cW_+$, we  define  the map 
  \[  z_W \colon \R \to \U(\cH), \quad
    z_W(t) := \Delta_{\sH(W)}^{it} U(\exp(2\pi t x))  \]
  If $g\in G^W$, then by covariance $U(g)$ commutes with $z_W$.
  By covariance (Cov) and Lemma~\ref{lem:sym}, 
  $\Delta_{\sH(W)}^{it}$
  commutes with  $U(\exp(2\pi t x))$, so that
  $z_W$ defines a one-parameter group.

  Let $y \in C_\pm$, so that $\mp i \partial U(y) \geq 0$ by (SC),  
 and thus $U(\exp ty) \sH(W_0) \subeq \sH(W_0)$ for $t \geq 0$
 by (Iso/Cov).  Therefore Borchers--Wiesbrock Theorem~\ref{Borch} implies the relation
\[   \Delta_{\sH(W_0)}^{-it/2\pi}  U(\exp y) \Delta_{\sH(W_0)}^{it/2\pi}
  = U(\exp e^{\pm t} y).\]
Thus 
\[ U(\exp th) U(\exp y) U(\exp th)^{-1} =  U(\exp e^{\pm t} y) \]
shows that  $z_{W_0}$ commutes with $\exp(C_\pm)$,
so that $\g_{\pm 1} = C_\pm - C_\pm$ entails that it commutes with
$G_{\pm 1}$. By assumption, $\g$ is generated  by
  $\g_{\pm 1}$.
Therefore the arcwise connected 
(hence integral by Yamabe's Theorem \cite[Thm.~9.6.1]{HN12}) subgroup
of $G$ generated by the two subgroups~$G_1$ and $G_{-1}$
coincides with~$G$, so that $z_{W_0}(\R) \subeq U(G)'$.
By covariance, this entails that
 $z_{g.W_0} = U(g) z_{W_0} U(g)^{-1} = z_{W_0}$ for $g \in G$,
 i.e., $z_W$ does not depend on~$W$.
For $\alpha \in Z(G)^-$ with $W_0^{'\alpha} \in \cW_+$, this further leads to
\begin{equation}
  \label{eq:zw1}
 z_{W_0}(t)\sH(W_0^{'\alpha})=z_{W_0^{'\alpha}}(t)
 \sH(W_0^{'\alpha})=\sH(W_0^{'\alpha}).
\end{equation}

By the covariance condition (Cov), the two subspaces
$\sH(W_0^{'\alpha})$ and $Z_\alpha\sH(W_0)'$ are invariant
under $U(\exp \R h)$, and by \eqref{eq:zw1} also under
the modular group $\Delta_{\sH(W_0)}^{i\R}$. Now
their equality, which is (CTHD), 
follows from (CTL) and Lemma~\ref{lem:inc}.

Further $z_W = z_{W^{'\alpha}}$ leads to 
\[ z_{W^{'\alpha}}(t)=\Delta_{\sH(W^{'\alpha})}^{it}U(\exp(-2\pi tx))
  =\Delta_{\sH(W)}^{-it}U(\exp(-2\pi tx))=z_W(-t)=z_{W^{'\alpha}}(-t)\]
and since $z_W = z_{W^{'\alpha}}$ is a one parameter group, it is constant~$\1$.
This means that (BW) is satisfied. 
\end{proof}

The following theorem proves the regularity property for a net of standard subspaces that is not necessary the BGL net (cf.~\eqref{eq:bgl}).
Based on the Euler Element Theorem~\ref{thm:ee}, the
regularity property for BGL nets is treated in \cite[Sect.~4]{MN24};
see also \cite{BN25} and \cite[\S 5.3]{Ne26}. 

\begin{theorem}\label{thm:reg} {\rm(Regularity Theorem)} 
  Let $\sH \colon \cW_+\to \Stand(\cH)$ be a net of standard subspaces.
 Suppose that $C_\pm = \pm C \cap \g_{\pm 1}$ generate  $\fg_{\pm1}$ and that $\ker U$ is discrete. 
  Then the following implications hold:
\begin{enumerate}
\item[{\rm(a)}] {\rm(Iso)} and  {\rm(Cov)} imply {\rm(Reg)}. 
\item[{\rm (b)}] {\rm(SC)}, {\rm(BW)}
and  $\g_0 =\R h+ [\g_1, \g_{-1}]$ imply {\rm(Reg)}.   
\end{enumerate}  
\end{theorem}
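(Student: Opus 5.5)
For (a), I will use isotony and covariance to produce a nonempty semigroup region of group elements $g$ with $U(g)\sH(W_0) \subeq \sH(W_0)$. Then I will build an identity neighborhood $N$ whose translates of $\sH(W_0)$ all contain a single translate of $\sH(W_0)$.

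Concretely, pick $y_\pm$ in the relative interiors of $C_\pm$; this is possible because $C_\pm$ generate $\g_{\pm1}$. The semigroup $S_{W_0} = \exp(C_+)G^{W_0}\exp(C_-)$ has nonempty interior in $G$. The reason is that $\g = \g_1\oplus\g_0\oplus\g_{-1}$ and the multiplication map $G_1\times G_0\times G_{-1}\to G$ is a local diffeomorphism at $e$, while $C_\pm$ have interior in $\g_{\pm1}$ and $G^{W_0}$ is open in $G^h$. Hence there is $s\in S_{W_0}^\circ$, namely $s=\exp(y_+)\exp(y_-)$, and an identity neighborhood $N$ with $N^{-1}s\subeq S_{W_0}$, hence $g^{-1}s\in S_{W_0}$ for all $g\in N$. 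By the definition \eqref{eq:cG-ord} of the order, $s.W_0\le g.W_0$ for every $g\in N$. Isotony and covariance then give
\[ U(s)\sH(W_0)=\sH(s.W_0)\subeq \sH(g.W_0)=U(g)\sH(W_0) \qquad\text{for all } g\in N.\]
Therefore $\bigcap_{g\in N}U(g)\sH(W_0)\supeq U(s)\sH(W_0)$, which is cyclic because $U(s)$ is unitary and $\sH(W_0)$ is standard. This proves (Reg). The main point to verify carefully is that $s$ is an interior point of $S_{W_0}$. I will check this via the product map, using that $y_\pm\in (C_\pm)^\circ$ relative to $\g_{\pm1}$ and that small perturbations of $\exp(y_\pm)$ inside $G_{\pm1}$ stay in $\exp(C_\pm)$.

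For (b), isotony is not assumed, so I will first recover it from (SC) and (BW). For $y\in C_\pm$, (SC) gives $\mp i\,\partial U(y)\ge0$. (BW) gives $\Delta_{\sH(W_0)}^{-it/2\pi}U(\exp y)\Delta_{\sH(W_0)}^{it/2\pi}=U(\exp e^{\pm t}y)$. The converse direction of the Borchers--Wiesbrock Theorem~\ref{Borch} then yields $U(\exp ty)\sH(W_0)\subeq\sH(W_0)$ for $t\ge0$, with the sign convention matching $C_\pm$. Next I need $U(g)\sH(W_0)=\sH(W_0)$ for $g$ in an identity neighborhood of $G^{W_0}$. Here I use $\g_0=\R h+[\g_1,\g_{-1}]$. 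The group $\exp(\R h)$ preserves $\sH(W_0)$ by (BW), since the modular group preserves the subspace. The subgroup generated by $G_1$ and $G_{-1}$ commutes with $z_{W_0}$ as in the proof of Theorem~\ref{thm:1BW}, and by (BW) $z_{W_0}\equiv\1$.

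The cleanest route, however, is to avoid the full net. I will apply the semigroup argument of (a) only to the subspace $\sV=\sH(W_0)$, for which I only need $U(g)\sV\supeq U(s)\sV$ for $g$ near $e$. Writing $g^{-1}s = a\,g_0\,b$ with $a\in\exp(C_+)$, $g_0\in G_0$ near $e$ and $b\in\exp(C_-)$, it suffices to know that $U(g_0)$ preserves $\sV$ for $g_0$ in a neighborhood of $e$ in $G_0$. This is exactly where $\g_0=\R h+[\g_1,\g_{-1}]$ enters. Elements of $G_0$ near $e$ are products of $\exp(th)$ and of elements of the group generated by $\exp([\g_1,\g_{-1}])$. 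I will show that the latter preserves $\sV$: it is contained in the group generated by $G_{\pm1}$, which is generated by the semigroups $\exp(C_\pm)$ and their inverses, all of which map $\sV$ into itself. The Lie-algebra argument is that $[y_+,y_-]$ can be reached via commutators of such semigroup elements while preserving $\Delta^{it}$-invariance. I will invoke the Euler Element Theorem~\ref{thm:ee}, or equivalently \cite[Sect.~4]{MN24}, to conclude that $G_0$ near $e$ fixes $\sV$.

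I expect this last step, invariance of $\sV$ under $G_0$ near $e$ without assuming the net is covariant and isotone, to be the main obstacle. My first attempt will be to deduce it from the commutation of $U(G_0)$ with $\Delta_\sV^{it}$ together with $J_\sV$-equivariance \eqref{eq:J-rel}. Since $U(g_0)$ commutes with both modular objects, it preserves $\sV$. The required $J$-relation for $G_0$ follows from $\tau_h^\g|_{\g_0}=\id$ and from the $J$-covariance of the positive semigroups obtained by Borchers' theorem.
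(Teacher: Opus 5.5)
\textbf{Part (a).} This is the paper's argument. The one detail to record is that the product map $(a,g_0,b)\mapsto ag_0b$ on $G_1\times G^{W_0}\times G_{-1}$ must be a local diffeomorphism at $(\exp y_+,e,\exp y_-)$, not merely at the identity. It is, because its differential there is a translate of $(X,Y,Z)\mapsto X+Y+Z$ on $\g_1\oplus\g_0\oplus\g_{-1}$.

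\textbf{Part (b): two justifications that fail.} Two of your three justifications of the key step are wrong and should be discarded.
\begin{itemize}
\item The elements of $\exp(C_\pm)$ and their inverses do \emph{not} all map $\sV$ into itself. From $U(\exp y)\sV\subeq\sV$ for $y\in C_+$ you only get $U(\exp(-y))\sV\supeq\sV$, in general strictly. So neither $G_{\pm1}$ nor commutators of semigroup elements preserve $\sV$.
\item The Euler Element Theorem~\ref{thm:ee} cannot be invoked. Its hypothesis~(b) is exactly the cyclicity of $\bigcap_{g\in N}U(g)\sV$, i.e.\ (Reg), which is what you are proving.
\end{itemize}
The remark about $z_{W_0}$ only restates (BW) and contributes nothing.

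\textbf{Part (b): the argument that works.} Your last paragraph does give a valid proof, by a route different from the paper's, once the $J$-relation is actually derived. The steps are as follows.
\begin{itemize}
\item Theorem~\ref{Borch}(a), applied once you have $U(\exp ty)\sV\subeq\sV$ for $t\ge0$, gives $J_\sV U(\exp ty)J_\sV=U(\exp(-ty))$ for all $t\in\R$ and $y\in C_\pm$.
\item Since $\g_{\pm1}$ is abelian and spanned by $C_\pm$, this relation extends to all of $\exp(\g_{\pm1})$.
\item By (BW), $J_\sV$ commutes with $U(\exp\R h)=\Delta_\sV^{i\R}$.
\item Hence the homomorphisms $g\mapsto J_\sV U(g)J_\sV$ and $U\circ\tau_h$ agree on $\exp(\g_1)\cup\exp(\g_{-1})\cup\exp(\R h)$. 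Here $\tau_h$ exists on $G$ by the standing assumption. This set generates $G$ because $\g_0=\R h+[\g_1,\g_{-1}]$, so \eqref{eq:J-rel} holds on all of $G$.
\item The involution $\tau_h$ fixes $G^h_e$ pointwise, and $G^h$ centralizes $\exp(\R h)$. So each $U(g_0)$ with $g_0\in G^h_e$ commutes with $J_\sV$ and $\Delta_\sV$, and therefore $U(g_0)\sV=\sV$ by Lemma~\ref{lem:sym} and Proposition~\ref{prop:11}.
\item Thus $\exp(C_+)G^h_e\exp(C_-)\subeq\{g\colon U(g)\sV\subeq\sV\}$, and the interior-point argument of (a) finishes the proof.
\end{itemize}

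\textbf{Comparison with the paper.} The paper uses only the inclusion half of Borchers' theorem. It notes that the Lie wedge of the closed semigroup $\{g\colon U(g)\sV\subeq\sV\}$ contains $C_\pm$ and $\pm h$, which generate $\g$ as a Lie algebra. It then obtains interior points from the Lie-semigroup result \cite[Thm.~3.8]{HN93}, with no modular conjugation and no group-level involution involved. Your route replaces that theorem by the $J$-half of Borchers' theorem plus the integrability of $\tau_h$, and stays elementary. As a byproduct it yields the antiunitary extension $\tilde U(\tau_h)=J_{\sH(W_0)}$ directly from (SC) and (BW), without going through (Reg) and Theorem~\ref{thm:ee}.
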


\begin{proof} (a) Suppose that (Iso) and (Cov) hold.
  As the cones $C_\pm$ have non-empty interior in $\g_{\pm 1}(h)$
  and $\L(G^{W_0}) = \g_0(h)$,   the semigroup 
  \[ S_{W_0} := \exp(C_+) G^{W_0} \exp(C_-) \]
  has an interior point $s_0$. Let $N \subeq G$ be an identity neighborhood
  with $N^{-1} s_0 \subeq S_{W_0}$. From (Iso/Cov)  we then derive that,
  for $g \in N$ we have $U(g^{-1} s_0) \sH(W_0) \subeq \sH(W_0)$, so that
  $U(s_0) \sH(W_0) \subeq \bigcap_{g \in N} U(g)\sH(W_0)= \sH(W_0)_N$.
  
\nin (b): If (SC) and (BW) are satisfied, then
Borchers' Theorem~\ref{Borch} implies that
\[ \exp(C_+)\exp(\RR h)\exp(C_-)\subseteq S_{\sH(W_0)}
:= \{ g \in G \colon U(g) \sH(W_0) \subeq \sH(W_0) \}.\] 
Hence the Lie wedge $\L(S_{\sH(W_0)})$, contains $C_\pm$ and $h$.
As $C_\pm$ span $\fg_{\pm1}$ and  $\g_0 = \R h+  [\g_1, \g_{-1}]$,
the Lie algebra generated by $\L(S_{\sH(W_0)})$ coincides with $\g$.
So $S_{\sH(W_0)}$ has interior points by \cite[Thm.~3.8]{HN93}.
Now (Reg) follows as under (a). 
\end{proof}

  \begin{remark} \label{rmk:hermgen}
(a) The conditions that the cones  $C_\pm$ have non empty interior in $\fg_{\pm1}$ or, equivalently, that they generate $\fg_{\pm1}$,
    is satisfied when $\fg$ is a hermitian simple Lie algebra.
    Indeed, for any such Lie algebra, 
      there exists  a nontrivial pointed closed convex $\Inn(\fg)$-invariant
      cone $C \subset\fg$ (\cite[Prop.~3.11(a)]{MN21}).
Then $C \cap - C$ and $C-C$ is are $\ad \fg$-invariant subspaces,
    and since $\g$ is simple, $C$ is pointed and generating.
Then the projections $p_{\pm 1} \colon \g \to \g_{\pm 1}$ satisfy 
 $C_\pm=p_{\pm1}(C)$ by \cite[Lem.~3.2]{NOO21}, and thus both cones
 $C_\pm$ have non-empty interior in  $\fg_{\pm 1}$.
Hermitian simple real Lie algebras containing
   Euler elements are listed in Table 2 in Appendix~\ref{app:list}. 
 
    {\nin (b) The condition that the cones $C_\pm$ generate $\fg_{\pm 1}$ does not apply to the Poincar\'e group. Indeed, in this case
 { $C = \oline{V_+}$ and
$\fg_1 \cap C$ is a half-line not generating $\g_1$.}
In particular, the above argument does not apply to Poincar\'e covariant nets of standard subspaces. An algebraic proof of the Bisognano--Wichmann property for Poincar\'e covariant nets of standard subspaces was given in \cite{Mo18}.
Lorentz covariant nets of standard subspaces on de Sitter space also do not fit into this framework, since $C = \{0\}$ in this case, so that the generating property fails. 
On the other hand, the condition is satisfied for the Lie algebra
of the conformal group $\PSO_{2,d}(\R)_e$ of Minkowski space, 
as well as for the M\"obius group $\Mob$ of the chiral circle
{(the case $d = 1$)}. Their Lie algebras are simple hermitian.
In this sense, Theorem \ref{thm:1BW} generalizes
the conformal Bisognano--Wichmann
theorems in \cite{BGL93,Lo08,DLR07}.
}
  \end{remark}

  In our context, the following Spin--Statistics Theorem 
    is a consequence of regularity and covariance, which entail
    that the net $\sH$ under consideration is the BGL net of an
    antiunitary extension of $U$.

  \begin{thm}{\rm(Spin--Statistics Theorem for nets of standard subspaces)}
    \mlabel{thm:spst1}
    Let
    \[ \sH \colon \cW_+ \to \Stand(\cH)\] 
  be a regular net of standard subspaces for
  the  unitary representation $(U,\cH)$ of $G$, such
  that the {\rm(BW)} condition  is satisfied.
  Assume that $U$ has discrete kernel and that $h$ is symmetric.
  Then $\sH$ also satisfies~{\rm(CTL)}, {\rm(HK6)-(HK8)} and~{\rm(SS)}, 
  and $\sH$ coincides with the BGL net $\sH_{\tilde U}$
  of the antiunitary extension of $U$ specified by
  $\tilde U(\tau_h) := J_{\sH(W_0)}$. 
\end{thm}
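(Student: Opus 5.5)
The plan is to reduce to the BGL situation and then invoke Theorem~\ref{thm:BGL}. Put $\sV := \sH(W_0)$. By (BW), $U(\exp th) = \Delta_\sV^{-it/2\pi}$. Regularity (Reg) supplies an identity neighborhood $N$ for which $\sV_N$ is cyclic, and $\ker U$ is discrete. These are exactly the hypotheses of the Euler Element Theorem~\ref{thm:ee}, so $J := J_\sV$ satisfies $J U(\exp x) J = U(\exp \tau_h^\g(x))$ for $x \in \g$.

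The first obstacle is that $\tau_h$ is only assumed to exist on $G$, and \eqref{eq:J-rel} must match it. Both $g \mapsto J U(g) J$ and $g \mapsto U(\tau_h(g))$ are continuous homomorphisms $G \to \U(\cH)$. They agree on $\exp\g$, which generates the connected group $G$, so they coincide. Hence $\tilde U(\tau_h) := J$ extends $U$ to an antiunitary representation $\tilde U$ of $G_{\tau_h}$, and $\tilde U$ is $C$-positive by (SC).

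Next I would show $\sH = \sH_{\tilde U}$ on $\cW_+$. At the base point,
\[ \sH_{\tilde U}(W_0) = \Fix(J e^{\pi i \partial U(h)}) = \Fix(J_\sV \Delta_\sV^{1/2}) = \sV = \sH(W_0), \]
where the middle equality is (BW) rewritten as $\Delta_\sV = e^{2\pi i\partial U(h)}$. Both nets are $G$-covariant: $\sH$ by (Cov), and $\sH_{\tilde U}$ by its construction in \eqref{eq:bgl}. Since $\cW_+ = G.W_0$ is a single orbit, the two nets agree everywhere. I also need well-definedness, namely that $U(g)\sV = \sV$ for every $g \in G^{W_0}$. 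This holds automatically because $\sH$ is a map defined on $\cW_+$ and is covariant. With $\sH = \sH_{\tilde U}$ established, Theorem~\ref{thm:BGL} applied to $\tilde U$ gives (HK1)--(HK8) and (SS), in particular (CTL), (HK6)--(HK8) and (SS).

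The main point to check is that (SS) and (CTL) come with the correct $Z_\alpha$ in the form stated in Definition~\ref{def:ssnet}. In \cite{MN21}, $Z_\alpha$ is built for the BGL net as a square root of $U(\alpha)$ inside $U(G)'$ satisfying $JZ_\alpha J = Z_\alpha^{-1}$. I would cite that construction directly. If an explicit check is wanted, one can take the functional calculus of the unitary $U(\alpha)$. Here $\tau_h(\alpha) = \alpha^{-1}$ gives $JU(\alpha)J = U(\alpha)^{-1}$. Choosing $Z_\alpha := U(\alpha)^{1/2}$ via the principal branch, made compatible with antiunitary conjugation, yields $JZ_\alpha J = Z_\alpha^{-1}$ on the spectral subspaces away from $-1$, with a symmetric choice at $-1$.

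The BGL identity $\sH_{\tilde U}(W_0^{'\alpha}) = \Fix(U(\alpha) J \Delta^{-1/2}) = Z_\alpha \sV'$, recalled in the introduction, then gives (CTHD), hence (CTL), and also $Z_\alpha^2 = U(\alpha)$, which is (SS). Finally, $h$ symmetric guarantees that these conditions are non-vacuous. By the remark on (HK4), they hold trivially otherwise, so the statement holds in either case.
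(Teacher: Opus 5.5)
Your proposal is correct and follows the same route as the paper's proof. (Reg) and (BW) feed the Euler Element Theorem, which extends $U$ by $\tilde U(\tau_h)=J_{\sH(W_0)}$; covariance on the single orbit $\cW_+=G.W_0$ then gives $\sH=\sH_{\tilde U}$, and the BGL results of \cite{MN21} (Theorem~\ref{thm:BGL}) deliver (CTL), (HK6)--(HK8) and (SS). Your optional functional-calculus check of $Z_\alpha$ is also fine and needs no special care at $-1$: $JU(\alpha)J=U(\alpha)^{-1}$ forces $J$ to commute with the spectral projections of $U(\alpha)$, so every unimodular Borel square root $Z_\alpha=f(U(\alpha))$ satisfies $JZ_\alpha J=\bar f(U(\alpha))=Z_\alpha^{-1}$.
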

  
\begin{proof}
By (Reg), Theorem~\ref{thm:ee} and Remark~\ref{rmk:ext}, 
  $U$ extends to an antiunitary representation $\tilde U$ of~$G_{\tau_h}$
  such that $\sH(W_0)$ is the standard subspace
  $\Fix(J e^{\pi i \cdot \partial U(h)})$, where
  $J=\tilde U({\tau_h})$. 
Then  $\sH(W_0)$ is the standard subspace associated to
  $(h,\tau_h)$ by the BGL construction
  \begin{equation}
    \label{eq:BGL}
 \sH _{\tilde U}(x,\sigma) = \Fix(\tilde U(\sigma)
e^{\pi i \cdot \partial U(x)})
  \end{equation}
from \eqref{eq:bgl}. 
  The covariance of the nets $\sH$ and $\sH _{\tilde U}$ thus implies 
  that $\sH=\sH _{\tilde U}$ holds on all of $\cW_+ = G.W_0$.
Therefore \cite[Thm.~4.12]{MN21} applies. 
  As $h$ is symmetric, (SS), (CTL) and (CTHD)
  follow from \cite[Prop.~4.16]{MN21}, where these
  properties are contained in (HK4) and (HK6). 
\end{proof}

\begin{remark}\label{rmk:ss1}
  (a) By Remark \ref{rmk:hermgen}, if $\fg=\Lie (G)$ is a hermitian simple Lie algebra, the cones $C_{\pm}$ have inner points in $\fg_{\pm1}$.
In addition, every Euler element $h \in \g$ is symmetric by
    \cite[Prop.~3.11(b)]{MN21}.
    If $\sH$ is a net of standard subspaces satisfying the hypotheses of Theorem \ref{thm:1BW}, then conditions (BW) and (CTHD) are met, and Theorem \ref{thm:reg}(a) ensures that $\sH$ also satisfies (Reg). Consequently, all the assumptions of Theorem \ref{thm:spst1} are fulfilled, and conditions
(HK7), (HK8) as well as (SS) are satisfied. 

\nin (b) The assumption of the symmetry of $h$ is crucial in
    the Theorem~\ref{thm:spst1}. If $h$ is not symmetric,
    then $W^{'\alpha} \not\in \cW_+$ for all $\alpha \in Z(G)^-$,
    so that the assumptions of (CTL), (SS) and (HK6-8) are never satisfied
    and the statement becomes vacuous. 

   \nin  (c)  Given an antiunitary representation $U$
    of the group $G_{\tau_h}$, then the BGL net   \eqref{eq:BGL} 
    is regular for a large class of Lie groups~$G$.
  Indeed, if $G$ is a semidirect product $G=N\rtimes L$ with $L$ reductive,
  then regularity holds if $C_U \cap \fn_{\pm1}(h)$ is generating
  in $\fn_{\pm1}=\Lie(N)\cap \fg_{\pm1}$  (\cite[Thms.~4.11, Cor.~4.24]{MN24}).
  See also \cite{BN25} and \cite[\S 5.3]{Ne26}. 
  for more results in this direction. 
  
\nin  (d)  Let $(G, \cW_+, U, \sH)$ be a net of standard subspaces such that
(HK1)-(HK8), (Reg) and (SS) are satisfied. We then have
$Z_\alpha^2 = U(\alpha)$. Here $Z_\alpha$ is by far not unique.
If, for instance $T \in \cZ(U(G_{\tau_h})')$ is a unitary involution,
then $Z_\alpha' := Z_\alpha T \in U(G)'$ also satisfies
(CTL), (SS), (CTHD): 
\[  J Z_\alpha' J = Z_\alpha^{-1} T
  = Z_\alpha^{-1} T^{-1}
  = T^{-1} Z_\alpha^{-1} 
  = (Z_\alpha')^{-1} \]
and $(Z_\alpha')^2 = Z_\alpha^2 T^2 = U(\alpha)$.

If the unitary representation $U\res_G$
is irreducible, then
\[ U(G)' \cap \U(\cH) = \T \1 \quad \mbox{ and } \quad
  U(G_{\tau_h})' \cap \U(\cH)= \{ \pm\1\},\]
so that the only ambiguity
in the choice of $Z_\alpha$ is that we may replace $Z_\alpha$ by $-Z_\alpha$.
\end{remark}

 \begin{rem}\label{rmk:quot}
 Note that, for a quotient group 
$\underline G := G/\Gamma$, $\Gamma \subeq Z(G)$ discrete, 
   the complementary abstract wedge $(W_0^{\underline G})' = (-h,\tau_h)$
   may be contained in $\cW_+^{\underline G}$, even if $W_0'\notin\cW_+^G$. 

   As an example, consider $G=\SL_2(\R)$ and
   $\Gamma =Z(G)=\{\pm\textbf{1}\}$.
   In view of \cite[Ex.~2.10(e)]{MN21},
   $W^{'-\textbf{1}}=(-h,-\tau_h)\in \cW_+^G$ but $W'=(-h,\tau_h)\notin\cW_+^G$.
   On the other hand, for $\underline G=\PSL_2(\R)$, 
   we have $(-h, \tau_h) \in \cW_+^{\uline G}$.
   In this case the quotient map $q \colon G \to \uline G$ induces a
   bijection $q^\cW \colon \cW_+^G \to \cW_+^{\uline G}$
   of homogeneous $G$-spaces. Here $-\1 \in G$ acts trivially
   on $\cW_+^G$, so that the $G$-action factors through an action of
   $\uline G$.
  \end{rem}

\begin{cor} \mlabel{cor:3.16} Assume that the cones $C_\pm$ span
  $\g_{\pm 1}$. Let $(U,\cH)$ be a unitary representation of $G$ with discrete kernel and $\sH:\cW_+\to \Stand(\cH)$ be a net of standard subspaces satisfying
  {\rm(HK1)-(HK4)}. Then $\sH$ is the BGL net $\sH_{\tilde U}$,
  associated   to the antiunitary representation
  $\tilde U$ of $G_{\tau_h}$, extending $U$ by
  $\tilde U(\tau_h)=J_{\sH(W_0)}$. 
\end{cor}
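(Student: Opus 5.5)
The plan is to reduce the corollary to the Spin--Statistics Theorem~\ref{thm:spst1}. By that theorem, a regular net satisfying (BW), with $\ker U$ discrete, coincides with the BGL net $\sH_{\tilde U}$ for the extension $\tilde U(\tau_h) := J_{\sH(W_0)}$. So it suffices to establish, from (HK1)--(HK4), the properties (Reg) and (BW).

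\textbf{Regularity.} The hypothesis that $C_\pm$ span $\g_{\pm 1}$ is exactly the hypothesis of the Regularity Theorem~\ref{thm:reg}. Part (a) of that theorem gives (Reg) from (Iso) and (Cov) alone. Together with (SC), this makes $\sH$ regular in the sense of Definition~\ref{def:regular-net}.

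\textbf{Bisognano--Wichmann.} I would split into two cases.

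\emph{Case 1: $h$ symmetric.} Apply the Geometric Bisognano--Wichmann Theorem~\ref{thm:1BW}. Its remaining hypothesis $\g_0 = [\g_1,\g_{-1}]$ needs attention. If it is not available directly, I would instead use the Euler Element Theorem~\ref{thm:ee} first. Regularity gives a cyclic $\sH(W_0)_N$. Borchers' Theorem~\ref{Borch}, applied with $C_\pm$ and (Iso/Cov), shows that the one-parameter group $z_{W_0}(t) = \Delta_{\sH(W_0)}^{it}U(\exp 2\pi t h)$ commutes with $U(G_{\pm1})$. The key point of the proof of Theorem~\ref{thm:1BW} only uses that $z_{W_0}$ commutes with the group generated by $G_{\pm1}$ and with $U(G^{W_0})$, hence with $U(G)$, since $G$ is generated by $G_1$, $G_{-1}$ and $G^{h}_e$. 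Then $z_W$ is independent of $W$. The (CTL) condition with $W_0^{'\alpha}\in\cW_+$ forces $z_{W_0}(t) = z_{W_0}(-t)$, hence $z \equiv \1$, which is (BW).

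\emph{Case 2: $h$ not symmetric.} Here (CTL) is vacuous, and the symmetry trick giving $z\equiv\1$ is unavailable. This is the step I expect to be the main obstacle. What I would try first is to show that the one-parameter group $z_{W_0}(\R)\subeq U(G)'$ acts trivially by exploiting positivity. Since $z$ commutes with $U(G)$ and with $\Delta^{it}$, one has
\[
\Delta_{\sH(W_0)}^{it} = z(t)\,U(\exp(-2\pi t h)).
\]
The inclusions $U(\exp y)\sH(W_0)\subeq \sH(W_0)$ for $y\in C_\pm$ then give half-sided modular inclusions in both directions. Combining the $\pm$ inclusions via Borchers' theorem, I would aim to deduce that $z$ is trivial, for instance by analytic continuation in the strip together with the commutation with both positive semigroups.

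If Case 2 cannot be closed, I would restrict the corollary to symmetric $h$, where Theorem~\ref{thm:1BW} suffices. In either case, (BW) and regularity feed into Theorem~\ref{thm:spst1}. That theorem yields $\sH = \sH_{\tilde U}$, with $\tilde U$ the antiunitary extension from Theorem~\ref{thm:ee} and Remark~\ref{rmk:ext}, determined by $\tilde U(\tau_h) = J_{\sH(W_0)}$, and equality on all of $\cW_+=G.W_0$ follows from covariance of both nets.
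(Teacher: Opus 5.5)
Your reduction is the same as the paper's proof. There, Corollary~\ref{cor:3.16} follows by invoking Theorem~\ref{thm:reg}(a) for (Reg), Theorem~\ref{thm:1BW} for (BW), and then Theorem~\ref{thm:spst1}. This silently imports the extra hypotheses of those theorems: $h$ symmetric and $\g_0=[\g_1,\g_{-1}]$.

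Your way around the second hypothesis is correct and slightly strengthens the statement. $z_{W_0}$ commutes with $U(G^{W_0})\supseteq U(G^h_e)$, and $G$ is generated by $\exp\g_1$, $\exp\g_{-1}$ and $G^h_e$. Hence $z_{W_0}(\R)\subseteq U(G)'$ without assuming $\g_0=[\g_1,\g_{-1}]$. You should drop the appeal to Theorem~\ref{thm:ee} at that point: its hypothesis (a) is (BW), which is what you are trying to prove.

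\textbf{Case 2 cannot be closed.} For non-symmetric $h$ the conclusion is false. In that case (HK4) is vacuous, and the following tensor construction is a counterexample:
\begin{itemize}
\item let $U_0$ be any $C$-positive antiunitary representation of $G_{\tau_h}$ with discrete kernel;
\item let $\sK\subseteq\cK$ be a standard subspace with $\Delta_\sK\neq\1$;
\item put $U:=U_0\otimes\1$ and $\sH(W):=\sH_{U_0}(W)\otimes\sK$, the closed real span of elementary tensors, whose modular objects are the tensor products of those of the factors.
\end{itemize}
This net satisfies (HK1)--(HK3). But $z_{W_0}(t)=\1\otimes\Delta_\sK^{it}$ is a non-trivial one-parameter group in $U(G)'$, so (BW) fails. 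No positivity or analytic-continuation argument can remove it. Such $h$ do occur under the corollary's hypotheses, e.g.\ $h=(h_0,1)\in\fsl_2(\R)\oplus\R$ with the cone $\{(x,c)\colon x\in C,\ |c|\le\sqrt{\det x}\}$, where $C\subseteq\fsl_2(\R)$ is as in Example~\ref{ex:sl2}. So restricting to symmetric $h$ is not a fallback but necessary, as in the paper.

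\textbf{The last step of your Case~1 does not follow from (HK4) as formulated.} The step is ``(CTL) forces $z_{W_0}(t)=z_{W_0}(-t)$'', taken from the proof of Theorem~\ref{thm:1BW}. Twisted duality only gives $\Delta_{\sH(W_0^{'\alpha})}=Z_\alpha\Delta_{\sH(W_0)}^{-1}Z_\alpha^{*}$, hence $z_{W_0}(t)=Z_\alpha z_{W_0}(-t)Z_\alpha^{*}$. To conclude, you need $Z_\alpha$ to commute with $\Delta_{\sH(W_0)}^{i\R}$. The paper uses this silently when it writes $\Delta_{\sH(W^{'\alpha})}^{it}=\Delta_{\sH(W)}^{-it}$. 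It uses it again when applying Lemma~\ref{lem:inc}, where the relevant modular group is $Z_\alpha\Delta_{\sH(W_0)}^{-it}Z_\alpha^{*}$.

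The tensor trick shows this is a real gap even for symmetric $h$. Take $G=\PSL_2(\R)$, $h=h_0$, $U_0$ a positive-energy antiunitary representation, and $\cK=\C^2$ with $\Delta_\sK(a,b)=(\lambda a,\lambda^{-1}b)$ for some $\lambda\neq1$, $J_\sK(a,b)=(\overline b,\overline a)$, and $T(a,b)=(b,a)$. Then $T\Delta_\sK T=\Delta_\sK^{-1}$ and $TJ_\sK T=J_\sK$, so $T\sK'=\sK$. Consequently $Z_e:=\1\otimes T\in U(G)'$ satisfies $J_{\sH(W_0)}Z_eJ_{\sH(W_0)}=Z_e^{-1}$ and $\sH(W_0')=Z_e\sH(W_0)'$. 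All hypotheses hold, including those of Theorem~\ref{thm:1BW}, yet (BW) fails.

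The argument does go through if (HK4) is strengthened, for example by requiring $Z_\alpha\in U(G)''$ (as when $Z_\alpha$ is a function of $U(\alpha)$, which is the BGL situation) or $[Z_\alpha,\Delta_{\sH(W_0)}^{it}]=0$. Then $Z_\alpha$ commutes with $z_{W_0}$, and your argument, which is the paper's, is complete.
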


\begin{prf}    As Theorems~\ref{thm:1BW} and \ref{thm:reg}(a) apply,
    the assumptions of Theorem~\ref{thm:spst1} hold, so that
    (HK5)-(HK8), (SS) and {(Reg)} are satisfied.
    This implies the corollary. 
  \end{prf}

\begin{rem} In the context of Corollary~\ref{cor:3.16}, 
  it follows with Remark~\ref{ex:quotientBGL},
that, for $\uline G := G/\ker(U)$, the net of standard subspaces
$(\uline G, \cW_+^{\uline G}, \uline U, \uline \sH)$ is the BGL net of the
antiunitary representation $\uline{\tilde U}$ of $\uline{G}_{\tau_h}
\cong G_{\tau_h}/\ker(U)$.
In particular $\uline\sH$ satisfies (HK1)-(HK8), (Reg) and~(SS).
\end{rem}

\begin{ex}  
  Let $G=\tilde\PSL_2(\RR)\times\tilde\PSL_2(\RR)
  {(=\widetilde\Mob\times \widetilde{\Mob})}$, where
  $\tilde \PSL_2(\R)$   is the universal covering of $\PSL_2(\R)$.
  Its Lie algebra $\fg=\fsl_2(\RR)\oplus\fsl_2(\RR)$ is semisimple.
  Let $C\oplus C \subeq \g$ be a $G$-invariant cone,
  where $C$ is a  non-trivial $\Ad(\SL_2(\R))$-invariant cone  in $\fsl_2(\RR)$. Let $h=h_1\oplus h_2\in\fg$ be a symmetric Euler element,
  where $h_1, h_2 \not=0$,  let $G_{\tau_h}$ be the corresponding
  extension, and $W_0=(h,\tau_h)\in\cW_+$.
  Let $U$ be a unitary representation of $G$
  with discrete kernel satisfying the spectral condition (SC).
  Let $\sH:\cW_+=G.W_0\to \Stand(\cH)$
  be a net of standard subspaces satisfying (HK1)-(HK4).
  As $C_{\pm}$ generate $\g_{\pm1}$, Theorems~\ref{thm:1BW} and \ref{thm:reg}
  show through Theorem~\ref{thm:spst1} that 
  (HK5)-(HK8), (SS) and (Reg) hold as well.

  In particular $\sH$ is the BGL net associated to the antiunitary representation $\tilde U$ of $G_{\tau_h}$ extending $U$ by $U(\tau_h)=J_{\sH(W_0)}$ and
  $\Gamma := \ker U$ is invariant under $ \tau_h$. Taking
Remark~\ref{ex:quotientBGL}(b)  into account, we define
  $\uline{\sH}$ as the BGL net on $\cW_+^{\uline G}$ associated to $\tilde {\uline U}$, which  satisfies (HK1)-(HK8), (Reg) and (SS). 
   
  This completely classifies central twisted local, isotonous nets
  of standard subspaces on the abstract wedge space $\cW_+$,
  which are covariant for a faithful positive energy
  representation~$U$ of some covering $\underline G$ of
  the conformal group
  $\PSO_{2,2}(\R) \cong \PSL_2(\RR) \times  \PSL_2(\RR) $ of $\R^{1,1}$:
  \textit{every such net is the BGL net of the associated covariant representation $U$ extended to $\underline{G}_{\tau_h}$ by $U(\tau_h)=J_{\sH(W_0)}$.}
\end{ex}

The following theorem shows that any two centrally
  complementary wedges $W^{'\alpha}$ and $W^{'\beta}$ in $\cW_+$
  can be transformed into each other by elementary steps, 
  related to passing from one orthogonal pair $(h,k)$ to another
  one $(h,k')$.

\begin{theorem}\label{thm:slspsp1} {\rm(Twist Generation Theorem)} 
  Let $(G,\cW_+,U,\sH) $  be a regular net of
  standard subspaces with the
  Bisognano--Wichmann property~{\rm(BW)}, for which $U$ has discrete kernel.
  Let $W_0=(h,\tau_h)$ and $W^{'\alpha}_0=(-h,\alpha\tau_h)\in \cW_+=G.W_0$.

   Then there exist Euler elements $k_1, \ldots, k_n$,
   orthogonal to $h$,  such that $\alpha=\zeta_1 \cdots\zeta_n$, where
   $\zeta_j :=\zeta_{h,k^j} \in Z_3(G)\subset Z(G)$, 
 and \eqref{eq:tcl2} holds in the form 
   \begin{equation}\label{eq:sltc}
     \sH(W^{'\alpha}_0) = Z_{\zeta_1}\cdots Z_{\zeta_n} \sH(W_0)'
     \quad \mbox{ with }  \quad Z_{\zeta_j}^2=U(\zeta_j). 
   \end{equation}
   By covariance, an analogous relation holds for every couple
$W^{'\alpha}$ and $W$  of centrally complemented wedges in $\cW_+$.
\end{theorem}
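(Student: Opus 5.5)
By Theorem~\ref{thm:spst1}, the regular net $\sH$ with (BW) and discrete $\ker U$ coincides with the BGL net $\sH_{\tilde U}$ of the antiunitary extension $\tilde U$ of $U$ with $\tilde U(\tau_h) = J := J_{\sH(W_0)}$. The plan is to work entirely with this BGL net and reduce the statement to the $Z_3$-Theorem~\ref{cor:z3gen} together with elementary manipulations of standard subspaces.

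\textbf{Step 1: factor $\alpha$.} Since $W_0^{'\alpha} \in \cW_+$, we have $\alpha \in Z_3(G)$, and $h$ is symmetric. By Theorem~\ref{cor:z3gen}, $Z_3(G)$ is generated by elements $\zeta_{h,k}$ with $(h,k)$ orthogonal. Inverses are again of this form, because $(h,-k)$ is orthogonal whenever $(h,k)$ is, and $\zeta_{h,-k} = \exp(-2\pi[h,k]) = \zeta_{h,k}^{-1}$. Hence we can write $\alpha = \zeta_1\cdots\zeta_n$ with $\zeta_j = \zeta_{h,k_j}$.

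\textbf{Step 2: construct $Z_{\zeta_j}$.} For each orthogonal pair $(h,k_j)$, Remark~\ref{rmk:gengen} provides $r_j := \exp(\tfrac{\pi}{2}z_{h,k_j})$ with $r_j^2.W_0 = W_0^{'\zeta_j}$. In particular $W_0^{'\zeta_j}\in\cW_+$, so $\zeta_j \in Z(G)^-$. Following the BGL computation of \cite[Prop.~4.16]{MN21}, I would set
\[ Z_{\zeta_j} := U(r_j^2)\, e^{-\pi i\,\partial U(h)}\cdots \]
or, more cleanly, take any unitary $Z_{\zeta_j}\in U(G)'$ with $JZ_{\zeta_j}J = Z_{\zeta_j}^{-1}$ and $Z_{\zeta_j}^2 = U(\zeta_j)$. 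Theorem~\ref{thm:spst1} (condition (SS) applied to each $\zeta_j$) guarantees such operators exist and satisfy $\sH(W_0^{'\zeta_j}) = Z_{\zeta_j}\sH(W_0)'$.

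\textbf{Step 3: the product formula.} Put $Z := Z_{\zeta_1}\cdots Z_{\zeta_n}$. Each factor commutes with $U(G)$. The key point, and the main obstacle I expect, is that the $Z_{\zeta_j}$ must also commute with each other, so that $Z$ again satisfies \eqref{eq:ZJ2} and $Z^2 = U(\alpha)$. To ensure this, I would choose them canonically inside the abelian von Neumann algebra generated by $U(Z(G))$, via a common functional calculus on $U(Z_3(G))$. Then $JZJ = \prod_j Z_{\zeta_j}^{-1} = Z^{-1}$, since $J$ inverts each factor and the factors commute. Moreover $Z^2 = \prod_j U(\zeta_j) = U(\alpha)$.

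\textbf{Step 4: identify $\sH(W_0^{'\alpha})$.} It remains to show $\sH(W_0^{'\alpha}) = Z\sH(W_0)'$. In the BGL picture, $\sH_{\tilde U}(-h,\alpha\tau_h) = \Fix\bigl(U(\alpha) J e^{-\pi i\,\partial U(h)}\bigr)$. For a unitary $Z$ commuting with $\Delta$ and satisfying $JZJ = Z^{-1}$, $Z^2 = U(\alpha)$, one has
\[ Z\sH(W_0)' = Z\,\Fix\bigl(J\Delta^{-1/2}\bigr) = \Fix\bigl(ZJ\Delta^{-1/2}Z^{-1}\bigr) = \Fix\bigl(Z^2 J\Delta^{-1/2}\bigr), \]
which equals the subspace above. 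This gives \eqref{eq:sltc}.

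The statement for a general pair $W$, $W^{'\alpha}$ then follows by covariance, since $Z\in U(G)'$.
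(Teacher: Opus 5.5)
Your proof is correct and follows the same route as the paper: Theorem~\ref{thm:spst1} identifies $\sH$ with the BGL net $\sH_{\tilde U}$ and gives (SS), and Theorem~\ref{cor:z3gen} factors $\alpha$ into elements $\zeta_{h,k_j}$. Your observation $\zeta_{h,-k}=\zeta_{h,k}^{-1}$ correctly takes care of inverses.

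The paper only says that \eqref{eq:sltc} ``follows from (SS)''. You spell out the missing step: choose mutually commuting square roots $Z_{\zeta_j}=f(U(\zeta_j))\in U(Z(G))''$, with one fixed Borel square root $f$ on the unit circle, and compute $Z\sH(W_0)'=\Fix(Z^2J\Delta^{-1/2})$. One point still needs justification, namely $JZ_{\zeta_j}J=Z_{\zeta_j}^{-1}$ for this choice. It holds because $JU(\zeta_j)J=U(\tau_h(\zeta_j))=U(\zeta_j)^{*}$ forces $J$ to commute with the spectral projections of $U(\zeta_j)$, so $Jf(U(\zeta_j))J=f(U(\zeta_j))^{*}$. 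You should also delete the unfinished formula in Step~2.
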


\begin{proof}
  The net $\sH$ satisfies the assumptions of Theorem \ref{thm:spst1}, in particular (SS) holds.
  By Theorem~\ref{cor:z3gen}, 
  there exist Euler elements $k_j$, orthogonal to $h$,
  such that $\alpha = \zeta_{h,k_1} \cdots \zeta_{h,k_n}$.
  The pairs $h, k_j$ generate subalgebras $\fs_j \cong \fsl_2(\R)$,
  and $\zeta_{h,k_j} \in Z(S_j)$ for the corresponding
  integral subgroups $S_j \subeq G$,
  so that \eqref{eq:sltc} follows from (SS).
\end{proof}

\section{Nets of standard subspaces on causal spacetimes}
\label{sect:appssn}

We now describe some applications of the geometric Spin--Statistics
and the Bisognano--Wichmann Theorems to nets of standard subspaces in
$1$+$2$-dimensional Minkowski space and in $1$+$1$ dimensions
with conformal symmetries.
We refer to Appendix \ref{subsec:4.2} for fundamental facts on the conformal group of the Minkowski spacetime $\RR^{1,d-1}$, the Euler elements of the conformal Lie algebra, and the massless scalar  antiunitary representation $U$ of the proper Poincar\'e group $\cP_+$.

\subsection{On 1+2-dimensional Minkowski space}
\label{sect:12ssconf}

As an  example we now explain how our abstract Spin--Statistics Theorem applies to conformal nets on $1$+$2$-dimensional
Minkowski space in the abstract Euler wedge setting. The AQFT model is treated in \cite{GL95}. 
Conformal nets on higher dimensional Minkowski spaces will be discussed in \cite{MN25}.
Let $\R^{1,2}$ be the $d := 1+2$-dimensional Minkowski
spacetime. 
Taking Appendix~\ref{subsec:4.2} 
into account, let $\tilde U$ be an antiunitary representation of $G_{\tau_h}$, where $G$ is the double covering of $\SO_{2,3}(\RR)_e$, extending the massless scalar antiunitary representation $U$ of  the proper Poincar\'e group $\cP_+ \cong ({\cP_+^\up})_{\tau_h}$,  where  $h\in\Lie(\cP_+^\up)$ is the generator of the one-parameter group of boosts
associated to the Rindler wedge $W_R=\{x\in\RR^{1,2}:|x_0|<x_1\}$. Note that
$G_{\tau_h}$ is defined by Remark~\ref{rmk:ext}, 
and $\tilde U$ satisfies the
spectrum condition with respect to the pointed cone~$C$,
generated by the orbit $\Ad(G)\be_0\subset\fg$, where
$\be_0\in \Lie (\cP_+^\up)$ is a generator of time-translations.
Let $W_h=(h,\tau_h)
\in \cW_+^{\cP_+^\uparrow}$ be the Euler wedge associated to $W_R$ as in Remark \ref{ex:mink}. 
The spatial $\pi$-rotation in $\cP_+^\up$ acts as $r(\pi).W_h=W_h'$
(in accordance with $r(\pi)W_R=W_R'$). In particular $W'_h\in\cW_+ $.
 Let
 \begin{equation}
   \label{eq:M2}
q_M \colon M^{(2)} := \bS^1 \times \bS^2 \to M
 \end{equation}
 be the  double covering of the compactified Minkowski spacetime $M$,
 and
{ \[ \iota: \RR^{1,2}\to M, \quad
   \iota(x) =
   \Big[ \frac{1 - \beta(x,x)}{2} : x_0 : x_1: x_2  : - \frac{1 + \beta(x,x)}{2}\Big]   \in \bP(\R^{2,3}), \]
 the canonical embedding, where
 $\beta(x,y) = x_0y_0 - \bx \by$ is the Lorentzian form
 on $\R^{1,2}$ (see e.g. \cite{GL96} or \cite{MN25}). Note that this
 map naturally lifts to
 \[ \iota: \RR^{1,2}\to M^{(2)}, \quad
   \iota^{(2)}(x) =
   \Big(\Big(\frac{1 - \beta(x,x)}{2\delta(x)},\frac{x_0}{\delta(x)}\Big),
   \Big(\frac{x_1}{\delta(x)}, \frac{x_2}{\delta(x)},
   - \frac{1 + \beta(x,x)}{2\delta(x)}\Big)\Big) \in
   \bS^1 \times \bS^2,\]
 where
 \[ \delta(x) := \Big(x_0^2 + \Big(\frac{1 - \beta(x,x)}{2}\Big)^2\Big)^{1/2} 
   = \Big(x_1^2 + x_2^2 + \Big(\frac{1 + \beta(x,x)}{2}\Big)^2\Big)^{1/2}.\]
}

We write 
\[ q\colon G\to \SO_{2,3}(\RR)_e \] 
for the quotient map and
identify $\Lie(\cP_+^\up)$ with a Lie subalgebra of the
conformal Lie algebra $\so_{2,3}(\R)$. 
Let $T = \R \be_0\subset \RR^{1,2}$ be the set of points of the time axes and
\[ \oline T := \overline{\iota(T)} \subeq M \] 
be the compactified timeline. 
For the subgroup $\Mob\subset \SO_{2,3}(\RR)$
  of transformations fixing $\oline T$,
the inverse image 
\[ \Mob^{(2)}=q^{-1}(\Mob) \]  is the connected double covering of
$\Mob$, cf.~Appendix \ref{subsec:4.2}.
Indeed, write $(\rho(\theta))_{\theta \in \R}$
for the one-parameter group of rotations in $\Mob$
and $\tilde\rho$ for the one-parameter group of $G$ lifting $\rho$, then
\[ Z(G)=\{e,\alpha\}\simeq \ZZ_2, \quad \mbox{  where } \quad
  \alpha=\tilde \rho(2\pi) \]
is the $2\pi$-rotation in the $2$-fold covering $\Mob^{(2)}\cong \SL_2(\R)$.
Since $U$, hence also  $\tilde U$, is 
irreducible  and $\alpha\in Z(G)^-$ is an involution,
$\tilde U(\alpha)=-\1$.

Let $\sH_U$ and $\sH_{\tilde U}$ be the $\BGL$ nets associated respectively to $U$ and $\tilde U$, see \eqref{eq:BGL}.
Now $\cW_+^{\cP_+^\up}\subset\cW_+^G$,  and the BGL net 
$\sH_U$ extends to $\sH_{\tilde U}$
(Proposition \ref{prop:BGL-natural} and Remark~\ref{rmk:ext}(b)).
In particular, for every wedge $W\in \cW_+^{\cP_+^\up}$, we have
$W'\in  \cW_+^{\cP_+^\up}$ (see Remark \ref{ex:mink}) and by equivariance,
$W\in \cW_+^G$ also implies $W'\in\cW_+^G$.
By $Z_2(G) = Z_3(G)$ and \eqref{eq:lem:5.2}, 
both complements  $W'$ and $W^{'\alpha}$ of $W_h$, are contained in $\cW_+^G$
because
\[ Z(G)=Z(G)^- =Z_2(G)=Z_3(G)\cong \Z_2.\]
By the BGL construction, we have the twisted duality relations:
\begin{equation}
  \label{eq:twisteddualityrel}
  \sH_{\tilde U}( W_h^{'\alpha})=i\sH_{\tilde U}( W_h)',\qquad   \sH_{\tilde U}( W_h')=\sH_{\tilde U}( W_h)',
\end{equation}
because
\[ \Delta_{\sH_{\tilde U}( W_h^\alpha)}=\exp(2\pi i\cdot \partial U(h))\quad\text{and}\quad J_{\sH_{\tilde U}( W_h^\alpha)}
= U(\alpha)J_{\sH_{\tilde U}( W_h)} =-J_{\sH_{\tilde U}( W_h)}. \]
We may thus put $Z_{\alpha}=i\textbf{1}$
(but also $Z_\alpha =-i$ realizes the relation
  $Z_\alpha^2=-\textbf{1}$) and $Z_{\textbf{1}}=\textbf{1}$.

We will postpone a more explicit correspondence of abstract Euler wedges and their locality property with regions of the
$2$-fold covering space $M^{(2)}$. Just remark that, given
an abstract wedge $W=(h,\tau)$,
there exist $g_1$ and $g_2$ in $G$ such that $g_2 W = W'$, and
$g_1g_2W=g_1W'=W^{\alpha}$, where $W^\alpha=(h,\alpha \tau)$.
The geometric action of $g_1g_2$ 
on $M^{(2)}$ corresponds to the ``helicoidal shift'' in \cite{GL96}.
We will describe conformal theories in greater
generality in the forthcoming paper \cite{MN25}.

The Euler elements $h = h_{1}$ and $h_{2}$ associated to $W_R$ and
\[ W_2=\{x\in\RR^{1,2}: |x_0|<x_2\},\] respectively (cf.\ \eqref{eq:Ws}),
are orthogonal,   and the Lie algebra generated by
$h_{1}$ and $h_{2}$ is
  $\L(\cL_+^\up)\cong \fsl_2(\RR)$.
  For the restriction $U|_{\cL_+^\up}$, the operator
  $i\partial U([h_1,h_2])$ is not semibounded by \cite[Prop.~1.6]{GL95}.
  As a consequence, $[h_1,h_2]\notin -C\cup C$, and the orthogonal pair
  $(h_1, h_2)$ of Euler elements is spacelike  (cf.~Section \ref{sect:stort}).
  Indeed, the orthogonal spacelike pairs $(h,\pm h_2)$ are
$G$-conjugate to the spacelike pairs $(h_{-1,2},\pm h_{-1,1})$ of Appendix \ref{subsec:4.2}.

  We consider the embedding $\Mob\into \SO_{2,3}(\RR)$ as the
  subgroup fixing the ``time circle''~$\oline T$.
  The corresponding Lie subalgebra $\mathfrak{mob} \subeq \so_{2,3}(\R)$
  is isomorphic to $\fsl_2(\RR)$. With the notation from
  Example \ref{ex:sl2}, the one-parameter group $\exp(th_0)$, 
 corresponding to $\exp(t h_{-1,3})$ of Appendix \ref{subsec:4.2},  
acts  by dilations on~$\RR^{1,2}$: 
\[ \delta(t)x:=\exp(th_0)x=e^{t}x \quad \mbox{ for } \quad t\in \R, x\in \RR^{1,2}. \] 
Here $h_0\in\cO_h$ is an Euler element in $\cE(\fg)$ with the positivity domain
\[ V_+=\{x\in\R^{1,2}:x^2 = x_0^2 - \bx^2>0, x_0>0\}.\] 
The Euler element $k_0\in\fsl_2(\RR)$ corresponds to the conformal
vector field with positivity domain
\[ O=\Big\{x\in\RR^{1,2}: |x_0|+\sqrt{ x_1^2+x_2^2}<1\Big\}.\] 
One can describe the geometric action of $\exp(tk_0)$
(corresponding to $\exp(th_{0,3})$ of Appendix~\ref{subsec:4.2})
following \cite{HL82}: The {action of the subgroup $\Mob \subeq \SO_{2,3}(\R)$ on $\R^{2,3}$} fixes $\be_1$ and $\be_2$,
  and acts effectively on the span of $\be_{-1}, \be_0$ and $\be_3$, 
hence commutes with the space rotation in the $\be_1$-$\be_2$-plane  
(see \cite[Sect.~8.2]{BDL07}).
 In particular, $\exp(\R k_0)$
commutes with the space rotations and it is enough to
describe it on the $x_0$-$x_1$ plane in
{$\R^{1,2} \cong \iota^{(2)}(\R^{1,2})$,
  which is mapped by $\iota^{(2)}$ into the torus
  $\bS^1 \times \bS^1 \subeq \bS^1 \times \bS^2$ specified by the equation
  $x_2 = 0$.}
In the coordinates $u_\pm=x_0\pm x_1$, we have
\begin{align*} 
\exp(t k_0)(u_+,u_-)
  &=\left(\frac{e^{t}(1+u_+) + u_+-1}{e^t(1+u_+)+1 - u_+} ,
\frac{e^{t}(1+u_-) + u_--1}{e^t(1+u_-)+1 - u_-}\right),\\
&=\left(\frac{\cosh(t/2)u_++\sinh(t/2)}{\sinh(t/2)u_++\cosh(t/2)} ,
\frac{\cosh(t/2)u_-+\sinh(t/2)}{\sinh(t/2)u_-+\cosh(t/2)}\right),
\qquad t\in\RR.
  \end{align*}
The formula for the chiral coordinate also appear in \cite{LM24}.
The elements $h_0$ and $k_0$ generate
$\mathfrak{mob}\subset \so_{2,3}(\RR)$,
and they are orthogonal Euler elements.
Note that the restriction $U|_{\Mob^{(2)}}$ is of positive energy
  and that the pair $(h_0,k_0)$ is positive timelike,
  i.e., $[h_0,k_0]\in C$ corresponds to a positive operator.
Likewise $(-h_0,k_0)$ is negative timelike.

\subsection{On two-dimensional conformal nets}\label{sect:2dimss}

In this section we apply our
Spin--Statistics result to two-dimensional conformal models. We refer to Example \ref{ex:mob} for the correspondence between Euler wedges and admissible intervals of the real line, as the universal covering of the circle $\bS^1$.

Let $H=\cP^\up_+ = \R^{2}\rtimes \SO_{1,1}(\R)_e$ be the
identity component of the $2d$-Poincar\'e group, 
and its $\ZZ_2$-extension $H_{\tau_h} \cong  \cP_+$.
The Euler wedge $(h,\tau_h)$
determines the right wedge region $W_R$ as in Example \ref{ex:mink}.

Consider the group inclusion $H\into \tilde G$,
where $\tilde G=\tilde\Mob\times\tilde\Mob$ is the universal covering of
$G=\Mob\times\Mob$, hence $Z(\tilde G)= \tilde\rho(2\pi \Z)\times\tilde\rho(2\pi \Z)\simeq \ZZ\times \ZZ$,
where $\tilde\rho \colon \R \to \tilde G$ is the lift
of the rotation group in $\SL_2(\R)$. 
We identify the central elements $(\tilde\rho(2\pi),e)$ and
$(e,\tilde\rho(2\pi))$ with $(1,0)$ and $(0,1)$ in $\Z^2$,
respectively. Whenever convenient, we shall use this
identification to describe elements of 
the center~$Z(\tilde G)$.
The Euler element in $\fh$ corresponds to the Euler element
$h = (h_0, -h_0) \in \g \cong \fsl_2(\R)^{\oplus 2}$ (cf.~\eqref{eq:h0k0-intro}). 
The corresponding involution on
$\tilde G$ is $\tau_h =\tau_{h_0}\times \tau_{h_0}$.

The group $G$ acts by conformal diffeomorphisms
on compactified $2d$-Minkowski space, which in light ray
coordinates $u_{\pm} = x_0 \pm x_1$ 
is
\[ M^{1,1} = \bS^1 \times \bS^1,\] 
where the two factors of $G$ act by M\"obius transformations. 
Accordingly, the group $\tilde G$ acts naturally on the
  {\it Einstein universe}
  \[ \tilde M^{1,1} \cong \R \times \R,\]
  which is also called superworld (see \cite{BGL93,LM75}).
  We shall denote by $(\tilde u_+, \tilde u_-)$
  the natural chiral coordinates of $\tilde M$.
We identify $H$ with the integral subgroup of
$\tilde G$
with Lie algebra
\[ \fh =  (\R(e_0,0) + \R(0,e_0)) \rtimes \R h,\]
where $[h_0, e_0] = e_0$, and $\exp(te_0)$ acts by translations
on $\R \subeq \R_\infty \cong \bS^1$. Then
Minkowski space embeds as the open subset
\begin{equation}
  \label{eq:M11-emb}
M_0:= H.(0,0) = (-\pi, \pi) \times (-\pi, \pi) \subeq \tilde M^{1,1}
\end{equation}
(cf.\ Example~\eqref{ex:mob}(a)). 
We also have an inclusion $H_{\tau_h} \into \tilde G_{\tau_h}$,
which leads to a natural inclusion of abstract wedge spaces
\[ \cW_+^H \into \cW_+^{\tilde G}.\]

In $\tilde M^{1,1}$, the wedge regions corresponding to
the Euler element $h =  (h_0, -h_0)$ in $\tilde M^{1,1}$ are of the form
\[ \tilde I_{2n} \times \tilde I_{2m+1}, \quad n,m \in \Z\] 
(cf.~\eqref{eq:I2n} in Example~\eqref{ex:mob}(b)). Accordingly,
the wedge regions in $\tilde M^{1,1}$ can be identified with
$\tilde J_1 \times \tilde J_2 \subeq \R^2$, where the length
of both intervals is smaller than $2\pi$.
\begin{figure}[ht]\centering
\begin{tikzpicture}[scale=0.75]
        \draw [->] (-3,0) --(5,0);
         \draw [->] (0,-3)--(0,4);
          \draw [thick] (-2,0)-- (0,2) node [ left] {$M$};
          \draw [thick] (0,2)-- (2,0);
          \draw [thick] (-2,0)-- (0,-2);
          \draw [thick] (2,0)-- (0,-2);
          
           \draw [ thick] (0,0)-- (1,1);
          \draw [thick] (0,0)-- (1,-1);
             \draw [ thick] (0,0)-- (-1,1);
          \draw [thick] (0,0)-- (-1,-1);

          \draw [thin,->] (-3,-3) -- (3.5,3.5) node [above right] {$\tilde u_+$};
          \draw [thin,->] (3,-3) -- (-3.5,3.5) node [above left] {$\tilde u_-$};
  \fill [color=black,opacity=0.2]
               (0,0) -- (-1,1) -- (-2,0) --(-1,-1);
           \fill [color=black,opacity=0.6]
              (0,0) -- (1,1) -- (2,0) -- (1,-1);

\end{tikzpicture}
\caption{The Minkowski space $M_0$ in $\tilde M^{1,1}=\RR^2$. The dark grey region is the Rindler wedge~$W_R$, and the light grey region is the left wedge $W_L$.} 
\label{fig:1}
\end{figure}
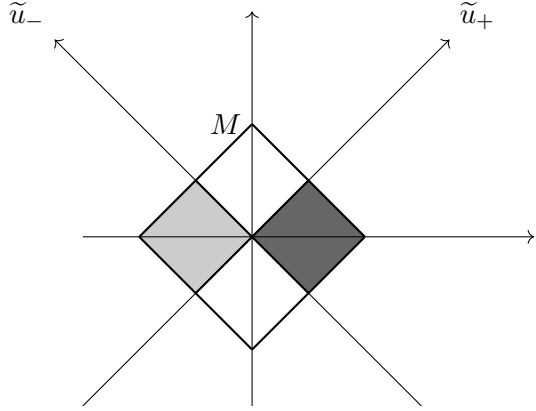

The center $Z(\tilde G)\cong \Z^2$ is generated by
the two elements $(\zeta_0,e)$ and $(e,\zeta_0)$,
where $\zeta_0 = \exp(2\pi z_0) = \tilde\rho(2\pi)$, with $z_0$ from \eqref{eq:z0}, 
corresponds to a $2\pi$-rotation in the chiral coordinates. 
According to the embedding \eqref{eq:M11-emb} of $\R^{1,1}$ into
$\tilde M_{1,1}$, the right Rindler wedge $W_R \subeq
\R^{1,1}$ corresponds to the square 
\[ W_R  = (0,\pi) \times (-\pi,0),\]
and its causal dual $W_L = -W_R$ corresponds to the square
\[ W_L  = (-\pi,0) \times (0,\pi).\]
We may thus identify the homogeneous spaces  $\cW_+^{\tilde G}$
with the orbit of $W_R$ for the $\tilde G$-action on $\tilde M^{1,1}
\cong \R^2$.
The spacelike rotations
\begin{equation} \label{eq:rs}
   r_s(\theta) := (\tilde \rho(\theta), \tilde \rho(-\theta))
   = (\exp(\theta z_0), \exp(-\theta z_0))
\end{equation} 
then satisfies
\begin{equation}
  \label{eq:rswr}
  r_s(-\pi) W_R = W_L.
\end{equation}
On  the level of abstract wedges,
\begin{equation}\label{eq:EWrswr}
 r_s(-\pi).(h,\tau_h) = (-h, r_s(-\pi) \tau_h r_s(\pi))
  = (-h, r_s(-2\pi) \tau_h) = (-h, (\zeta_0^{-1}, \zeta_0) \tau_h). 
 \end{equation}
Every other central complement of $W_R$ is of the form
$z.W_L$ for some $z \in Z(\tilde G)$, and
on the level of abstract wedges
\[ (h,\tau_h)^{'\alpha} = (-h, \alpha (\zeta_0^{-1}, \zeta_0) \tau_h), \quad
  \alpha \in Z_2(\tilde G) = 2 Z(\tilde G).\]
Let $\tilde U$ be an antiunitary positive energy representation of
{ $\tilde G_{\tau_h}$}, with discrete kernel. Let
$\cW_+^{\tilde G} = \tilde G.(h,\tau_h)$ be the 
orbit of Euler wedges and $\sH$ be a net of standard subspaces on
$\cW_+^{\tilde G}$, satisfying (HK1)-(HK4). Then 
(Reg) follows from Theorem \ref{thm:reg}
  and  (BW), and (CTHD) from Theorem~\ref{thm:1BW}.
Further, (HK5)-(HK8) and  (SS) hold by 
    Theorem \ref{thm:spst1}, which entails in particular
    that $\sH=\sH_{\tilde{U}}$ is the BGL net of $\tilde U$.  
    We thus  obtain the following result:    

\begin{proposition}\label{prop:iden}
  Let $(\sH(W))_{W\in\cW_+^H}$ be  a Poincar\'e covariant
  net of standard subspace in
  1+1 dimensions, satisfying {\rm(HK1)-(HK3)}, 
  that extends 
 to a $\tilde G$-covariant net of standard subspaces  satisfying {\rm(HK1)-(HK4)}.
  Then the extension $(\sH(W))_{W\in\cW_+^{\tilde G}}$ is the BGL net of standard subspaces of the representation $\tilde U$ of $\tilde G_{\tau_h}$.  
\end{proposition}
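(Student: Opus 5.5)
The plan is to reduce the statement to Corollary~\ref{cor:3.16}, applied to the group $\tilde G = \tilde\Mob\times\tilde\Mob$, the Euler element $h=(h_0,-h_0)$ and the invariant cone $C\oplus C'$, where $C'$ is $C$ or $-C$ according to the sign conventions for the two chiral factors, chosen so that it is $\Ad^\eps(\tilde G_{\tau_h})$-invariant. First I check the structural hypotheses. The Lie algebra $\g=\fsl_2(\R)^{\oplus 2}$ is semisimple with hermitian simple ideals. By Remark~\ref{rmk:hermgen}(a), applied factorwise, the cones $C_\pm=\pm(C\oplus C')\cap\g_{\pm1}(h)$ have interior points in $\g_{\pm1}(h)=\R(e_0,0)\oplus\R(0,\theta(e_0))$ (or the analogous spaces). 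In each $\fsl_2(\R)$ factor $[\g_1,\g_{-1}]=\R h_0=\g_0$, so $\g_0=[\g_1,\g_{-1}]$ holds, which is what Theorem~\ref{thm:1BW} requires. Moreover $h$ is symmetric, since $h_0$ is symmetric in each factor (Example~\ref{ex:sl2}). Concretely, \eqref{eq:EWrswr} exhibits $W_0^{'\alpha}\in\cW_+^{\tilde G}$ with $\alpha=(\zeta_0^{-1},\zeta_0)$.

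Next I identify the representation data. The net on $\cW_+^{\tilde G}$ is assumed $\tilde G$-covariant, so there is a unitary representation $U$ of $\tilde G$ that extends the Poincar\'e representation of $H$. The spectral condition (HK3) for this $U$ is part of the hypothesis (HK1)--(HK4) on the extension. If the kernel of $U$ is not discrete, I plan to handle it as in the surrounding discussion: pass to $\tilde G/\ker U$, or note that a closed normal subgroup of $\tilde G$ with Lie algebra an ideal would force one chiral factor to act trivially. In that degenerate case I would restrict to the remaining factor and repeat the argument there with the Euler element $\pm h_0$.

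With these hypotheses in place, Theorem~\ref{thm:1BW} yields (BW) and (CTHD), Theorem~\ref{thm:reg}(a) yields (Reg), and Theorem~\ref{thm:spst1} then shows that $\sH=\sH_{\tilde U}$, where $\tilde U(\tau_h):=J_{\sH(W_0)}$. This is exactly the content of Corollary~\ref{cor:3.16}. Finally I would check that this $\tilde U$ agrees with the antiunitary extension named in the proposition. By Theorem~\ref{thm:ee} and Remark~\ref{rmk:ext}, all antiunitary extensions are determined by the conjugation $J_{\sH(W_0)}$ through \eqref{eq:J-rel}, and the original Poincar\'e net restricts consistently via Proposition~\ref{prop:BGL-natural}.

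I expect the main obstacle to be the discreteness of $\ker U$ (equivalently, whether the extension is faithful on both chiral factors). The second delicate point is choosing the sign of the cone on the second factor so that (SC) really provides interior points of $C_\pm$ in both components. My approach would be to fix $C'$ using the positive-energy condition of the Poincar\'e translations $\R(e_0,0)+\R(0,e_0)$.
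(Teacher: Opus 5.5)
Your reduction is the paper's own argument. The discussion preceding the proposition obtains (Reg) from Theorem~\ref{thm:reg}, (BW) and (CTHD) from Theorem~\ref{thm:1BW}, and $\sH=\sH_{\tilde U}$ with $\tilde U(\tau_h)=J_{\sH(W_0)}$ from Theorem~\ref{thm:spst1}; this is Corollary~\ref{cor:3.16}. It does so under the standing assumption that $\ker U$ is discrete (explicit in Proposition~\ref{prop:iden2}), so your detour for non-discrete kernels is not needed. Your checks of $\g_0=[\g_1,\g_{-1}]$ and of the symmetry of $h$ are fine. The sign of the cone on the second factor plays no role for the interior-point condition, because each factor contributes a half-line to the two-dimensional space $\g_{\pm1}(h)$.

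There is, however, a genuine gap at the step ``Theorem~\ref{thm:1BW} yields (BW)'', which you inherit from the paper. (HK4) only requires $Z_\alpha\in U(G)'$, so (CTHD) gives $\Delta_{\sH(W_0^{'\alpha})}=Z_\alpha\Delta_{\sH(W_0)}^{-1}Z_\alpha^*$. The proof of Theorem~\ref{thm:1BW} instead uses $\Delta_{\sH(W_0^{'\alpha})}=\Delta_{\sH(W_0)}^{-1}$ to get $z_W(t)=z_W(-t)$, and tacitly the same fact when Lemma~\ref{lem:inc} is applied; this needs $Z_\alpha$ to commute with $\Delta_{\sH(W_0)}$. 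Without that, the conclusion is false, as the following example shows.

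\emph{Counterexample.} Let $\tilde U_0$ be any antiunitary positive energy representation of $\tilde G_{\tau_h}$ on $\cH_0$ with discrete kernel, and let $\sH_0$ be its BGL net, with twists $Z^0_\alpha$. Let $\sK\subseteq\C^2$ be the standard subspace with $\Delta_\sK=\mathrm{diag}(\lambda,\lambda^{-1})$, $\lambda\neq1$, and $J_\sK(a,b)=(\bar b,\bar a)$. The flip $Y(a,b)=(b,a)$ satisfies $Y\sK'=\sK$ and $J_\sK YJ_\sK=Y=Y^{-1}$. Put $U:=U_0\otimes\1$ on $\cH_0\otimes\C^2$ and $\sH(W):=\sH_0(W)\otimes\sK$, with modular data $\Delta_{\sH_0(W)}\otimes\Delta_\sK$ and $J_{\sH_0(W)}\otimes J_\sK$. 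This net satisfies (HK1)--(HK3) on $\cW_+^H$, and (HK1)--(HK4) on $\cW_+^{\tilde G}$ with $Z_\alpha:=Z^0_\alpha\otimes Y$ (even with equality). Yet $\Delta_{\sH(W_0)}=e^{2\pi i\,\partial U(h)}(\1\otimes\Delta_\sK)\neq e^{2\pi i\,\partial U(h)}$. So (BW) fails, and $\sH$ is not the BGL net of any antiunitary extension of $U$. Here $z_W(t)=\1\otimes\Delta_\sK^{it}$ is a nontrivial one-parameter group in $U(G)'$ that is inverted by conjugation with $Y$.

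\emph{Repair.} Your argument, and the paper's, goes through once (HK4) also requires $Z_\alpha$ to commute with $\Delta_{\sH(W_0)}$. This can be arranged for BGL nets, where $Z_\alpha$ may be taken to be a function of $U(\alpha)$. Then $\Delta_{\sH(W_0)}^{i\R}$ is the modular group of $Z_\alpha\sH(W_0)'$ and leaves $\sH(W_0^{'\alpha})$ invariant, so Lemma~\ref{lem:inc} gives (CTHD). Next, $\Delta_{\sH(W_0^{'\alpha})}=\Delta_{\sH(W_0)}^{-1}$ forces $z_W\equiv\1$. Theorems~\ref{thm:reg}(a) and~\ref{thm:spst1} then finish as you describe.
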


If the net $(\sH(W))_{W \in \cW_+^H}$
indexed by the space of affine wedges in $\R^{1,1}$
satisfies the assumptions of
Proposition~\ref{prop:iden}, then the extension 
$(\sH(W))_{W\in \cW_+^{\tilde G}}$ 
is the BGL net of standard subspaces of the representation $\tilde U$ of
$\tilde G_{\tau_h}$.  In particular, by the (SS) condition, 
\begin{equation}\label{eq:ss}
Z_\alpha^2=U(\alpha).
\end{equation}

The Spin--Statistics result provides
a correspondence between the symmetry group representation, symplectic relations among complementary wedge subspaces and the Tomita modular operators.

The wedge space $\cW_+^{\tilde G} = \Ad(\tilde G).(h,\tau_h)$,
  for $h = (h_0, h_0)$, is one of four orbits in $\cG_E(\tilde G_{\tau_h})$.
  For $\tilde\SL_2(\R)$, we have two orbits,
  represented by $(\pm h_0, \tau_{h_0})$,  exchanged under the
  complementation map.
  For  $\tilde G$ we thus obtain $4$ orbits, represented by
  \[ ((h_0, h_0), \tau_h), \quad
    ((-h_0, -h_0), \tau_h), \quad 
    ((h_0, -h_0), \tau_h), \quad
    ((-h_0, h_0), \tau_h).\]
  For $h = (h_0, h_0)$, the twisted central complements of the elements of $\cW_+^{\tilde G}$
  lie in the orbit of $W^{'\alpha} = ((-h_0, -h_0), \alpha\tau_h)$ for
  $\alpha = (1,1)$. All other twisted 
  central complements of the elements in $\cW_+^{\tilde G}$
  are obtained by acting with elements of
  the center $Z(\tilde G) \cong \Z \times \Z$,
contained in the coset
  \[ (1,1) + (2\Z \times 2 \Z) = \{(n,m) \in \Z^2 \colon n,m \ \mbox{ odd}\}.\]
  
By the (SS) property, \eqref{eq:ss} provides 
  a relation between $U(\alpha)$ and $Z_\alpha$. In particular,
  given $\sH(W)$ and its modular operators $J_{\sH(W)}$ and $\Delta_{\sH(W)}$, the central complement $W^{'\alpha}$ maps to the standard subspace    
  $\sH(W^{'\alpha})$ with associated modular 
operators
\[ J_{\sH(W^{'\alpha})}=Z_{\alpha}J_{\sH(W)} \quad \mbox{ and } \quad
  \Delta_{\sH(W^{'\alpha})}=\Delta_{\sH(W)}^{-1}.\] 

The unitary representation of the central generators $(1,0)$, $(0,1)$  can be described in terms of modular conjugations as follows.
Consider the wedge regions (see Figure \ref{fig:2})
\[ V_+=(0,\pi)\times(0,\pi)\quad \mbox{ and }  \quad
  S_+=(-\pi/2,\pi/2)\times (0,\pi).\]
Taking \eqref{eq:gengen} into account, we have 
\[ J_{S_+}J_{\sH(V_+)}J_{S_+}J_{\sH(V_+)}
  =\tilde U(\tau_{k_0} \tau_{h_0}\tau_{k_0} \tau_{h_0},e)
\ {\buildrel \eqref{eq:gengen}\over = } \  \tilde U(1,0).\]
Analogously one can check that, for  $V_+=(0,\pi)\times(0,\pi)$ and $S_-=(0,\pi)\times(-\pi/2,\pi/2)$, we have $J_{S_-}J_{\sH(V_+)}J_{S_-}J_{\sH(V_+)}=U(0,1)$ (cf. \cite{MT19}). 

\begin{remark} In Proposition \ref{prop:iden}, the assumptions on the net
  $(\sH(W))_{W\in \cW_+^H}$ do not include (CTL), namely (HK4). This is because there are no symmetric Euler elements in $\cE(\fh)$, hence $W'$ is never  contained in the
  $H$-orbit of $W$, for every $W\in\cW_+^H$.
  On the other hand, every Euler element in $\fg$ is symmetric, so that
  every wedge $W\in \cW^{\tilde {G}}_+$ has central complements in the $\tilde G$-orbit. 
  In particular, once $\cW_+^H$ is identified as a subset of $\cW_+^{\tilde{G}}$,
  every wedge $W\in \cW_+^H$ has central complements in
  $\cW_+^{\tilde{G}}$, as in \eqref{eq:rswr}.
 \end{remark}

\begin{figure}[ht]
\centering
\begin{minipage}{0.48\textwidth}
\centering
\begin{tikzpicture}[scale=0.75]
        \draw [->] (-3,0) --(5,0);
        \draw [->] (0,-3)--(0,4);
        \draw [thick] (-2,0)-- (0,2) node [left] {$V_+$};
        \draw [thick] (0,2)-- (2,0);
        \draw [thick] (-2,0)-- (0,-2);
        \draw [thick] (2,0)-- (0,-2);

        \draw [thick] (0,0)-- (1,1);
        \draw [thick] (0,0)-- (-1,1);

        \draw [thin,->] (-3,-3) -- (3.5,3.5) node [above right] {$\tilde u_+$};
        \draw [thin,->] (3,-3) -- (-3.5,3.5) node [above left] {$\tilde u_-$};

        \fill [color=black,opacity=0.2]
               (0,0) -- (1,1) -- (0,2) --(-1,1);
\end{tikzpicture}
\end{minipage}
\hfill
\begin{minipage}{0.48\textwidth}
\centering
\begin{tikzpicture}[scale=0.75]
        \draw [->] (-3,0) --(5,0);
        \draw [->] (0,-3)--(0,4);
        \draw [thick] (-2,0)-- (0,2) node [left] {$S_+$};
        \draw [thick] (0,2)-- (2,0);
        \draw [thick] (-2,0)-- (0,-2);
        \draw [thick] (2,0)-- (0,-2);

        \draw [thick] (0.5,0.5)-- (-0.5,1.5);
        \draw [thick] (-0.5,-0.5)-- (-1.5,0.5);
        \draw [thick] (-0.5,-0.5)-- (0.5,0.5);

        \draw [thin,->] (-3,-3) -- (3.5,3.5) node [above right] {$\tilde u_+$};
        \draw [thin,->] (3,-3) -- (-3.5,3.5) node [above left] {$\tilde u_-$};

        \fill [color=black,opacity=0.6]
               (0.5,0.5) -- (-0.5,1.5) -- (-1.5,0.5) --(-0.5,-0.5);
\end{tikzpicture}
\end{minipage}

\caption{The light grey region is the the forward light cone $V_+$ and the dark grey region is the strip $S_+$. Both wedge regions are contained in $M_0$.}
\label{fig:2}
\end{figure}
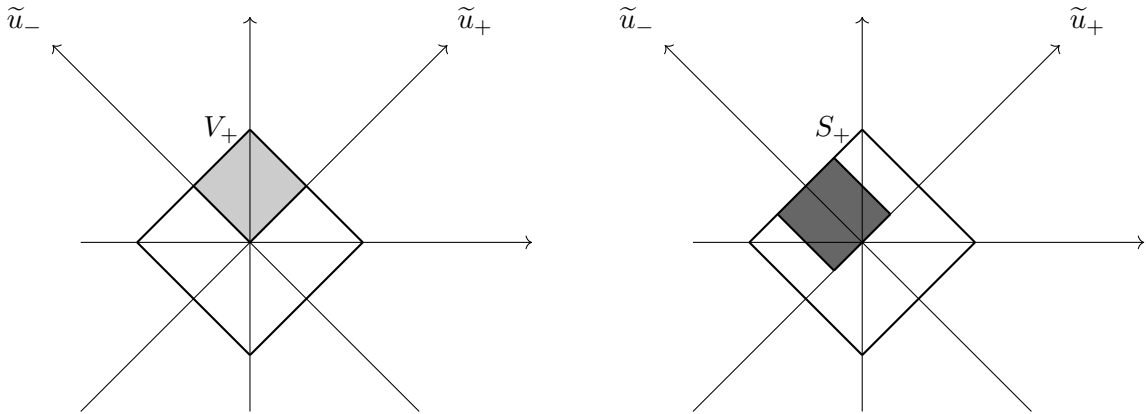

\begin{remark} (a) In the circle $\bS^1 \cong \R_\infty$, Euler elements
  in $\fsl_2(\R)$ correspond to non-dense open intervals
  $I$, which are the positivity regions of the associated vector fields
  on $\R_\infty$, see Example \ref{ex:mob}(a). We write $I'$ for the interior of the complement of $I$.

  Then $h_0$ and $k_0$ correspond to the intervals
  \[ I_{h_0} = (0,\infty) \quad \mbox{ and } \quad
    I_{k_0} = (-1,1).\]
  Their complements are
  \[ I_{h_0}' = (-\infty,0) \quad \mbox{ and }\quad
    I_{k_0}' = (1,\infty) \cup \{ \infty \} \cup
    (-\infty, -1).\]

  Orthogonal pairs of Euler elements are associated to
  ``orthogonal pairs'' of intervals  and therefore conjugate to one
  of the pairs
  \[ (I_{h_0}, I_{k_0}) \quad \mbox{ or } \quad (I_{h_0}, I_{k_0}').\]
  More generally, an interval corresponds to an Euler element
  orthogonal to $h_0$ if and only it is of the form
  $I_a := (-a,a)$ for some $a > 0$, or its complement $I_a'$.
\medskip

\nin (b) We now consider the conformal compactification
  $M^{1,1} \cong \bS^1 \times \bS^1$ of $2$-dimensional Minkowski space
  $\R^{1,1}$, where the product structure corresponds to
  lightray coordinates
  (see Subsection~\ref{sect:12ssconf}). We write
  $\cD_h \subeq M^{1,1}$ for the positivity region
  (diamond) corresponding to the Euler element $h_{-1,2}
  = (h_0, h_0) \in \so_{2,2}(\R)$.
  Then
  \[ \cD_{(h_0,h_0)} = V_+ \]
  is the positive light cone and 
  \[ \cD_{(h_0,-h_0)} = W_R, \quad \cD_{(-h_0,h_0)} = W_L\]
  are the right/left Rindler wedges in $\R^{1,1}$. In lightray coordinates,
  this turns into
  \[ \cD_{(h_0,h_0)} = \R_+ \times \R_+=V_+, \qquad
    \cD_{(h_0,-h_0)} = \R_+ \times \R_- = W_R, \quad
    \cD_{(-h_0,h_0)} = \R_- \times \R_+ = W_L\]
(cf.~Figures \ref{fig:1} and \ref{fig:2}).
  Likewise
  \[ \cD_{(k_0,k_0)} = I_{k_0} \times I_{k_0}, \quad
     \cD_{(k_0,-k_0)} =  I_{k_0} \times I_{k_0}', \quad
     \cD_{(-k_0,k_0)} = I_{k_0}' \times I_{k_0}, \quad 
     \cD_{(-k_0,-k_0)} = I_{k_0}' \times I_{k_0}'\] 
(cf.~Figure~\ref{fig:4}).

 On Minkowski space in chiral coordinates $(u_+,u_-)$
 we have the following geometric actions of the one parameter groups generated by the above Euler elements:
\begin{itemize}
\item $\exp(t(h_0,h_0))(u_+, u_-)=(e^tu_+,e^tu_-)$ 
  is the one parameter group of dilations in $\RR^{1,1}$
\item $\exp(t(-h_0,h_0))(u_+,u_-)=(e^{-t}u_+,e^tu_-)$
  is the one parameter group of boosts having $W_R$ as positivity region
\item $\exp(t(k_0,k_0))(u_+,u_-)
=\left(\frac{\cosh(t/2)u_++\sinh(t/2)}{\sinh(t/2)u_++\cosh(t/2)} ,
\frac{\cosh(t/2)u_-+\sinh(t/2)}{\sinh(t/2)u_-+\cosh(t/2)}\right)$
are the conformal boosts fixing $$O=\{(x_0,x_1)\in \RR^{1,1}:|x_0|+|x_1|<1\}
=\{(u_+,u_-):| u_+|<1, |u_-|<1\}.$$
\end{itemize}

We further note that $((h_0,h_0), (k_0,k_0))$ is a positive timelike
  orthogonal pair of Euler elements generating $\mathfrak{mob}
  \subeq \so_{2,2}(\R)$, and $\tilde U|_{\tilde\Mob}$ satisfies the spectrum condition. Moreover 
  \[ [\pm(h_0,-h_0), (k_0,k_0)]\notin\pm C \quad \mbox{ and } \quad
    [(-h_0,-h_0), (k_0,k_0)]\in -C.\] 
From Lemma~\ref{lem:lemc1} and the subsequent discussion it follows that
  \[ \zeta_{(h_0, h_0), (k_0, k_0)} = (1,0) \quad \mbox{ and } \quad
    \zeta_{(h_0, h_0), (k_0, -k_0)} = (0,1).
  \]
\end{remark}

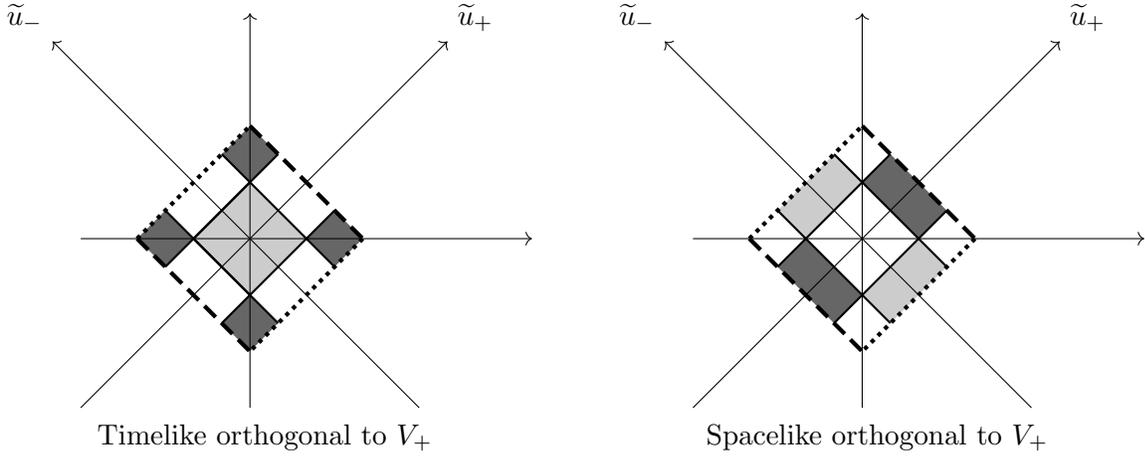
\begin{figure}[ht]
\centering
\begin{minipage}{0.48\textwidth}
\centering
\begin{tikzpicture}[scale=0.75]
        \draw [->] (-3,0) --(5,0);
        \draw [->] (0,-3)--(0,4);
         \draw [ultra thick, dotted]  (-2,0)-- (0,2);
        \draw [ultra thick, dash pattern=on 6pt off 3pt]  (0,2)-- (2,0);
        \draw [ultra thick, dash pattern=on 6pt off 3pt]  (-2,0)-- (0,-2);
        \draw [ultra thick, dotted] (2,0)-- (0,-2);

        \draw [thick] (1,0)-- (0,1);
      \draw [thick] (1,0)-- (0,-1);
        \draw [thick] (0,-1)-- (-1,0);
        \draw [thick] (-1,0)-- (0,1);

        \draw [thick] (1,0)-- (1.5,0.5);
      \draw [thick] (1,0)-- (1.5,-0.5);
        \draw [thick] (-1,0)-- (-1.5,0.5);
      \draw [thick] (-1,0)-- (-1.5,-0.5);      
        
                \draw [thick] (0,1)-- (0.5,1.5);
      \draw [thick] (0,1)-- (-0.5,1.5);
                \draw [thick] (0,-1)-- (0.5,-1.5);
      \draw [thick] (0,-1)-- (-0.5,-1.5);

        \draw [thick] (0,-1)-- (-1,0);
        \draw [thick] (-1,0)-- (0,1);

        \draw [thin,->] (-3,-3) -- (3.5,3.5) node [above right] {$\tilde u_+$};
        \draw [thin,->] (3,-3) -- (-3.5,3.5) node [above left] {$\tilde u_-$};

        \fill [color=black,opacity=0.2]
               (1,0) -- (0,1) -- (-1,0) --(0,-1);
        \fill [color=black,opacity=0.6]
               (1,0) -- (1.5,0.5) -- (2,0) -- (1.5,-0.5);
        \fill [color=black,opacity=0.6]
               (-1,0) -- (-1.5,-0.5) -- (-2,0) -- (-1.5,0.5);
        \fill [color=black,opacity=0.6]
               (0,1) -- (0.5,1.5) -- (0,2) -- (-0.5,1.5);
        \fill [color=black,opacity=0.6]
               (0,-1) -- (-0.5,-1.5) -- (0,-2) -- (0.5,-1.5);

\end{tikzpicture}
Timelike orthogonal to $V_+$
\end{minipage}
\hfill
\begin{minipage}{0.48\textwidth}
\centering
\begin{tikzpicture}[scale=0.75]
        \draw [->] (-3,0) --(5,0);
        \draw [->] (0,-3)--(0,4);
        \draw [ultra thick, dotted]  (-2,0)-- (0,2);
        \draw [ultra thick, dash pattern=on 6pt off 3pt]  (0,2)-- (2,0);
        \draw [ultra thick, dash pattern=on 6pt off 3pt]  (-2,0)-- (0,-2);
        \draw [ultra thick, dotted] (2,0)-- (0,-2);

        \draw [thick] (1,0)-- (1.5,0.5);
        \draw [thick] (0,1)-- (0.5,1.5);
        \draw [thick] (0,1)-- (1,0);

        \draw [thick] (-1,0)-- (-1.5,-0.5);
        \draw [thick] (0,-1)-- (-0.5,-1.5);
        \draw [thick] (0,-1)-- (-1,0);

        \draw [thick] (-1,0)-- (-1.5,0.5);
        \draw [thick] (0,1)-- (-0.5,1.5);
        \draw [thick] (-1,0)-- (0,1);

        \draw [thick] (1,0)-- (1.5,-0.5);
        \draw [thick] (0,-1)-- (0.5,-1.5);
        \draw [thick] (1,0)-- (0,-1);

        \draw [thin,->] (-3,-3) -- (3.5,3.5) node [above right] {$\tilde u_+$};
        \draw [thin,->] (3,-3) -- (-3.5,3.5) node [above left] {$\tilde u_-$};

        \fill [color=black,opacity=0.6]
               (1,0) -- (1.5,0.5) -- (0.5,1.5) --(0,1);
        \fill [color=black,opacity=0.6]
               (-1,0) -- (-1.5,-0.5) -- (-0.5,-1.5) --(0,-1);

        \fill [color=black,opacity=0.2]
               (-1,0) -- (-1.5,0.5) -- (-0.5,1.5) --(0,1);
        \fill [color=black,opacity=0.2]
               (0,-1) -- (0.5,-1.5) -- (1.5,-0.5) --(1,0);

\end{tikzpicture}
Spacelike orthogonal to $V_+$
\end{minipage}

\caption{$M^{1,1}$ is obtained by identifying the dotted and the dashed edges,
  respectively, of $M_0\subset\tilde{M^{1,1}}$.
  In the first picture, the dark gray region is $\cD_{(-k_0,-k_0)}$ and the light grey region is~$\cD_{(k_0,k_0)}$. In the second picture the dark gray region is $\cD_{(-k_0,k_0)}$ and the light grey region is~$\cD_{(k_0,-k_0)}$. }
\label{fig:4}
\end{figure}

\section{Applications to nets of von Neumann algebras}\label{sect:appvnan}

In this section we explain how the results on nets of standard
subspaces derived in the preceding section can be transcribed
to nets of von Neumann algebras. 

\subsection{Results on nets of von Neumann algebras}
\mlabel{subsec:res-vNa}

We start from the definition of a net of von Neumann algebra on the wedge space, and specify some of its properties. 

\begin{defn}\label{def:vnnet}  
Let $G_{\tau_h} = G \rtimes \{e,\tau_h\}$
  and $\cW_+ = G.W_0 =G.(h,{\tau_h}) \subeq \cG_E(G_{\tau_h})$
  be as above
  and $C \subeq \g$ be a closed convex $\Ad^\eps(G_{\tau_h})$-invariant cone.
    We write $\mathrm{vNA}(\cH)$ for the set of von Neumann subalgebras of 
    $B(\cH)$. 
Let $(U,\cH)$ be a unitary representation of $G$  and 
\begin{equation}
  \label{eq:netb}
\cA \colon \cW_+ \to \mathrm{vNA}(\cH) 
\end{equation}
be a map, also called a {\it net of von Neumann algebras}. 

We consider the following properties: 
\begin{itemize}
\item[\rm(HKA1)] {\bf Isotony:} $\cA(W_1) \subeq \cA(W_2)$ for $W_1 \leq W_2$. 
\item[\rm(HKA2)] {\bf Covariance:} $\cA(gW) = U(g)\cA(W)U(g)^{-1}$ for 
$g \in {G}$, $W \in \cW_+$. 
\item[\rm(HKA3)] {\bf Spectral condition:} 
$ C \subeq C_U := \{ x \in \g \colon -i \partial U(x) \geq 0\}.$ 
\item[\rm(Vac)] {\bf Vacuum Property:} The space $\cH^G$
  of $G$-fixed vectors is $1$-dimensional 
  and contains a unit vector $\Omega$ that is cyclic
  and separating for all algebras $\cA(W)$.
  We then write
  \[ \sH(W) := \sH_{\cA(W),\Omega} := \oline{\cA(W)_{sa}.\Omega}\]
  for the corresponding
standard subspace, where, for a $*$-algebra $\cM$, we put
{\[ \cM_{sa} := \{ M \in \cM \colon M^* = M\}.\] }
\item[\rm(HKA4)]{\bf Central twisted locality:} 
For every $\alpha \in Z(G)^-$  with $W^{'\alpha}\in\cW_+$, there exists a unitary $Z_\alpha\in U(G)'$
satisfying  $Z_\alpha\Omega=\Omega$,
$J_{\sH(W)} Z_\alpha J_{\sH(W)}=Z_\alpha^{-1}, $ and such that 
\begin{equation}\label{eq:tcl2b}
  \cA(W^{'\alpha}) \subseteq Z_\alpha \cA(W)'Z_\alpha^*.
\end{equation} 
\item[\rm(HKA5)] {\bf Bisognano--Wichmann property:} 
$U(\exp(tx)) = \Delta_{\cA(W),\Omega}^{-it/2\pi}$ for 
$W = (x,\sigma) \in \cW_+$ and $t \in \R$.
\item[(HKA6)] \textbf{Central twisted Haag Duality}:
 For every $\alpha \in Z(G)^-$ with $W^{'\alpha}\in\cW_+$, we have
  \[ \cA(W^{'\alpha})=Z_\alpha \cA(W)'Z_\alpha^*  \]
for a unitary $Z_\alpha$ as in {\rm(HKA4)}.
 \item[(HKA7)] \textbf{$G_{\tau_h}$-covariance:} 
For every 
  $\alpha \in Z(G)^-$ such that $W^{'\alpha}\in\cW_+$, there exists an
  extension $U^\alpha$  of $U$ from $G$ to
  an antiunitary representation of $G_{\tau_h}$,   such that 
\begin{equation}
  \label{eq:twitcovarb}
 \cA(g *_\alpha W) = U^\alpha(g) \cA(W) U^\alpha(g)^*\quad \mbox{ for } \quad 
g \in G_{\tau_h},
 \end{equation}
where $*_\alpha$ is the $\alpha$-twisted action of 
$G_{\tau_h}$ on $\cW_+$ defined in \eqref{eq:twistact}. 
\item[(HKA8)] \textbf{Modular reflection:} 
  (HKA7) holds, and the extensions can be defined by
  \begin{equation}\label{eq:HK8eq}
    U^\alpha(\tau_h)= Z_\alpha J_{\sH(W_0)} 
    \end{equation}
and  a unitary $Z_\alpha$ as in {\rm(HKA4)}.
\end{itemize}
\end{defn}

\begin{itemize}
\item[(SSA)] \textbf{Spin--Statistics:} (HKA4) holds and, in addition,
  $Z_\alpha^2 = U(\alpha)$.
  
One can also specify the regularity property for nets of von Neumann algebras: 
\item[\rm(Reg)] \textbf{Regularity:} For some $e$-neighborhood $N \subeq G$, the vector
  $\Omega$ is still cyclic (and obviously separating) for the
  von Neumann algebra
 \[ \cA(W_0)_N := \bigcap_{g \in N} U(g)\cA(W_0) U(g)^{-1}.\]
 \end{itemize}

 We shall denote a \textit{net of von Neumann algebras} 
     by a quadruple $(G,\cW_+, U,\cA)$, 
  consisting of a connected Lie group $G$,
     whose Lie algebra $\fg$ contains an Euler element $h$,
the orbit  $\cW_+=G.W_0  \subeq \cG(G_{\tau_h})$
of the abstract Euler wedge $W_0=(h,\tau_h)$, 
 a unitary representation $U$ of $G$ and a net of von Neumann algebras $\cA$  on a fixed Hilbert space $\cH$ on the set $\cW_+$ of wedges.

 We now apply the analog of the Bisognano--Wichmann Theorem and the the
 Spin Statistics Theorem to nets of von Neumann algebras.

The following result will be a key tool:
  \begin{prop} \mlabel{prop:lo-3-24} {\rm(\cite[Prop.~3.24]{Lo08})}
    Let $\cM$ be a von Neumann algebra with cyclic and separating
    vector $\Omega$ and $\sH_{\cM,\Omega} := \oline{\cM_{\rm sa}\Omega}$.
    Then the following assertions hold:
    \begin{itemize}
    \item[\rm(a)] If $\cN_1, \cN_2 \subeq \cM$ are von Neumann subalgebras
      with $\sH_{\cN_1,\Omega} \subeq  \sH_{\cN_2,\Omega}$, then
      $\cN_1 \subeq \cN_2$.
    \item[\rm(b)] If $\cN \subeq \cM'$ is a von Neumann subalgebra
      with $\sH_{\cN,\Omega} = \sH_{\cM,\Omega}'$, then
      $\cN = \cM'$. 
    \end{itemize}
  \end{prop}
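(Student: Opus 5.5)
The plan is to use Tomita--Takesaki theory to reduce both assertions to the structure of the standard subspace $\sH_{\cM,\Omega}$, together with the fact that a von Neumann algebra with a cyclic vector is recovered from its self-adjoint part applied to $\Omega$.

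\textbf{Part (b).} We compare $\cN$ and $\cM'$ through their standard subspaces.
\begin{itemize}
\item Since $\cN \subeq \cM'$ and $\Omega$ is cyclic for $\cM$, the vector $\Omega$ is separating for $\cM'$, hence for $\cN$.
\item The hypothesis gives $\oline{\cN_{sa}\Omega} = \sH_{\cM,\Omega}'$.
\item Recall $\sH_{\cM,\Omega}' = \sH_{\cM',\Omega}$. This is the standard Tomita--Takesaki fact that the symplectic complement of $\oline{\cM_{sa}\Omega}$ equals $\oline{\cM'_{sa}\Omega}$, since $J$ maps one onto the other.
\item Consequently $\sH_{\cN,\Omega} + i\sH_{\cN,\Omega}$ is dense, so $\Omega$ is cyclic for $\cN$.
\end{itemize}
Thus $\Omega$ is cyclic and separating for $\cN \subeq \cM'$, and both algebras have the same standard subspace, hence the same Tomita operator $S$. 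Then $\Delta$ and $J$ coincide for $\cN$ and $\cM'$. By Tomita's theorem, $\Delta^{it}$ implements automorphisms of $\cM'$ that restrict to automorphisms of $\cN$. Takesaki's theorem on conditional expectations (existence of a $\Omega$-preserving expectation onto an invariant subalgebra) then applies. An alternative is the direct argument: the projection onto $\oline{\cN\Omega}$ is $\1$, so the expectation is the identity and $\cN = \cM'$.

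A more elementary route is also available. Take $x \in \cM'$. Since $x\Omega$ lies in the domain of $S$, it can be approximated in graph norm by $y_n\Omega$ with $y_n \in \cN$. For $a, b \in \cN'$ this gives
\[ \langle x b\Omega, a\Omega\rangle = \lim_n \langle y_n b\Omega, a\Omega\rangle, \]
using $S^*$ on the commutant side. This weak control shows $x$ is affiliated to $\cN'' = \cN$. This closability/graph-norm step is the main obstacle, and I would rely on Takesaki's theorem to avoid it.

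\textbf{Part (a).} Here $\cN_1, \cN_2 \subeq \cM$, so $\Omega$ is separating for both, but possibly not cyclic. Let $P_j$ be the projection onto $\oline{\cN_j\Omega}$, which lies in $\cN_j'$.
\begin{itemize}
\item The inclusion $\sH_{\cN_1,\Omega} \subeq \sH_{\cN_2,\Omega}$ gives $P_1 \leq P_2$.
\item Restrict to $\cK_2 = P_2\cH$. There $\Omega$ is cyclic and separating for $\cN_2 P_2$, since $\Omega$ is separating for $\cN_2$ and $\cN_2 \cong \cN_2P_2$.
\item Form the algebra $\cR$ generated by $\cN_1P_2$ and $\cN_2P_2$. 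Its standard subspace, computed by Tomita theory, coincides with that of $\cN_2P_2$.
\item Apply the same argument as in (b), namely the Takesaki invariance / equal-Tomita-operator argument inside $\cK_2$ with the invariant subalgebra $\cN_2P_2 \subeq \cR$. This forces $\cR = \cN_2P_2$, hence $\cN_1P_2 \subeq \cN_2P_2$.
\item Separation of $\Omega$ for $\cM$ lifts this to $\cN_1 \subeq \cN_2$, because $x \mapsto xP_2$ is injective on $\cM$-elements separated by $\Omega$.
\end{itemize}

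The hard step is ensuring that $\cR$ has $\Omega$ separating. This holds because $\cR \subeq \cM P_2$ and $\Omega$ is separating for $\cM$.
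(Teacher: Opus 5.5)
The paper gives no proof of its own and simply quotes \cite[Prop.~3.24]{Lo08}, so I compare your proposal with the standard argument. Your part (b) is sound. Once $\sH_{\cN,\Omega}=\sH_{\cM,\Omega}'=\sH_{\cM',\Omega}$, the vector $\Omega$ is cyclic and separating for $\cN$, the Tomita operators of $\cN$ and $\cM'$ coincide, and Takesaki's theorem gives $\cN=\cM'$. More directly, the common conjugation $J$ yields $\cN'=J\cN J\subeq J\cM'J=\cM\subeq\cN'$. The display in your ``elementary route'' is not justified as written: using $b\in\cN'$, the limit equals $\langle bx\Omega,a\Omega\rangle$, not $\langle xb\Omega,a\Omega\rangle$. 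You do not need that route, though.

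The genuine gap is in (a), and it has two parts.

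First, $P_2\in\cN_2'$, but you have no a priori reason for $P_2$ to commute with $\cN_1$; that would follow only from the conclusion $\cN_1\subeq\cN_2$. Moreover, in general $P_2\notin\cM'$: for $\cN_2=\C\1$ it is the projection onto $\C\Omega$. From $P_1\le P_2$ you only get $\cN_1\Omega\subeq P_2\cH$, not $\cN_1P_2\cH\subeq P_2\cH$. So ``the algebra generated by $\cN_1P_2$ and $\cN_2P_2$'' on $P_2\cH$, and the inclusion $\mathcal{R}\subeq\cM P_2$, are not defined.

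Second, and more seriously, even for an honest von Neumann algebra $\mathcal{R}$ generated by $\cN_1$ and $\cN_2$, nothing in Tomita theory yields $\sH_{\mathcal{R},\Omega}=\sH_{\cN_2,\Omega}$. The space $\sH_{\mathcal{R},\Omega}$ contains vectors such as $(xy+yx)\Omega$ with $x\in(\cN_1)_{sa}$ and $y\in(\cN_2)_{sa}$, while the hypothesis $\sH_{\cN_1,\Omega}\subeq\sH_{\cN_2,\Omega}$ only controls $x\Omega$. That equality is essentially the whole content of (a), so reducing to the argument of (b) is circular. The separating property of $\Omega$, which you flag as the hard step, is not where the difficulty lies.

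The missing idea is that (a) needs no modular theory at all, only a commutator argument against $\cN_2'$:
\begin{itemize}
\item For $y\in(\cN_2)_{sa}$ and $b\in(\cN_2')_{sa}$ one has $\langle y\Omega,b\Omega\rangle=\langle\Omega,yb\Omega\rangle=\langle b\Omega,y\Omega\rangle\in\R$, so $b\Omega\in\sH_{\cN_2,\Omega}'$.
\item Hence, for $a\in(\cN_1)_{sa}$, the hypothesis $a\Omega\in\sH_{\cN_2,\Omega}$ gives $\langle\Omega,[a,b]\Omega\rangle=0$. By complex linearity in $b$, this holds for all $b\in\cN_2'$.
\item For $c,d\in\cM'\subeq\cN_2'$, which commute with $a\in\cM$, we get $\langle c\Omega,[a,b]d\Omega\rangle=\langle\Omega,[a,c^*bd]\Omega\rangle=0$, because $c^*bd\in\cN_2'$.
\item Since $\cM'\Omega$ is dense, $[a,b]=0$, whence $\cN_1\subeq\cN_2''=\cN_2$.
\end{itemize}
Part (b) then follows at once. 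Apply (a) to $\cM'$, for which $\Omega$ is again cyclic and separating, with $\cN_1:=\cM'$ and $\cN_2:=\cN$, using $\sH_{\cM',\Omega}=\sH_{\cM,\Omega}'=\sH_{\cN,\Omega}$. This duality is the only place where Tomita--Takesaki theory enters.
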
 

\begin{theorem}{\rm(Geometric Bisognano--Wichmann Theorem)}\label{thm:vnaBW}
  Let $\cA \colon \cW_+\to \vNA(\cH)$ be a net of von Neumann algebras
  satisfying {\rm(HKA1)-(HKA4)} and {\rm(Vac)}.
  Assume that the cones $C_\pm=\pm C\cap \fg_{\pm1}$ have
  inner points in $\fg_{\pm1}$,   that $\g_0 = [\g_1, \g_{-1}]$,
  and   that $\ker U$ is discrete.
  Suppose further that $W_0^{'\alpha}\in\cW_+$ for some
    $\alpha \in Z(G)^-$, i.e., that $h$ is symmetric. Then
  the Bisognano--Wichmann property {\rm (HKA5)} and
  twisted Haag duality {\rm(HKA6)}
  hold.
\end{theorem}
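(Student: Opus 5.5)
The plan is to reduce the statement to the Geometric Bisognano--Wichmann Theorem~\ref{thm:1BW} for nets of standard subspaces. We apply it to the net
\[ \sH(W) := \sH_{\cA(W),\Omega} = \oline{\cA(W)_{sa}\Omega}, \]
which is well defined by (Vac). Afterwards we lift the resulting twisted Haag duality back to the algebras using Proposition~\ref{prop:lo-3-24}.

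The first step is to verify (Iso), (Cov), (SC) and (CTL) for $\sH$.

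Isotony is immediate from (HKA1).

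For covariance, $\Omega$ is $G$-fixed, so (HKA2) gives
\[ \cA(gW)_{sa}\Omega = U(g)\cA(W)_{sa}U(g)^{-1}\Omega = U(g)\cA(W)_{sa}\Omega. \]

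(SC) is (HKA3).

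For (CTL), let $Z_\alpha$ be as in (HKA4). Since $Z_\alpha\Omega=\Omega$, (HKA4) gives
\[ \cA(W_0^{'\alpha})_{sa}\Omega \subeq Z_\alpha (\cA(W_0)')_{sa} Z_\alpha^*\Omega = Z_\alpha (\cA(W_0)')_{sa}\Omega. \]
Taking closures yields $\sH(W_0^{'\alpha}) \subeq Z_\alpha \sH_{\cA(W_0)',\Omega}$. By Tomita--Takesaki theory, $\sH_{\cM',\Omega} = \sH_{\cM,\Omega}'$, so this is exactly \eqref{eq:tcl2}. The relation $J Z_\alpha J = Z_\alpha^{-1}$ and $Z_\alpha\in U(G)'$ are assumed in (HKA4).

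Theorem~\ref{thm:1BW} now applies, since its hypotheses on the cones $C_\pm$, on $\g_0=[\g_1,\g_{-1}]$, on discreteness of $\ker U$ and on the symmetry of $h$ are the same as here. It yields (BW) and (CTHD) for $\sH$. Because the modular objects of $\sH_{\cA(W),\Omega}$ coincide with those of $(\cA(W),\Omega)$, (BW) for $\sH(W_0)$ is (HKA5) at $W_0$. For general $W = g.W_0 = (\Ad(g)h, \ldots)$, (HKA5) follows by covariance: $\Delta_{\cA(gW_0),\Omega} = U(g)\Delta_{\cA(W_0),\Omega}U(g)^{-1}$ and $U(g)U(\exp th)U(g)^{-1} = U(\exp t\Ad(g)h)$.

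For (HKA6), set $\cN := Z_\alpha^*\cA(W_0^{'\alpha})Z_\alpha$. By (HKA4), $\cN \subeq \cA(W_0)'$. Using $Z_\alpha\Omega=\Omega$, (CTHD) gives
\[ \sH_{\cN,\Omega} = Z_\alpha^*\sH(W_0^{'\alpha}) = \sH(W_0)' = \sH_{\cA(W_0),\Omega}'. \]
Proposition~\ref{prop:lo-3-24}(b), applied with $\cM = \cA(W_0)$, then gives $\cN = \cA(W_0)'$, which is (HKA6) at $W_0$. Covariance transfers this to all $W$, since $Z_\alpha$ commutes with $U(G)$.

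The main point to watch is the identification $\sH_{\cM',\Omega} = \sH_{\cM,\Omega}'$, together with compatibility of $Z_\alpha$ with $\Omega$ when passing between algebraic and subspace statements. Both are standard once $Z_\alpha\Omega=\Omega$ is used, so I expect no real obstacle beyond bookkeeping. One should also note that (HKA4) is stated for all $W$, whereas (CTL) only requires it at $W_0$, so the reduction only weakens hypotheses.
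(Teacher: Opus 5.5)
Your proposal is correct and follows the paper's proof. You pass to the net $\overline{\cA(W)_{sa}\Omega}$, invoke Theorem~\ref{thm:1BW} to obtain (BW) and (CTHD), and lift twisted Haag duality back to the algebras using $Z_\alpha\Omega=\Omega$, the identity $\sH_{\cM',\Omega}=\sH_{\cM,\Omega}'$ and Proposition~\ref{prop:lo-3-24}(b); your explicit checks of (Iso), (Cov), (SC), (CTL) and the covariance transfer from $W_0$ to general $W$ spell out details that the paper leaves implicit.
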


\begin{proof}   Consider the net of standard subspaces
 \[ \sH_\cA:\cW_+ \to \Stand(\cH), \quad
   \sH_\cA(W) := \sH_{\cA(W),\Omega} = \overline{\cA(W)_{sa}\Omega},\]
{ so that we have for any $W \in \cW_+$ that
 $\Delta_{\sH_\cA(W)}=\Delta_{\cA(W),\Omega}$ and $J_{\sH_\cA(W)}=J_{\cA(W),\Omega}$. 

The net of standard subspaces $(G,\cW_+,U, \sH_\cA)$ satisfies all the assumptions of~Theorem~\ref{thm:1BW}, so that (HKA5) holds, and
\[ \sH_\cA(W_0^{'\alpha})  =Z_\alpha \sH_\cA(W_0)'.\]
In view of $Z_\alpha \Omega=\Omega$, we also have
\[ \sH_{Z_\alpha \cA(W)'Z_\alpha^*,\Omega}
  = Z_\alpha \sH_{\cA(W)',\Omega}
  = Z_\alpha \sH_{\cA(W),\Omega}'
  = Z_\alpha \sH_\cA(W)' = \sH_\cA(W^{'\alpha})
  = \sH_{\cA(W^{'\alpha})}.\]
Therefore 
  \[ \cA(W^{'\alpha})=Z_\alpha \cA(W)'Z_\alpha^*  \]
  follows from Proposition~\ref{prop:lo-3-24}. This is (HKA6). 
}\end{proof}

One can state a regularity theorem for von Neumann algebras in analogy with Theorem~\ref{thm:reg}: 

\begin{theorem}\label{thm:vnareg}{ {\rm(Regularity Theorem for
    von Neumann algebras)} }
  Let $\cA \colon \cW_+\to \vNA(\cH)$ be a net of von Neumann algebras satisfying {\rm(Vac)}.
 Suppose that $C_\pm = \pm C \cap \g_{\pm 1}$ generate  $\fg_{\pm1}$ and that $\ker U$ is discrete. 
  Then the following implications hold:
\begin{enumerate}
\item[{\rm(a)}] {\rm(HKA1)}, {\rm(HKA2)} imply {\rm(Reg)}. 
\item[{\rm (b)}] {\rm(HKA3)}, {\rm(HKA5)} and
  $\g_0 =\RR h+ [\g_1, \g_{-1}]$ imply {\rm(Reg)}.   
\end{enumerate}  
\end{theorem}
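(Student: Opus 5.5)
The plan is to reduce both statements to Theorem~\ref{thm:reg} through the associated net of standard subspaces
\[ \sH_\cA(W) := \sH_{\cA(W),\Omega} = \oline{\cA(W)_{sa}\Omega}, \]
exactly as in the proof of Theorem~\ref{thm:vnaBW}. By (Vac), each $\sH_\cA(W)$ is standard, with $\Delta_{\sH_\cA(W)} = \Delta_{\cA(W),\Omega}$ and $J_{\sH_\cA(W)} = J_{\cA(W),\Omega}$. Since $\cH^G = \C\Omega$, we have $U(g)\Omega = \Omega$. Hence (HKA2) gives
\[ \sH_\cA(gW) = \oline{U(g)\cA(W)_{sa}U(g)^{-1}\Omega} = U(g)\sH_\cA(W), \]
which is (Cov). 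By (HKA1), $W_1\le W_2$ implies $\cA(W_1)_{sa}\subeq \cA(W_2)_{sa}$, and therefore (Iso). Condition (HKA3) is literally (SC), and (HKA5) is literally (BW) for $W_0$.

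The direct route, which I would try first, avoids invoking Theorem~\ref{thm:reg} as a black box and instead copies its semigroup argument at the level of algebras. For (a), (HKA1) and (HKA2) show that $S_{W_0} = \exp(C_+)G^{W_0}\exp(C_-)$ satisfies $U(s)\cA(W_0)U(s)^{-1}\subeq \cA(W_0)$ for all $s\in S_{W_0}$. As in the proof of Theorem~\ref{thm:reg}(a), $S_{W_0}$ has an interior point $s_0$. Choose $N$ with $N^{-1}s_0\subeq S_{W_0}$. Then
\[ U(s_0)\cA(W_0)U(s_0)^{-1}\subeq \cA(W_0)_N, \]
and $\Omega = U(s_0)\Omega$ is cyclic for the left-hand side, hence for $\cA(W_0)_N$. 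Separation is automatic because $\cA(W_0)_N\subeq\cA(W_0)$.

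For (b), Borchers' Theorem~\ref{Borch} applies to the standard subspace $\sH_\cA(W_0)$ because of (SC) and (BW). Together with $\Delta^{it}$-covariance, it shows that $\exp(C_\pm)$ and $\exp(\R h)$ map $\sH_\cA(W_0)$ into itself. I then need the same at the algebra level: $U(g)\cA(W_0)U(g)^{-1}\subeq\cA(W_0)$. For $\exp(\R h)$ this is clear from (HKA5), since modular groups preserve the algebra. For $\exp(C_\pm)$ I would use the algebraic Borchers theorem, or equivalently Proposition~\ref{prop:lo-3-24}(a). Here $\cN := U(g)\cA(W_0)U(g)^{-1}$ has $\sH_{\cN,\Omega} = U(g)\sH_\cA(W_0)\subeq \sH_\cA(W_0)$, but to apply Proposition~\ref{prop:lo-3-24}(a) one needs $\cN\subeq\cA(W_0)$ a priori.

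This is the main obstacle. I would first try the von Neumann algebra version of Borchers' theorem: a positive-energy one-parameter group that fixes $\Omega$ and satisfies the commutation relation with $\Delta^{it}$ maps $\cA(W_0)$ into itself for $t\ge0$. Failing that, I would fall back on (HKA1). In (b), (HKA1) is not assumed, so the Borchers route is the intended one.

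Once the semigroup $S_\cA := \{g\colon U(g)\cA(W_0)U(g)^{-1}\subeq\cA(W_0)\}$ is known to contain $\exp(C_+)\exp(\R h)\exp(C_-)$, I argue as in Theorem~\ref{thm:reg}(b). Its Lie wedge contains $C_\pm$ and $h$. Since $\g_0=\R h+[\g_1,\g_{-1}]$, the Lie wedge generates $\g$, so by \cite[Thm.~3.8]{HN93} $S_\cA$ has interior points. The interior-point argument of part (a) then yields (Reg).
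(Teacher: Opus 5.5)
Your direct route is the paper's proof. Part (a) is identical: an interior point $s_0$ of $S_{W_0}$, the inclusion $U(s_0)\cA(W_0)U(s_0)^{-1}\subeq\cA(W_0)_N$, and cyclicity of $\Omega=U(s_0)\Omega$. In (b) the paper handles the step you call the main obstacle by citing Borchers' theorem for von Neumann algebras \cite[Thm.~6.3.1]{Lo08b}, which gives $\exp(C_+)\exp(\R h)\exp(C_-)\subeq S_{\cA(W_0)}$, and then runs the Lie wedge argument of Theorem~\ref{thm:reg}(b) exactly as you do. You were right to drop the opening reduction to Theorem~\ref{thm:reg} via $\sH_\cA$. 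As the paper remarks after the theorem, $\oline{(\cA(W_0)_N)_{sa}\Omega}$ can be strictly smaller than $\bigcap_{g\in N}U(g)\sH_\cA(W_0)$.

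Your worry about (b) is nevertheless justified. This step is a genuine gap, in your proposal and equally in the paper's appeal to Borchers' theorem. What is needed is the converse direction at the level of algebras. Positivity of the generator of $U(\exp ty)$, $y\in C_+$, together with $\Delta_{\cA(W_0),\Omega}^{-is/2\pi}U(\exp ty)\Delta_{\cA(W_0),\Omega}^{is/2\pi}=U(\exp(e^s t y))$, would have to force $U(\exp ty)\cA(W_0)U(\exp ty)^{-1}\subeq\cA(W_0)$ for $t\ge 0$. But these hypotheses involve $\cA(W_0)$ only through its modular objects, and those do not determine the algebra.

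Concretely, let $Z$ be a unitary with $Z\Omega=\Omega$ that commutes with $\Delta_{\cA(W_0),\Omega}$ and $J_{\cA(W_0),\Omega}$. Then $Z\cA(W_0)Z^*$ has the same modular objects, so replacing $\cA(W_0)$ by it preserves (Vac), (HKA3) and (HKA5). Now take a second-quantized BGL net, and let $Z$ act only on the one-particle space, by a unitary that commutes with the one-particle modular data but not with the one-particle translations. A direct computation of field commutators on finite-particle vectors shows that $U(\exp ty)$ does not map $Z\cA(W_0)Z^*$ into itself.

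Hence neither Theorem~\ref{Borch}(b) combined with Proposition~\ref{prop:lo-3-24}(a), nor any converse of Borchers' theorem, can produce the inclusion from (Vac), (HKA3) and (HKA5) alone. What is really missing is an a priori inclusion $U(g)\cA(W_0)U(g)^{-1}\subeq\cA(W_0)$ for $g\in\exp(C_\pm)$. That is exactly what your fallback to isotony and covariance would provide, and (b) does not assume it.
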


\begin{proof}
  \nin(a) As in the proof of the Regularity Theorem \ref{thm:reg}, $S_{W_0}$
  has an interior point $s_0$ and there exists an $e$-neighborhood
  $N\subseteq G$ such that $N^{-1}s_0\subseteq S_{W_0}$.
  Then, by ${\rm(HKA1)}$ and ${\rm(HKA2)}$,
  we have that $\Ad(U(g^{-1}s_0)) (\cA(W_0))\subseteq \cA(W_0)$, hence
  \[ \Ad(U(s_0)) (\cA(W_0))\subseteq \bigcap_{s\in N}\Ad(U(s))(\cA(W_0)).\]
  Thus (Reg) follows from (Vac).

\nin(b) If (HKA3) and (HKA5) are satisfied, then
Borchers' Theorem for von Neumann algebras (\cite[Thm.~6.3.1]{Lo08b})
implies that
\[ \exp(C_+)\exp(\RR h)\exp(C_-)\subseteq S_{\cA(W_0)}
:= \{ g \in G \colon \Ad(U(g)) \cA(W_0) \subeq \cA(W_0) \}.\] 
Then, as in the proof of Theorem  \ref{thm:reg}(b),
$S_{\cA(W_0)}$ has interior points, and (Reg) follows
as in the second part of the  proof of (a). 
\end{proof}

Theorem \ref{thm:vnareg} can not be deduced directly from Theorem \ref{thm:reg}, applied to the net of standard subspaces
$\sH_{\cA}(W)=\overline{\cA(W)_{sa}\Omega}$. Indeed, if $\cB$ is a general set of wedges, the inclusion
\[ \overline{\left(\bigcap_{W\in \cB}\cA(W)_{sa}\right)\Omega}\subseteq\bigcap_{W\in \cB}(\overline{\cA(W)_{sa}\Omega})=\bigcap_{W\in \cB}\sH_{\cA}(W)\]
can be strict (cf.~the discussion and the examples in
\cite[Sects.~3.1 and~6.1]{LM24}).

\begin{thm}{\rm(Spin--Statistics Theorem for von Neumann algebra nets)} \label{thm:spst2}   Let
  \[ \cA \colon \cW_+\to \vNA(\cH) \]  be a net of von Neumann algebras
    for a unitary representation $(U,\cH)$ of $G$, satisfying
    {\rm(HKA1)-(HKA3), (Vac), (Reg)} and {\rm(HKA5)}.
Assume that $U$ has discrete kernel and that $h$ is symmetric. 
Then $\cA$ also satisfies {\rm(HKA4)}, {\rm(HKA6)-(HKA8)}
and {\rm(SSA)}, and $\sH_\cA$ is the BGL net of the
  antiunitary extension of $U$ specified by
  $\tilde U(\tau_h) = J_{\cA(W_0),\Omega}$. 
  \end{thm}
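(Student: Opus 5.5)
The plan is to reduce to the standard-subspace Spin--Statistics Theorem~\ref{thm:spst1}, applied to the net $\sH_\cA(W) := \oline{\cA(W)_{sa}\Omega}$, and then to lift the conclusions back to the algebras with Proposition~\ref{prop:lo-3-24}.

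\emph{Step 1: the standard-subspace net is regular.} (Vac) guarantees that each $\sH_\cA(W)$ is standard, with $\Delta_{\sH_\cA(W)} = \Delta_{\cA(W),\Omega}$ and $J_{\sH_\cA(W)} = J_{\cA(W),\Omega}$.

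Covariance (Cov) follows from (HKA2) together with $U(g)\Omega = \Omega$. Isotony follows from (HKA1), and (SC) is (HKA3). (BW) is exactly (HKA5) at $W_0$.

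For (Reg), let $N$ be the neighborhood from the von Neumann algebra regularity. Then
\[ \oline{(\cA(W_0)_N)_{sa}\Omega} \subeq \bigcap_{g\in N} U(g)\sH_\cA(W_0), \]
because $U(g)\cA(W_0)U(g)^{-1}$ applied to $\Omega$ gives $U(g)\cA(W_0)\Omega$. Since $\Omega$ is cyclic for $\cA(W_0)_N$, the left-hand side is standard. Hence the intersection is cyclic, and (Reg) holds for $\sH_\cA$.

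This direction is the easy one, in contrast to the strict-inclusion warning after Theorem~\ref{thm:vnareg}.

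\emph{Step 2: apply Theorem~\ref{thm:spst1}.} Since $\ker U$ is discrete and $h$ is symmetric, that theorem yields:
\begin{itemize}
\item $\sH_\cA = \sH_{\tilde U}$, the BGL net with $\tilde U(\tau_h) = J_{\cA(W_0),\Omega}$;
\item (CTL), (CTHD), (HK7), (HK8) and (SS), with unitaries $Z_\alpha \in U(G)'$ satisfying $J Z_\alpha J = Z_\alpha^{-1}$ and $Z_\alpha^2 = U(\alpha)$.
\end{itemize}
By \cite[Prop.~4.16]{MN21} these $Z_\alpha$ are constructed from $\tilde U$, essentially as square roots of $U(\alpha)$ inside the commutant.

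We still need $Z_\alpha\Omega = \Omega$. Since $\dim\cH^G = 1$ and $Z_\alpha$ commutes with $U(G)$, the vector $Z_\alpha\Omega$ is a multiple $c\,\Omega$ with $|c| = 1$. Then $J\Omega = \Omega$ and $JZ_\alpha J = Z_\alpha^{-1}$ give $\oline c = c^{-1}$, which carries no information. Instead, use $Z_\alpha^2 = U(\alpha)$ and $U(\alpha)\Omega = \Omega$ to get $c^2 = 1$. If $c = -1$, replace $Z_\alpha$ by $-Z_\alpha$. This preserves all relations, including $Z_\alpha^2 = U(\alpha)$ and $\sH(W^{'\alpha}) = Z_\alpha\sH(W)'$, because a real standard subspace is invariant under $-1$.

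\emph{Step 3: transfer to the algebras.} For (HKA4) and (HKA6), set $\cM := \cA(W^{'\alpha})$ and $\cN := Z_\alpha\cA(W)'Z_\alpha^*$. Because $Z_\alpha\Omega = \Omega$, $\Omega$ is cyclic and separating for $\cN$, and
\[ \sH_{\cN,\Omega} = Z_\alpha\sH_\cA(W)' = \sH_\cA(W^{'\alpha}) = \sH_{\cM,\Omega}. \]

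The main obstacle is concluding $\cM = \cN$ from this equality of standard subspaces, since Proposition~\ref{prop:lo-3-24} requires one algebra to sit inside the other's commutant or inside a common algebra. I would argue as follows.
\begin{itemize}
\item Equal standard subspaces have equal modular data, so $\cM$ and $\cN$ have the same modular conjugation $J_0$.
\item Tomita--Takesaki then gives $\cM' = J_0\cM J_0$ and $\cN' = J_0\cN J_0$.
\item If this does not close the argument, I would instead apply Proposition~\ref{prop:lo-3-24}(b) to the pair $\cA(W)$ and $Z_\alpha^*\cA(W^{'\alpha})Z_\alpha$. This requires the locality inclusion, so I would derive it first from the BGL/second-quantisation structure or from the commuting modular groups (both nets have modular group $U(\exp(-2\pi t h))$ by BW) via a Takesaki-type uniqueness argument.
\end{itemize}

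\emph{Step 4: remaining properties.} (HKA7) and (HKA8) follow by conjugating with $\tilde U^\alpha(\tau_h) = Z_\alpha J$. Tomita's theorem gives $J\cA(W_0)J = \cA(W_0)'$, and then (HKA6) identifies $\cA(W_0)'$ with the algebra of the twisted complement. (SSA) is (SS) with the adjusted $Z_\alpha$.
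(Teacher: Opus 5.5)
\textbf{Where your proposal matches the paper.} Your Steps 1, 2 and 4 follow the paper's proof. You pass to $\sH_\cA$, check regularity and (BW), apply Theorem~\ref{thm:spst1}, and normalize $Z_\alpha\Omega=\Omega$ via $c^2=1$. You then obtain (HKA7)--(HKA8) from $\Ad(Z_\alpha J_{\sH(W_0)})\cA(W_0)=\cA(W_0^{'\alpha})$.

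\textbf{The gap.} It is the one you flag in Step 3: nothing in the argument produces the operator-level inclusion $\cA(W^{'\alpha})\subseteq Z_\alpha\cA(W)'Z_\alpha^*$, i.e.\ (HKA4). Your derivations of (HKA6), (HKA7) and (HKA8) all rest on it. The identity $\sH_\cA(W^{'\alpha})=Z_\alpha\sH_\cA(W)'$ is a statement about vectors: it encodes $\Im\langle a\Omega,Z_\alpha b\Omega\rangle=0$ for $a\in\cA(W^{'\alpha})_{sa}$ and $b\in\cA(W)_{sa}$. It says nothing about commutators, and neither of your repairs bridges this.
\begin{itemize}
\item Equal standard subspaces, hence equal $J$ and $\Delta$, do not determine the algebra. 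Take $\cM=L^\infty[0,1]$ on $L^2[0,1]$ with $\Omega=1$. Let $V$ be the complexification of a real orthogonal map fixing $1$ that is not a Koopman operator $f\mapsto f\circ T^{-1}$. Then $V\cM V^*\neq\cM$, yet it has the same $\Omega$, $J$, $\Delta$ and standard subspace $L^2([0,1],\R)$.
\item The BGL description concerns only $\sH_\cA$. The algebras $\cA(W)$ are not assumed to be second quantization algebras, so it gives no information on operator commutators.
\item Takesaki's theorem, like Proposition~\ref{prop:lo-3-24}, presupposes an inclusion of algebras, which is exactly what is missing.
\end{itemize}

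\textbf{The paper's proof has the same hole.} Directly after normalizing $Z_\alpha\Omega=\Omega$, it states that (HKA4) is obtained. It then derives (HKA6) from Proposition~\ref{prop:lo-3-24}(b), whose hypothesis $\cN\subseteq\cM'$ is (HKA4) itself. So your diagnosis is accurate. With the tools available, the algebra-level inclusion has to enter as an assumption, as it does in Theorem~\ref{thm:vnaBW} and Proposition~\ref{prop:iden2}.

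\textbf{How to close it under that assumption.} Suppose (HKA4) holds for some admissible $Z_\alpha$, and write $J:=J_{\cA(W_0),\Omega}$. Then your argument closes in three steps.
\begin{itemize}
\item Lemma~\ref{lem:inc}, using (HKA5) and covariance, upgrades $\sH_\cA(W^{'\alpha})\subseteq Z_\alpha\sH_\cA(W)'$ to equality.
\item Proposition~\ref{prop:lo-3-24}(b) then gives (HKA6).
\item The modular conjugation of $Z_\alpha\sH_\cA(W_0)'$ is $Z_\alpha J Z_\alpha^*=Z_\alpha^2J$, and the BGL conjugation of $\sH_\cA(W_0^{'\alpha})$ is $U(\alpha)J$. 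Comparing them gives (SSA).
\end{itemize}
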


  \begin{proof}  Consider the net of standard subspace on $\cW_+$, defined by
    $\sH(W) := \overline{\cA(W)_{sa}\Omega}$.
    Then $\sH$ is a regular net satisfying (BW).
 Hence Theorem~\ref{thm:spst1} also implies (CTL), (HK6)-(HK8)
  and,     whenever $W^{'\alpha} \in \cW_+$, the relation
  $Z_\alpha^2=U(\alpha)$ follows from (SS).
    As $Z_\alpha$ commutes with $U(G)$ and $\cH^G = \C \Omega$ is
    one-dimensional by~(Vac),
    we have $Z_\alpha \Omega = c \Omega$ for some $c \in \T$.
    Since $U(\alpha)$ fixes $\Omega$, we obtain $c^2 = 1$, hence
    $c \in \{ \pm 1\}$. If $c = 1$, then $Z_\alpha$ fixes~$\Omega$,
    and otherwise we replace $Z_\alpha$ by $-Z_\alpha$, which fixes $\Omega$.
    We therefore obtain (HKA4) and thus  (SSA) {follows from (SS). }
    Further, (HKA6) follows from     $\sH(W)'=Z_\alpha \sH(W^{'\alpha})$ and
    Proposition~\ref{prop:lo-3-24}. 

    { To see that (HKA7/8) follow form (HKA2), (HKA6)
      and Theorem \ref{thm:ee},
      recall from Theorem~\ref{thm:spst1} that
      $U$ extends to an antiunitary  representation $U$
    of $G_{\tau_h}$ by
    \[ U(\tau_h)=J_{\cA(W_0),\Omega}=J_{\sH(W_0)}, \quad \mbox{ and that} \quad
      Z_\alpha U(\tau_h) \sH(W_0) = Z_\alpha\sH(W_0)' = \sH(W_0^{'\alpha}) \]
follows from (CTHD).
From  Proposition~\ref{prop:lo-3-24}, $Z_{\alpha}\Omega=\Omega$
and (HKA4), we thus obtain 
    \[ \Ad( Z_\alpha J_{\sH(W_0)})(\cA(W_0))=\cA(W_0^{'\alpha}).\]
For the extension $U^\alpha$ of $U\res_G$ by 
\[ U^\alpha(\tau_h)=Z_\alpha J_{\cA(W_0),\Omega} = Z_\alpha U(\tau_h),\]
we now obtain from (HKA2) and $Z_\alpha \in U(G)'$
that (HKA7) and  (HKA8) also hold.}
  \end{proof}

\subsection{Nets of von Neumann algebras on causal spacetimes}
 
Now we describe some applications of the geometric Spin--Statistics and the Bisognano--Wichmann theorems to nets of von Neumann algebras on $1$+$2$-dimensional Minkowski space, on the conformal circle, and on the $1$+$1$-dimensional
Einstein universe. This relates our results to the existing literature.

\subsubsection{On 1+3-dimensional  Minkowski space} 

In \cite[Thm.~2.11]{GL95} the authors considered an isotone net
$(\cA(\cO))_{\cO \subeq \R^{1,3}}$ of
von Neumann algebras on open subsets of Minkowski space $\RR^{1,3}$,
in a fixed Hilbert space $\cH$.
They assume that  there exists a cyclic vector $\Omega\in \cH$  for the algebras associated to {\it spacelike cones,} i.e., spacelike, convex, causally complete, pointed cones.
They require  \textit{graded commutation relations} of local algebras for some unitary involution $\Gamma$. 
This means that there exists a grading operator $\Gamma$ (a unitary involution)
  defining an involution $\Ad(\Gamma)$ on any  local algebra $\cA(\cO)$.
  This splits the local algebras $\cA(\cO)$
  into two subspaces \
 \begin{equation}\label{eq:ncc}
   \cA(\cO)_\pm=\{a\in\cA(\cO): \Gamma a\Gamma=\pm a\}. 
\end{equation}
 They assume that the commutation relations depend on the parity, namely $\cA(\cO)_+$ and $\cA(\cO')$ commute and $\cA(\cO)_-$ and $\cA(\cO')_-$ anti-commute.
  This can be expressed in terms of the Lie super bracket as follows:  
Endow $B(\cH)$ with the grading defined by $\Ad(\Gamma)$ and
  consider the  {\it Lie super bracket}
\begin{equation}\label{eq:ls1}
 [a,b]_s := ab - (-1)^{|a|\cdot |b|}ba, 
\end{equation}
where $|a| \in \{0,1\}$ is defined by $\Gamma a \Gamma = (-1)^{|a|} a.$ 
Then we  have
\begin{equation}\label{eq:ls2}
 [\cA(\cO), \cA(\cO')]_s = \{0\}.
\end{equation}
Guido and Longo 
further assume that the net satisfies {\it modular covariance}, namely
\begin{equation}\label{eq:modcov}
\Ad(\Delta_{\cA(W),\Omega}^{it}) \left(\cA(\cO)\right)=\cA(\Lambda_W(-2\pi t)\cO), 
\end{equation}
where $\cO$ is any open subset of  Minkowski space,
$W$ is a wedge region (cf.~Example~\ref{ex:mink}) and
  $\Lambda_W(t)$ the corresponding boost group.
  This property is weaker than the Bisognano--Wichmann property
{since it does not require the existence of a
    unitary positive energy representation of the Poincar\'e group
    for which the net is covariant. }

Let $\tilde G=\tilde \cP_+^\up$ be the double (=universal) covering of the Poincar\'e group $G=\cP_+^\up$, $h\in\cE(\fg)$, let $\tau_h$ be a corresponding 
involution on $\tilde G$ and let $U$ be a unitary representation of $\tilde G$.  
Then any $\tilde G$-orbit of abstract Euler couples in
$\cG_E(\tilde G_{\tau_h})$ is isomorphic to the $\tilde G$-orbit of wedge regions in Minkowski spacetime,
where  the $\tilde G$-action factors through the action of $G$ (cf.~\cite[Ex.~4.21]{MN21}).  The causal complement  $W'$ of a wedge region $W$
in Minkowski space is  associated with the abstract wedge $W^{'\tilde \rho(2\pi)}$  because the $\tilde G$-orbit of $W$ contains the central complement $W^{'\tilde \rho(2\pi)}$ but not~$W'$ (Remark~\ref{rem:z2pi1oh}(a)).
Here  $\tilde \rho$ is the lift to $\tilde G$ of a one-parameter group of rotations in a space-like plane. Note that
$\tilde \rho(2\pi)$  does not depend on the chosen plane.
With this identification, we can relate
nets on Minkowski spacetime on wedge regions to nets on abstract Euler wedges.

In \cite{GL95} the authors prove that the modular groups of the wedge algebras $\cA(W)$ {define} a representation of $\tilde G$ with positive energy,
fixing $\Omega$, and they verified
(HKA2), (HKA3), (Vac), (SSA), (HKA5)-(HKA8). In particular, by (SSA), we have
$Z_{\tilde \rho(2\pi)}^2=\Gamma=U(\tilde \rho(2\pi))$. 
This identity determines the unitary operator implementing the grading, as well as the commutation relations of the wedge algebras.

From our perspective, consider a unitary representation $U$ for
  which the net \break 
  $\cA \colon \cW_+ \to \vNA(\cH)$  of von Neumann algebras on $\cH$
  is covariant
  and satisfies  the hypotheses of Theorem \ref{thm:spst2}.
Since $Z(\tilde G)=\{\tilde \rho(2\pi), e\}$,  Theorem \ref{thm:spst2} implies 
    \begin{equation}
      \label{eq:ssrel}
      Z_{\tilde \rho(2\pi)}^2=U(\tilde \rho(2\pi)), 
      \quad \mbox{ where } \quad U(\tilde \rho(2\pi))^2=\textbf{1}.
  \end{equation}
Setting $\Gamma:= U(\tilde \rho(2\pi))$, the wedge algebras decompose as in \eqref{eq:ncc}
  and satisfy the graded commutation relations \eqref{eq:ls2}. 
  In our picture we do not assume any commutation relation  on wedge algebras, 
  but we require Poincar\'e covariance and the Bisognano--Wichmann property. In \cite{GL95} the authors assume the  modular covariance condition
\eqref{eq:modcov}, but require graded commutation relations.

  We further point out that, if $U$ factors through $G$, the quotient map $q:\tilde G\rightarrow G$ induces a  bijection $q^\cW \colon \cW_+^{\tilde G} \to \cW_+^{G}$  of homogeneous $\tilde G$-spaces. By Theorem \ref{thm:spst2}
  we conclude that $\Gamma $ can be chosen to be $\textbf{1}$,
    so that we may choose $Z = \1$, 
  and the  commutation relations are verified with the usual commutator
  bracket.
\medskip

\subsubsection{On the circle}

Let $G= \PSL_2(\RR)$ be the M\"obius group and $h\in\cE(\fg)$,
and consider the universal and the double covering,
$\tilde G$ and $G^{(2)}$ of $G$. The Euler involution $\tau_h^\g$
integrates to involutions $\tilde\tau_h$ on $\tilde G$ and
$\tau_h^{(2)}$ on $G^{(2)}$  and $\tau_h$ on $G$.
Then the covering maps
 \[ q^{(2)}:\tilde G_{\tilde \tau_h}\rightarrow G^{(2)}_{\tau_h^{(2)}}, 
  \quad 
  q:\tilde G_{\tau_h} \rightarrow G_{\tau_h}   \quad \mbox{ and } \quad \uline q:G^{(2)}_{\tau_h^{(2)}}\rightarrow G_{\tau_h} \]
satisfy $q^{(2)}(\tilde \tau_h)=\tau_h^{(2)}$ and  $q(\tilde \tau_h)=\tau_h$.
The group $\tilde G_{\tilde \tau_h}$ acts on $\bS^1$ by $g.x=q(g)x$ with $g\in \tilde G_{\tilde \tau_h}$, and analogously $\tilde G^{(2)}_{\tau_h^{(2)}}$
acts on $\bS^1$ through $\uline q$. The quotient maps induce surjections  of homogeneous $\tilde G$-spaces $\cW_+^{\tilde{G}} \to \cW_+^{ G^{(2)}}$ and $\cW_+^{\tilde{G}} \to \cW_+^{ G}$. The natural projection $\cW_+^{G^{(2)}} \to \cW_+^{G}$ 
  is bijective, so that both homogeneous spaces are isomorphic to
  the set of non-dense intervals on $\bS^1$, cf.~ Example \ref{ex:mob}, Remark \ref{rmk:quot} and \cite[Ex.~2.10(e)]{MN21}.

  In \cite[Thm.~2.1, Prop.~2.4]{DLR07} the authors start with
a net of von Neumann algebras $\cA(I)\subset B(\cH)$ on
the set $\cI(\bS^1)$
of non-dense open intervals in the circle $\bS^1$. By the previous paragraph one can identify $\cI(\bS^1)$
with the set $\cW_+^{G^{(2)}}$ of abstract wedges.
They call the net $\cA$ {\it a conformal net on $\bS^1$}
if it satisfies isotony, $\tilde G$-covariance, namely
\[ U(g)\cA(I)U(g)^*=\cA(q(g)I),\quad g \in \tilde G, \]
(HKA3), existence of a $\tilde G$-fixed vacuum vector $\Omega\in\cH$,
cyclicity of the vacuum vector for the algebra
$\bigvee_{I\in\cI}\cA(I):=\left(\bigcup_{I\in\cI}\cA(I)\right)''$, 
and the separating property of the vacuum vector for $\bigcap_{I\in\cI}\cA(I)$. In this setting, they
prove that $\Omega$ is cyclic and separating for all local algebras~$\cA(I)$
(Reeh--Schlieder property), the modular covariance property,
the existence of an 
antiunitary covariant extension of $U$ to  $\tilde G_{\tilde \tau_h}$, and
\[ U(\tilde \rho(2\pi))^2=\textbf{1}, \] where $\tilde \rho(\theta)$ is the lift to $\tilde G $ of the one parameter group of rigid
rotations in $G$. Here the modular covariance property takes the form 
\[ \Ad(\Delta_{\cA(I_1),\Omega}^{it}) \left(\cA(I_2)\right)
=\cA(\delta_{I_1}(-2\pi t)I_2), \] 
where $I_1, \,I_2$ are non-dense intervals in $\bS^1$
and $\delta_{I_1}(t)=\exp(th_{I_1})$,  where $h_{I_1}$ is the Euler element associated to the interval $I$, cf.~\cite[Ex.~2.10(c)]{MN21}. 
Assuming central twisted locality, the properties
(HKA5), (HKA6), (SSA), and the graded commutation relations \eqref{eq:ls2} hold, see \cite[Prop.~2.3]{DLR07}. 

From our perspective, we consider
  the covering group $G^{(2)}$ of $\PSL_2(\RR)$ 
  from above, and the action on its  abstract wedge spaces. 
  Let
  \[ \cA \colon \cW_+^{G^{(2)}} \to \vNA(\cH) \]
  be a net of von Neumann algebras satisfying (HKA1)-(HKA3)  and (Vac) with respect to a unitary positive energy representation $U$ of $\tilde G$.
We assume that (HKA4) holds in the following sense:
For $W_0= (h,\tau_h) \in \cW_+^{G^{(2)}}$
such that $W_0^{'\alpha} = (-h, \alpha \tau_h) \in \cW_+^{G^{(2)}}$
for some $\alpha \in Z(G^{(2)})^-$, there exists a
unitary $Z_\alpha\in U(\tilde G)'$ satisfying 
\begin{equation}
J_{\cA(W_0)} Z_\alpha J_{\cA(W_0)}=Z_\alpha^{-1}, 
\end{equation}
such that 
\begin{equation}
\cA(W_0^{'\alpha}) \subseteq Z_\alpha \cA(W_0)'Z_{\alpha}^*.
\end{equation}

We can lift 
  $\cA$ to a net $\tilde \cA  := \cA \circ q^{(2),\cW}$
on $\cW_+^{\tilde G}$, where $q^{(2),\cW} \colon \cW_+^{\tilde G} \to \cW_+^{G^{(2)}}$
is the natural covering map (Proposition~\ref{prop:BGL-natural}).
Since $\fsl_2(\RR)$ is
a hermitian simple Lie algebra, (Reg) holds for~$\tilde \cA$
by Theorem \ref{thm:vnareg}(a). If the net~$\tilde  \cA$ satisfies (HKA5), then Theorem \ref{thm:spst2} holds for
$\tilde \cA$. Since $\tilde \cA$ factors through $\cW_+^{(2)}$,
the net of standard subspace
$\tilde \sH(W)=\overline{\tilde \cA(W)_{sa}\Omega}$ on $\cW_+^{\tilde G}$ factors through a net $\sH$ on $\cW_+^{G^{(2)}}$. By (HKA5) and (HKA8), $\tilde\sH$ is the BGL net $\sH_U$ of $U$ that factors through $\cW_+^{G^{(2)}}$.
For $G_1 = \tilde G$ and $G_2 = G^{(2)}$, we then obtain 
   with Lemma \ref{lem:fact}, $\partial_h(G_1^{h_1}) \cap D \subeq \ker(U)$
 for $D=2\ZZ$. This implies that $U$ factors through
$G^{(2)}_{\tau_h^{(2)}}$. In particular $U(\tilde \rho(2\pi))^2=\textbf{1}$.

\subsubsection{On the Einstein universe}\label{sect:MTnew}

  In this section we use the notation introduced in Section~\ref{sect:2dimss}. 
We reformulate the following example,
    treated in \cite[App.~A]{MT19}, within our setting. We recall that $H_{\tau_h}=\cP_+$ and $\tilde G_{\tau_h}=(\tilde{\Mob}\times\tilde{\Mob})_{\tau_h}$,
  and $M_0$ is a copy of the $1$+$1$-dimensional Minkowski space  in the Einstein universe $\tilde M^{1,1}$.

Let $(\cA(W))_{W\in\cW^{\tilde G}}$ be a $\tilde{G}$-covariant
net of von Neumann algebras satisfying (Vac) and (HKA1)–(HKA4) 
and write $\tilde U$ for the corresponding $\tilde G$-representation,
assumed to satisfy the spectral condition.
Let $W_R$ and $W_L$ be the Rindler wedge and its complement
in $M_0\subset \tilde M^{1,1}$, respectively, and
let $(h,\tau_h)$ and $(-h,\alpha \tau_h)$ be the associated Euler couples,
where $\alpha=r_s(-2\pi)$ by \eqref{eq:EWrswr}. 
In this setting Theorems~\ref{thm:vnaBW}, \ref{thm:vnareg}, and \ref{thm:spst2} hold for $(\cA(W))_{W\in\cW_+^{\tilde G}}$.
We add the assumption that $Z_\alpha=\1$, and this implies 
\begin{equation}\label{eq:ALR}\cA(W_L)=\cA(W_R)'.
\end{equation}
This is the\textit{ locality condition } on the restriction of the net $\cA$ to (a copy of the) Minkowski spacetime $M_0\subset\tilde M^{1,1}$. 
By the Spin--Statistics Theorem~\ref{thm:spst2}, $\tilde U(\alpha)=Z_\alpha^2 =\textbf{1}$.

Now, let $W_1 \in\cW_+^{\tilde G}$ and $W_2=r_s(2k\pi)W_1$ with $k\in \ZZ$.
Then
\[ \cA(W_2)=\Ad(\tilde U(r_s(2k\pi)))\cA(W_1)=\cA(W_1).\] 
The net $(\cA(W))_{W\in\cW_+^{\tilde G}}$ factors through a net defined
on the wedge space $\cW_+^{\oline G}$, where
\[ \oline G=\tilde G/\langle r_s(2\pi)\rangle,    \]
and  $\oline G_{\tau_h}$ is its extension (see Remark \ref{rmk:ext}).
The associated homogeneous space is a quotient of $\tilde M^{1,1}$,
called the
{\it Einstein cylinder} and is described as follows. Consider the Minkowski space $M_0$ embedded in the conformal completion $\tilde M^{1,1}$:
\[ M_0=(-\pi,\pi)\times(-\pi,\pi)\subset\tilde M^{1,1} \]
in the lightray coordinates $(u_+,u_-)$ (cf.~Section~\ref{sect:2dimss}).
Define the extended space and time coordinates  by $u_+\pm u_-$.
In these coordinates, the Einstein cylinder is 
$\bS^1\times\RR=\tilde M^{1,1}/\sim$, where $x\sim y$ if and only if $x=r_s(2k\pi)y$
for some $k \in\Z$. Thus, the spatial direction of Minkowski space is compactified,
while the time direction remains non-compact, see Figure \ref{fig:3}. 

{We conclude by remarking that our results, Theorems \ref{thm:vnaBW} and \ref{thm:spst2}, extend this example, when the more general centrally twisted locality condition (HKA4), generalizing \eqref{eq:ALR}, is taken into account. 
  In particular, we obtain the following proposition, the analog
  of Proposition~\ref{prop:iden} for von Neumann algebra nets:

\begin{proposition}\label{prop:iden2}
  Let $(\cA(W))_{W\in\cW_+^H}$ be a Poincar\'e covariant
  net of standard subspaces in 1+1 dimensions, satisfying
  {\rm(HKA1)--(HKA3), (Vac)}, which extends to a $\tilde G$-covariant
  net of standard subspaces satisfying {\rm(HKA1)--(HKA4), (Vac)}.
{Assume that the unitary representation $U$ of $\tilde G$
  has discrete kernel.}
  Then the extension $(\tilde\cA(W))_{W\in\cW_+^{\tilde G}}$ also satisfies
  {\rm(HKA5)--(HKA8)} and {\rm(SSA)}, and $\sH_\cA$ is the BGL net of the
  antiunitary extension of $U$ specified by
  $\tilde U(\tau_h)=J_{\cA(W_0),\Omega}$.
\end{proposition}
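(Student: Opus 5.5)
The proof reduces Proposition~\ref{prop:iden2} to Theorems~\ref{thm:vnaBW}, \ref{thm:vnareg} and \ref{thm:spst2}, applied to the extended net on $\cW_+^{\tilde G}$. The Poincar\'e-covariant net on $\cW_+^H$ only motivates the setting. All hypotheses that are actually used concern the extension $\tilde\cA$ on $\cW_+^{\tilde G}$ with the representation $\tilde U$ of $\tilde G = \tilde\Mob\times\tilde\Mob$. These hypotheses are (HKA1)--(HKA4), (Vac), and the assumption that $\ker \tilde U$ is discrete.

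First I verify the structural hypotheses of Theorem~\ref{thm:vnaBW}. The Lie algebra is $\g=\fsl_2(\R)\oplus\fsl_2(\R)$ with $h=(h_0,-h_0)$. Each summand is hermitian simple, so Remark~\ref{rmk:hermgen}(a), applied componentwise to the invariant cone $C\oplus C$ (with the sign in the second factor adjusted to $-h_0$), shows that $C_\pm$ has interior points in $\g_{\pm1}(h)=\R(e_0,0)\oplus\R(0,\theta(e_0))$, respectively its negative. For the condition $\g_0=[\g_1,\g_{-1}]$: in $\fsl_2(\R)$ we have $[e_0,\theta(e_0)]=-2h_0$, so $[\g_1,\g_{-1}]$ contains both $(h_0,0)$ and $(0,h_0)$ and hence equals $\g_0=\R(h_0,0)\oplus\R(0,h_0)$. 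The element $h$ is symmetric because every Euler element of $\fsl_2(\R)$ is symmetric. Explicitly, \eqref{eq:EWrswr} shows that $W_0^{'\alpha}\in\cW_+^{\tilde G}$ for $\alpha=(\zeta_0^{-1},\zeta_0)$.

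With these checks in place, Theorem~\ref{thm:vnaBW} yields (HKA5) and (HKA6), and Theorem~\ref{thm:vnareg}(a) yields (Reg) from (HKA1), (HKA2) and (Vac). All hypotheses of Theorem~\ref{thm:spst2} are then satisfied: (HKA1)--(HKA3), (Vac), (Reg), (HKA5), discrete kernel, and $h$ symmetric. It follows that $\tilde\cA$ satisfies (HKA4), (HKA6)--(HKA8) and (SSA). It also gives that $\sH_{\tilde\cA}$ is the BGL net of the antiunitary extension of $\tilde U$ to $\tilde G_{\tau_h}$ determined by $\tilde U(\tau_h)=J_{\cA(W_0),\Omega}$.

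The main obstacle is the cone condition. The invariant cone must be chosen compatibly with the spectral condition, and one has to confirm that the projections of $C$ onto $\g_{\pm1}(h)$ are generating even though $h$ has opposite signs in the two factors. Flipping the sign in one factor of $\g$ interchanges the roles of $e_0$ and $\theta(e_0)$ there. Since $C_j$ meets both root spaces $\R e_0$ and $\R\theta(e_0)$ of its factor in a half-line, the cone $C_+$ still contains a half-line in each summand of $\g_1(h)$. I expect this to settle the point.
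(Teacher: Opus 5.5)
Your proposal is correct and follows the same route as the paper's proof: (Reg) from Theorem~\ref{thm:vnareg}(a), (HKA5)--(HKA6) from Theorem~\ref{thm:vnaBW}, and then (HKA7)--(HKA8), (SSA) and the BGL identification from Theorem~\ref{thm:spst2}. You also check the structural hypotheses for $h=(h_0,-h_0)$ explicitly: interior points of $C_\pm$, $\g_0=[\g_1,\g_{-1}]$, and symmetry via \eqref{eq:EWrswr}, which fills in what the paper leaves implicit; the only slip is that $\g_{-1}(h)=\R(\theta(e_0),0)\oplus\R(0,e_0)$, not ``the negative'' of $\g_1(h)$.
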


\begin{proof}
  Let $\tilde U$ be an antiunitary positive energy representation of
{$\tilde G_{\tau_h}$}, with discrete kernel. Let
$\cW_+^{\tilde G} = \tilde G.(h,\tau_h)$ be the 
orbit of Euler wedges and let $\tilde\cA$ be a net of von Neumann algebras on
$\cW_+^{\tilde G}$, satisfying (HKA1)-(HKA4). Then 
(Reg) follows from Theorem~\ref{thm:vnareg}(a),
and  {(HKA5) (resp.~(BW)) 
and (HKA6) (resp.~(CTHD))} both follow from Theorem~\ref{thm:vnaBW}.
Thus \red{(HKA7)}-(HKA8) and  (SSA) are a consequence of 
    Theorem \ref{thm:spst2}. In particular $\sH_\cA$ is the BGL net of the
  antiunitary extension of $U$ specified by
  $\tilde U(\tau_h)=J_{\cA(W_0),\Omega}$.
\end{proof}

}

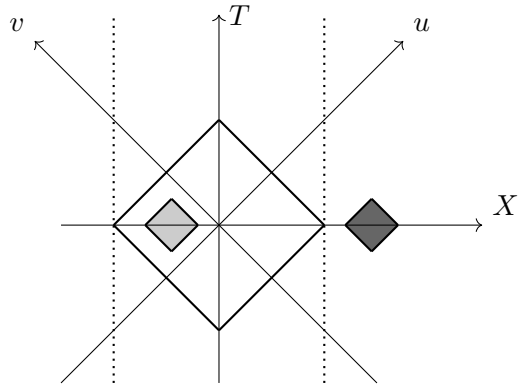
\begin{figure}[ht]\centering
\begin{tikzpicture}[scale=0.7]
        \draw [->] (-3,0) --(5,0) node [above right] {$X$};
         \draw [->] (0,-3)--(0,4) node [ right] {$T$};
          \draw [thick] (-2,0)-- (0,2);
          \draw [thick] (0,2)-- (2,0);
          \draw [thick] (-2,0)-- (0,-2);
          \draw [thick] (2,0)-- (0,-2);
          
          
\draw [ thick] (-0.4,0) -- (-0.9,0.5);
\draw [thick] (-0.4,0) -- (-0.9,-0.5);
\draw [thick] (-1.4,0) -- (-0.9,0.5);
\draw [thick] (-1.4,0) -- (-0.9,-0.5);

\draw [thick] (2.4,0) -- (2.9,0.5);
\draw [thick] (2.4,0) -- (2.9,-0.5);
\draw [thick] (3.4,0) -- (2.9,0.5);
\draw [thick] (3.4,0) -- (2.9,-0.5);

          \draw [thick,dotted] (-2,-3)--(-2,4);
          \draw [thick,dotted] (2,-3)--(2,4);
          
          \draw [thin,->] (-3,-3) -- (3.5,3.5) node [above right] {$u$};
          \draw [thin,->] (3,-3) -- (-3.5,3.5) node [above left] {$v$};
  \fill [color=black,opacity=0.2]
               (-0.4,0) -- (-0.9,0.5) -- (-1.4,0) --(-0.9,-0.5) ;
           \fill [color=black,opacity=0.6]
              (2.4,0) -- (2.9,0.5) -- (3.4,0) -- (2.9,-0.5);
   \end{tikzpicture}
   \caption{The cylinder is obtained by identifying the dotted lines in
       $\tilde M^{1,1}$.   The dark and the light grey wedge regions are identified because
 they are transformed into each other by $r_s(2\pi)$; see \eqref{eq:rs}.}
\label{fig:3}
\end{figure}

\section{Outlook}

Our study of wedge regions on causal homogeneous spaces and their relation with abstract Euler wedges has been developed in \cite{MNO23,MNO25}. This framework  allows finer localization  for states and algebras (see \cite{FNO25a, NO21, MN24}) and can be endowed with a notion of causal independence (or disjointness) on homogeneous spaces. In the forthcoming paper \cite{MN25}, a detailed analysis of a notion of causally disjoint regions for such spaces is presented.

A natural next step is to study the geometric properties of Doplicher–Haag–Roberts (DHR) representations for such models and the possible field reconstruction (see \cite{DHR71, DHR74, DR90}). In the general framework of \cite{Ro04}, one considers an isotonous, local net of von Neumann algebras
$(\cA(\cO))_{\cO\in\cK}$, defined over a partially ordered index set $\cK$, equipped with a notion of causal disjointness $\bot$, and a distinguished vacuum vector. This net is said to be in the vacuum representation.
On a Lorentzian manifold, $\cO_1\bot\cO_2$ corresponds to causally independent regions, while on the circle it corresponds to disjoint intervals. In all cases, locality is expressed by the inclusion
$\cA(\cO^\bot)\subset \cA(\cO)'$. This property is in general strengthened to duality, namely $\cA(\cO)'=\cA(\cO^{\bot})$.

A representation of the net is called a DHR representation if,
once an index $\cO_0$ is fixed, the representation is equivalent to the vacuum representation on $\cO_0^\bot$. Different geometries of the index set may lead to different structures of superselection sectors; for instance, braid  or permutation group statistics may arise (see \cite{Do09} for a survey and further references).

Within this formalism, our framework provides a new and flexible playground in which richer geometric structures exist and may lead to interesting phenomena. We emphasize that familiar examples, such as nets on de Sitter space, Minkowski spacetime, or on the chiral circle, appear as special cases of our general analysis. In our setting, these models are naturally generalized, 
since the cones of the
causal homogeneous spaces we consider may not be Lorentzian.

A second point we wish to highlight in this outlook is the possibility of recovering a larger group of covariant symmetries from the local geometric structure of nets of von Neumann algebras or nets of standard subspaces. A long-standing problem in quantum field theory is whether scale invariance (or dilation covariance) implies conformal covariance. \break In {$3$+$1$-dimensional} Minkowski spacetime it is widely expected that any ``non-pathological" relativistic quantum field theory that is unitary and covariant under the dilation–Poincar\'e group, should in fact be conformally covariant (see \cite{Na15}). However, a complete mathematical proof of this statement is still missing. In
$1$+$1$ dimensions there exist strong physical arguments supporting this claim \cite{Po88,Zo86}, and a mathematically rigorous result in the algebraic framework was established in \cite{MT19}.

This problem is closely related to the study of the symmetry properties of scaling limit theories \cite{BV95,BV98}. Such limiting theories are expected to exhibit dilation symmetry. However, the emergence of dilation covariance, although expected, is not automatic (see \cite{BDM10}). Then one expects  conformal symmetry to arise in the scaling limit as well, but a general proof of this phenomenon is still missing.

In our geometric framework, in the conformal case, the role of $\fsl_2(\RR)$ subalgebras becomes particularly transparent from our analysis, and the resulting interplay between AQFT and Lie theory may provide useful tools to approach this problem. 
More specifically, understanding when the modular groups associated with two orthogonal standard subspaces generate a representation of $\widetilde{\PSL}_2(\RR)$ can be regarded as a preliminary step towards the problem of extending symmetry from the dilation–Poincar\'e group to the full conformal group
in $3$+$1$-dimensional Minkowski spacetime. Indeed, the appearance of $\widetilde{\PSL_2}(\RR)$-subgroups is a characteristic feature of conformal symmetries, and identifying them through the modular structure of the net provides a geometric mechanism for detecting and constructing larger symmetry groups starting from more limited covariance assumptions. This is already visible in
the $1$+$1$-dimensional algebraic argument in \cite{MT19}. On the other hand,
the $3$+$1$-dimensional case requires a different analysis.

We can formulate a first  problem as follows: Let $(U,\cH)$ be an antiunitary representation of $G := \tilde\PSL_2(\R)_{\tau_{h_0}}$,
then the corresponding BGL net yields the standard subspaces
\[ \sV_1 := \sH_U(h_0, \tau_{h_0}) \quad \mbox{ and } \quad
  \sV_2 := \sH_U(k_0, \tau_{k_0}) = U(\tilde \rho(\pi/2))\sV_1 \]
which satisfy the orthogonality condition \eqref{eq:orthog-abstr} below.
More specifically, for irreducible representations,
the construction from \cite{MNO25} provides a real subspace $\sE \subeq \cH^{-\infty}$,
for which the net $\sH_\sE^{\tilde\dS^2}$ on open subsets of the simply connected
covering $\tilde\dS^2$ of de Sitter space $\dS^2$ extends the BGL net~$\sH_U$,
which we obtain by restricting to wedge regions.

 Let $\sV_1$ and $\sV_2$ be two standard subspaces of the  complex
  Hilbert space $\cH$. Suppose that the corresponding modular
  operators and conjugations satisfy
  \begin{equation}
    \label{eq:orthog-abstr}
 J_2 \Delta_1 J_2 = \Delta_1 \quad \mbox{ and }  \quad 
 J_1 \Delta_2 J_1 = \Delta_2.
  \end{equation}
 \textit{Does this imply that the two modular groups
  $\Delta_1^{i\R}$ and $\Delta_2^{i\R}$ generate a unitary
  representation of the universal covering group
  $\tilde\PSL_2(\R)$ of $\PSL_2(\R)$?}
{  If this configuration comes from the BGL net of an antiunitary representation
  $U$  of a finite-dimensional Lie group $G_{\tau_h}$ by
  $\sV_j = \sH_U(h_j, \sigma_j)$, then
 \[  -h_2 = \Ad(\sigma_1) h_2 = \tau_{h_1}(h_2) \quad \mbox{ and } \quad 
   -h_1 = \Ad(\sigma_2) h_1 = \tau_{h_2}(h_1),\]
 so that $(h_1, h_2)$ is a symmetrically orthogonal Euler pair,
 hence generates an $\fsl_2(\R)$-subalgebra of $\g$.}  {The germ of this observation already appears in \cite[Thm. 5.1.1]{Lo08}, where the conformal rotation is related to the modular groups of orthogonal wedge subspaces via the Bisognano–Wichmann property of $\sH_U$, where $U$ is
 an antiunitary positive-energy representation of $G_{\tau_h}=\widetilde{\PSL}_2(\RR)_{\tau_h}$.}
 
As the following class of  counterexample shows, in general
the answer to this question is negative, so some extra assumptions have
to be included. 
We start with a positive operator $A = e^H$ on a real Hilbert space $\cK$
and consider the complex Hilbert space $\cH := \cK_\C \oplus \cK_\C$.
Then the pair
\[  \Delta_1 := A \oplus A^{-1} \quad \mbox{ and } \quad
J_1(\xi,\eta) = (\oline\eta, \oline \xi) \]
specifies a standard subspace $\sV_1$.

With the unitary operator
\[ U := \frac{1}{\sqrt 2} \pmat{ 1 & 1 \\ i & -i} \in \U_2(\cH) \]
we obtain for
\[ \Delta_2 := U \Delta_1 U^{-1} \quad \mbox{ and }  \quad 
  J_2 := U J_1 U^{-1} \]
that $J_2(\xi,\eta) = (\oline \xi, \oline\eta)$ and 
\[ \Delta_2
  = \frac{1}{2}\pmat{ A + A^{-1} & -i(A-A^{-1}) \\
    i(A-A^{-1}) & A + A^{-1}}
  = \pmat{ \cosh(H) & -i \sinh(H) \\
    i\sinh(H)  & \cosh(H)}
  = \exp\pmat{ 0 & -i H \\ i H & 0}. \]
The two standard subspaces $\sV_1$ and $\sV_2 = U\sV_1$ satisfy
the above orthogonality condition \eqref{eq:orthog-abstr}. 

More abstractly, we obtain such examples in
representations on $\cH := L^2(\R, \cE)$, where $\cE$
carries an antiunitary representation of $\tilde\SL_2(\R)_{\tau_{h_0}}$.
We then obtain a unitary representation of the group
of measurable maps $\R \to \tilde\SL_2(\R)$, and
for $f \in C^\infty(\R,\R)$, we obtain in
$C^\infty(\R,\fsl_2(\R))$ elements
\[ h := f \otimes h_0 \quad \mbox{ and } \quad k := f \otimes k_0.\]
With
\[ \tau_{h_0}(h) = h, \quad \tau_{h_0}(k) = -k, \quad 
  \tau_{k_0}(h) = -h, \quad \tau_{k_0}(k) = k.\]
This leads to a pair of orthogonal standard subspaces in the associated
BGL net, but the Lie algebra generated by $h$ and $k$ is typically infinite-dimensional.

To conclude that the modular groups $\Delta_1^{i\R}$ and $\Delta_2^{i\R}$ generate
a representation of $\tilde\SL_2(\R)$ thus requires additional regularity.

\begin{prob}
\mlabel{prob:x}   Assume that, in addition to \eqref{eq:orthog-abstr}, 
that, for all $t \in \R$, the real subspaces 
\begin{equation}
  \label{eq:extracond}
 \Delta_2^{it} \sV_1 \cap \sV_2 \quad \mbox{ and }  \quad
 \Delta_1^{it} \sV_2 \cap \sV_2
\end{equation}
are standard. Does this imply that $\sV_{1,2}$ come from the
BGL net of an antiunitary representation of
$\tilde\SL_2(\R)_{\tau_h}$?
\end{prob}

From \cite{FNO25a} we know that \cite[Prop.~4.27]{MN21} applies to all
unitary representations of $\tilde\SL_2(\R)$, so that
the condition \eqref{eq:extracond}
is necessary.

\appendix

\section{Appendices}

\subsection{List of simple three graded Lie algebras}\label{app:list}

The following table contains all Euler elements in simple
  real Lie algebras, cf.~\cite{MN21}.
\medskip

\begin{tabular}{||l|l|l|l|l||}\hline
& $\g$  & $\Sigma(\g,\fa)$  & $h$ & $\g_1(h)$  \\ 
\hline\hline 
1 & $\fsl_n(\R)$ & $A_{n-1}$ & $h_j, 1 \leq j \leq n-1$ & $M_{j,n-j}(\R)$  \\ 
2 & $\fsl_n(\H)$ & $A_{n-1}$ & $h_j, 1 \leq j \leq n-1$ & $M_{j,n-j}(\H)$  \\ 
3 & $\su_{n,n}(\C)$ & $C_{n}$ & $h_n$ & $\Herm_n(\C)$  \\ 
4 & $\sp_{2n}(\R)$ & $C_{n}$ & $h_n$ & $\Sym_n(\R)$   \\ 
5 & $\fu_{n,n}(\H)$ & $C_{n}$ & $h_n$ & $\Aherm_n(\H)$  \\ 
  6  & $\so_{p,q}(\R)$ & {$ B_p, 1 \leq p<q$} & $h_1$ & $\R^{p+q-2}$   \\
  &  & {$D_p, 1 \leq p = q$} &&  \\
7  & $\so^*(4n)$ & $C_n$ & $h_n$ & $\Herm_n(\H)$   \\ 
8  & $\so_{n,n}(\R)$ & $C_n$ & $h_n$ & $\Alt_n(\R)$   \\ 
9 & $\fe_6(\R)$ & $E_6$ & $h_1=h_6' $ & $M_{1,2}(\bO_{\rm split})$   \\ 
10 & $\fe_{6(-26)}$ & $A_2$ & $h_1$ & $M_{1,2}(\bO)$   \\ 
11 & $\fe_7(\R)$ & $E_7$ & $h_7 $ & $\Herm_3(\bO_{\rm split})$   \\ 
12 & $\fe_{7(-25)}$ & $C_3$ & $h_3$ & $\Herm_3(\bO)$   \\ 
13 & $\fsl_n(\C)$ & $A_{n-1}$ & $h_j, 1 \leq j \leq n-1$ & $M_{j,n-j}(\C)$  \\ 
14 & $\sp_{2n}(\C)$ & $C_{n}$ & $h_n$ & $\Sym_n(\C)$   \\ 
15a & $\so_{2n+1}(\C)$ & $ B_{n}$ & $h_1$ & $\C^n$   \\ 
15b & $\so_{2n}(\C)$ & $ D_{n}$ & $h_1$ & $\C^n$   \\ 
16 & $\so_{2n}(\C)$ & $ D_{n}$ & $h_{n-1}, h_n$ & $\Alt_n(\C)$   \\ 
17 & $\fe_6(\C)$ & $E_6$ & $h_1 = h_6' $ & $M_{1,2}(\bO)_\C$   \\ 
18 & $\fe_7(\C)$ & $E_7$ & $h_7 $ & $\Herm_3(\bO)_\C$   \\ 
\hline
\end{tabular} \\[2mm] {\rm Table 1: Simple $3$-graded Lie algebras. We follow the conventions of the tables in [Bo90a] for the classification  cf.~\cite{MN21}}.\\

\smallskip

\begin{tabular}{||l|l|l|l|l||}\hline \label{tab:2}
{} $\g^\circ$ \mbox{(hermitian)}  & $\Sigma(\g^\circ, \fa^\circ)$ & $\g = (\g^\circ)_\C$ & $\Sigma(\g,\fa)$ & {\rm symm.\ Euler element}\  $h$ \\ 
\hline\hline 
$\su_{n,n}(\C)$ & $C_n$ & $\fsl_{2n}(\C)$ & $A_{2n-1}$ & $h_n$ \\ 
 $\so_{2,2n-1}(\R), n > 1$ & $C_2$ & $\so_{2n+1}(\C)$ & $B_n$ & $h_1$ \\ 
$\sp_{2n}(\R)$ & $C_n$ & $\fsp_{2n}(\C)$ & $C_n$ & $h_n$ \\ 
 $\so_{2,2n-2}(\R), n > 2$ & $C_2$ & $\so_{2n}(\C)$ & $D_{n}$ & $h_1$ \\ 
 $\so^*(4n)$ & $C_n$ & $\so_{4n}(\C)$  & $D_{2n}$ & $h_{2n-1},  h_{2n}$ \\ 
$\fe_{7(-25)}$ & $C_3$ & $\fe_7$ & $E_7$ & $h_7$ \\ 
\hline
\end{tabular} \\[2mm] {\rm Table 2: Simple hermitian Lie algebras $\g^\circ$
of tube type, i.e., containing an Euler element.}

\smallskip

Note that $\fsl_2(\R) \cong \so_{2,1}(\R) \cong \su_{1,1}(\C)$ as real hermitian Lie algebras. More isomorphisms 
are discussed in some detail in \cite[\S 17]{HN12}.

\subsection{The classification of orthogonal  Euler pairs} 
\mlabel{app:a.1}

\begin{thm} \mlabel{thm:1.4} {\rm(Classification Theorem for pairs 
of orthogonal Euler elements)} {\rm(\cite[Thm.~2.9]{MNO25})} 
Let $\g$ be a simple real Lie algebra and 
$h \in \cE(\g)$ a symmetric Euler element. 
Let $\theta$ be a Cartan involution with $\theta(h) = -h$ and 
$\fa \subeq \fp := \g^{-\theta}$ be a maximal abelian subspace  
containing~$h$, so that $\fa \subeq \fh_\fp$. In 
\[ \Sigma_1 := \{ \alpha \in \Sigma(\g,\fa) \colon \alpha(h) = 1\} \]
we pick a  maximal set $\{ \gamma_1, \ldots, \gamma_r\}$ of long
roots roots which are orthogonal in the sense that the differences
$\gamma_j - \gamma_k$ are not roots.
Then there exist elements $e_j \in \g_{\gamma_j}$ such that 
the Lie subalgebras 
\[ \fs_j := \Spann_\R \{e_j, \theta(e_j), [e_j, \theta(e_j)]\}, \quad 
j =1,\ldots, r, \] 
are isomorphic to $\fsl_2(\R)$.  
Then 
\begin{equation}
  \label{eq:defc}
  \fc := \Spann \{ k_1, \ldots, k_r \}
\quad \mbox{ for }  \quad k_j := e_j - \theta(e_j) 
\end{equation}
is maximal abelian in $\fq_\fp := \fq \cap \fp$ and contained in the 
subalgebra
\[ \g^* := \fh_\fk \oplus \fq_\fp = \Fix(\theta \tau_h^\g),\] 
so that we have an inclusion 
of restricted root systems $\Sigma(\g^*,\fc) \subeq \Sigma(\g,\fc)$. 
We consider the Euler elements 
\begin{equation}
  \label{eq:hj}
 k^j := k_1 + \cdots + k_j - k_{j+1} - \cdots - k_r \in \fc,  
 \quad j = 0,\ldots, r. 
\end{equation}
Then $\Sigma(\g,\fc)$ is of type $C_r$ and 
$\Sigma(\g^*,\fc)$ is of type $A_{r-1}$, $C_r$ or $D_r$,
as listed in {\rm Table 2} in {\rm\cite{MNO25}}. 
All $G_e^h$-orbits in $\cE(\g) \cap \fq$ are $G^h$-invariant, 
 and a set of representatives is given by: 
\begin{itemize}
\item[\rm(A)] $k^0, \ldots, k^r$, if $\Sigma(\g^*,\fc)$ is of type $A_{r-1}$ and
  $r \geq 1$. 
\item[\rm(C)] $k^r$, if $\Sigma(\g^*,\fc)$ is of type $C_r, r \geq 2$. 
\item[\rm(D)] $k^{r-1}, k^r$, if $\Sigma(\g^*,\fc)$ is of type $D_r, r \geq 2$.
\end{itemize}
Accordingly, the pairs $(h,k^j)$ represent the $\Inn(\g)$-conjugacy classes of
pairs of orthogonal Euler elements.
\end{thm}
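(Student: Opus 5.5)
The plan is to reduce the classification of orthogonal pairs to an orbit problem for Euler elements inside the $(-1)$-eigenspace of $\tau_h^\g$, and then to solve that orbit problem with the Riemannian symmetric pair $(\g^*,\fh_\fk)$.

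\textbf{Step 1: reduction to $G^h$-orbits.} Put $\fh:=\g^{\tau_h^\g}=\g_0(h)$ and $\fq:=\g^{-\tau_h^\g}=\g_1(h)\oplus\g_{-1}(h)$. The condition $\tau_h^\g(k)=-k$ says exactly that $k\in\cE(\g)\cap\fq$. Two pairs $(h,k)$ and $(h,k')$ are $\Inn(\g)$-conjugate if and only if $k'\in\Inn(\g)^h.k$. Since every pair can first be moved to one with first entry $h$, the $\Inn(\g)$-classes of pairs correspond to the orbits of $G^h$ (acting through $\Ad$) on $\cE(\g)\cap\fq$.

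\textbf{Step 2: Cartan involution and a maximal abelian subspace of $\fq_\fp$.} Choose a Cartan involution $\theta$ with $\theta(h)=-h$. It commutes with $\tau_h^\g$, so we get the decomposition $\g=\fh_\fk\oplus\fh_\fp\oplus\fq_\fk\oplus\fq_\fp$.
\begin{itemize}
\item Let $k\in\fq$ be Euler. Then $\ad k$ is real diagonalizable, so $k$ is hyperbolic. By the standard conjugacy of hyperbolic elements of a reductive symmetric pair (Matsuki/Oshima--Matsuki), I expect $k$ to be $G^h_e$-conjugate into $\fq_\fp$.
\item The subspace $\fq_\fp$ is the $\fp$-part of the Riemannian symmetric pair $(\g^*,\fh_\fk)$, where $\g^*=\Fix(\theta\tau_h^\g)$. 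Hence every element of $\fq_\fp$ is $\Inn_\g(\fh_\fk)$-conjugate into any fixed maximal abelian subspace $\fc\subeq\fq_\fp$.
\item Two elements of $\fc$ are conjugate under $\Inn_\g(\fh_\fk)$ if and only if they are conjugate under the Weyl group $W(\g^*,\fc)$. This is the usual Chevalley/Kostant argument.
\end{itemize}

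\textbf{Step 3: constructing $\fc$ and computing the root systems.}
\begin{itemize}
\item Take a maximal strongly orthogonal family $\gamma_1,\dots,\gamma_r$ of long roots in $\Sigma_1$. Normalize $e_j\in\g_{\gamma_j}$ so that each $\fs_j$ is an $\fsl_2(\R)$-triple.
\item The elements $k_j=e_j-\theta(e_j)$ lie in $\fq_\fp$: the factor $e_j\in\g_1(h)$ gives the $\fq$-part, and $\theta$-antisymmetry gives the $\fp$-part. They commute with each other because $\gamma_j\pm\gamma_k\notin\Sigma$.
\item Maximality of $\fc$ should follow from a Cayley transform $c=\exp(\frac{\pi}{4}\ad\sum_j(e_j+\theta e_j))$. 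It maps $\fc$ onto $\Spann\{[e_j,\theta e_j]\}\subeq\fa$, and maximality of the strongly orthogonal family then forces maximality of $\fc$ (this is the Harish-Chandra/Moore argument).
\item The same Cayley transform identifies $\Sigma(\g,\fc)$ with the restriction of $\Sigma(\g,\fa)$ to $\Spann\{[e_j,\theta e_j]\}$. Moore's theorem for the grading by $h$ then shows it is of type $C_r$, with long roots $\pm2\eps_j$ dual to the $k_j$.
\item With this normalization, the Euler elements in $\fc$ are exactly $\sum_j\pm k_j$. The constraints are $|\eps_i(x)\pm\eps_j(x)|\le 1$ and $2\eps_j(x)\in\{0,\pm1\}$, up to the fixed scaling. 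Conjugating by $W(\g^*,\fc)$, each such element is one of the $k^j$.
\item The type of $\Sigma(\g^*,\fc)\subeq C_r$ has to be read off case by case from the list of Table~1. The root subsystems containing $W$-orbits relevant here are $A_{r-1}$, $C_r$ and $D_r$.
\end{itemize}

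\textbf{Step 4: counting orbits and the component issue.}
\begin{itemize}
\item In type $A_{r-1}$, $W(\g^*,\fc)$ acts by permutations of the coordinates. The sign patterns are then classified by the number of $+$ signs, giving the $r+1$ representatives $k^0,\dots,k^r$.
\item In type $C_r$, $W(\g^*,\fc)$ also contains all sign changes, so there is a single orbit, represented by $k^r$.
\item In type $D_r$, only even numbers of sign changes are allowed, which gives the two orbits of $k^{r-1}$ and $k^r$.
\item Finally, to see that each $G^h_e$-orbit is $G^h$-invariant, I would show that every component of $G^h$ meets the normalizer of $\fc$ in elements inducing Weyl group elements of $\g^*$. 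One route is to write $G^h=G^h_e\,(G^h\cap K)$ and use that $\pi_0$ is generated by elements $\exp(\pi i\,\cdot)$ of the torus $\exp(\fc_\fk)$-type.
\end{itemize}

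\textbf{Main obstacle.} I expect the hardest points to be the first conjugation in Step~2 (moving an arbitrary Euler element of $\fq$ into $\fq_\fp$ using only $G^h_e$) and the case-by-case determination of $\Sigma(\g^*,\fc)$ in Step~3. For the conjugation I would first try Matsuki's result that hyperbolic elements of $\fq$ are $H_e$-conjugate into $\fq_\fp$, where $\fq_\fp$ plays the role of the split part of a $\theta$-stable Cartan subspace.
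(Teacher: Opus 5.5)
The paper does not reprove this theorem; it quotes it from \cite[Thm.~2.9]{MNO25}, so your proposal has to be judged on its own. The outline is the natural one, and Steps~1--3 are essentially sound modulo the standard Cayley-transform and Jordan-theoretic facts you invoke. These steps are: reduction to $G^h$-orbits on $\cE(\g)\cap\fq$; conjugation into $\fq_\fp$ (Matsuki's result on hyperbolic elements of $\fq$ does this, or choose $\theta$ adapted to $\fs_{h,k}\cong\fsl_2(\R)$); conjugation into $\fc$ via $(\g^*,\fh_\fk)$; and the $C_r$-structure of $\Sigma(\g,\fc)$.

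Two small repairs are needed in those steps. First, $G^h_e$-conjugacy inside $\fq_\fp$ must be reduced to $K^h_e$-conjugacy: if $\Ad(ke^p)x=y$ with $p\in\fh_\fp$, applying $\theta$ gives $\Ad(e^{2p})x=x$, hence $[p,x]=0$. Second, the Euler condition is that all root values lie in $\{-1,0,1\}$, not $|\cdot|\le 1$. For $r\ge 2$ it is the integrality of the short-root values $\frac12(\pm x_i\pm x_j)$ that forces every $x_j=\pm1$; with your inequalities, $k_1$ alone would qualify.

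\emph{First genuine gap.} The proposal does not prove that the $G^h_e$-orbits are $G^h$-invariant, and the final sentence depends on this, because $\Inn(\g)$-classes of pairs correspond to $\Inn(\g)^h$-orbits. The suggested route via generators of $\pi_0$ ``of torus type'' is not an argument: it neither decides which such elements lie in $G$ nor how they act on $\fc$.

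What must be shown is the following. Using $G^h=K^hG^h_e$ and the $K^h_e$-conjugacy of maximal abelian subspaces of $\fq_\fp$, you reduce to $g\in N_{K^h}(\fc)$. Then $\Ad(g)|_\fc$ is a signed permutation normalizing $W(\g^*,\fc)$.
\begin{itemize}
\item In type $A_{r-1}$ (with the normalization discussed below) this normalizer is $\{\pm\1\}\times S_r$, and $-\1$ carries the orbit of $k^j$ to that of $k^{r-j}$.
\item In type $D_r$ the normalizer is all of $W(C_r)$, and an odd sign change swaps the orbits of $k^{r-1}$ and $k^r$.
\end{itemize}
Such maps really occur in $\Aut(\g)^h$. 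Conjugation by $\mathrm{diag}(1,-1)$ on $\fsl_2(\R)$ fixes $h_0$ and sends $k_0\mapsto -k_0$. Conjugation by $\mathrm{diag}(-1,1,\ldots,1)\in\mathrm{GL}_{2r}(\R)$ on $\fsl_{2r}(\R)$ fixes $h$ and flips one $k_j$. So the connectedness of $G$ has to enter explicitly.

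In the hermitian case you can argue as follows: $\Ad(G^h)$ preserves the cone $C\cap\g_1(h)$, and the orbits of invertible elements of the Euclidean Jordan algebra $\g_1(h)$ under the automorphism group of this cone are classified by signature. In the split cases you need an analogous invariant. For $\fsl_{2r}(\R)$ it is the sign of $\det X$ for the $\g_1(h)$-component $X$, which is preserved by $X\mapsto AXB^{-1}$ with $\det A\det B=1$.

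\emph{Second genuine gap.} The type-$A$ list depends on a normalization you never impose. Fixing $\fs_j$ does not fix the sign of $e_j$. Replacing $e_j$ by $-e_j$ replaces $k_j$ by $-k_j$ and conjugates $\Sigma(\g^*,\fc)$ by a sign change, so ``$W(\g^*,\fc)$ acts by permutations of the coordinates'' is not automatic.

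Here is a counterexample. Take $\mathfrak{sp}_4(\R)$ with $h=\frac12\mathrm{diag}(1,1,-1,-1)$, $\theta(x)=-x^\top$ and $\gamma_j=2\eps_j$. Then $\fq_\fp=\{\pmat{0&S\\S&0}\colon S=S^\top\}$, and $\mathrm{diag}(A,A)$ with $A\in\SO_2(\R)$ acts by $S\mapsto ASA^\top$. If $e_1,e_2$ are chosen with opposite signs, $k^0$ and $k^2$ correspond to $S=\pm\frac12\mathrm{diag}(1,-1)$ and are $G^h_e$-conjugate. Meanwhile the class of negative definite $S$ is not represented at all, since $G^h$ acts on $\g_1(h)$ by congruence.

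So you must impose, and justify, the choice of $e_j$ that makes $\Sigma(\g^*,\fc)=\{\pm(\eps_i-\eps_j)\}$; in the hermitian case this means all $e_j\in C\cap\g_1(h)$, as in Section~\ref{sect:stort}.

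Finally, you only assert the type of $\Sigma(\g^*,\fc)$. That is acceptable, since the statement defers to Table~2 of \cite{MNO25}. Note, however, that your reduction of every $\sum_j\pm k_j$ to some $k^j$ also rests on it.
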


\subsection{The conformal group and the Euler elements of the conformal Lie algebra}
\mlabel{subsec:4.2}

    We consider the covering group $G$
    of the connected conformal group $\PSO_{2,d}(\R)_e$ on $\R^{1,d-1}$,
 for which there exists a {\bf faithful}
    unitary representation $(U,\cH)$ of $G$, extending
    the massless scalar representation on the Hilbert space
    $\cH = L^2(\partial V_+)$ (where $V_+$ is the open positive
      light cone);    see \cite[\S 4.4]{NO21} for the existence.
    In $\so_{2,d}(\R)$, we consider the Euler element $h$, contained in
    $\so_{1,d-1}(\R)$, generating the Lorentz boost
    $h_{0,1}$, acting on $\R \be_0 + \R \be_1 \subeq \R^{1,d-1}$,
    corresponding to the
Rindler wedge $W_R$ (cf.~Example~\ref{ex:mink}). 

  We have three essentially different cases
  (cf.\ \cite[Prop.~8.5]{MN25} and \cite[\S 5.6.2]{Ne26}): 
  \begin{itemize}
  \item[(a)]  $d$ {\bf odd}: Then
   $\Ad(G)$ is the centerfree group~$\SO_{2,d}(\R)_e$ with
    \[ \pi_1(\SO_{2,d}(\R)) \cong \pi_1(\SO_2(\R) \times \SO_d(\R))
      \cong \Z \times \Z_2 \]
    and     $\pi_1(G) = 2 \Z \times \Z_2$
    is an index $2$ subgroup, so that
    $G$ is a $2$-fold covering of the centerfree group~$\SO_{2,d}(\R)_e$
for which we have a natural lift $\iota \colon \SO_d(\R) \into~G$. 
    In this case $Z(G) = Z(G)^- \cong \Z_2$,
    \[ Z(\tilde G)_1 = 2 \Z \times \{1\} \subeq \pi_1(G),\]
    hence $Z_1(G) = \{e\}$,  and $Z_2(\tilde G) \not\subeq \pi_1(G)$
    because
\begin{equation}
   \label{eq:dag11}
Z_2(\tilde G) = \{ (n,\oline n) \colon n \in \Z \}
 \subeq\Z \times \Z_2= \pi_1(\SO_{2,d}(\R)).
  \end{equation} 
We conclude that 
    \[ \{e\} = Z_1(G) \not= Z_2(G) = Z_3(G) = Z(G),\]
    which shows with \eqref{eq:lem:5.2} that $W_0' \in \cW_+$. 
    The non-trivial element $\alpha \in Z(G)$ specifies a twisted
      complement $W_0^{'\alpha} = (h, \alpha \tau_h)$.
    
  \item[(b)] $d \equiv 2 \mod 4$: Then $G = \PSO_{2,d}(\R)_e$
    is centerfree, so that
    \[ Z_1(G) = Z_2(G) = Z_3(G)=~\{e\}.\]
    In particular, $W_0' \in \cW_+$,
    and there are no other twisted complements of $W_0$ in~$\cW_+$.
    
  \item[(c)] $d \in 4 \Z$: Then $G = \SO_{2,d}(\R)_e$ with
    \[ Z(G) = \{ \pm \1\} = Z(G)^-.\]
    As $\tau_h$ acts trivially on this group,
    $Z_1(G)= \{e\}$. Since $\Ad(G)^h$ is connected (\cite[Thm.~7.8]{MNO23}),
    we also have
    $Z_2(G) = Z_1(G) = \{e\}$. Further, the $\pi$-rotation 
   $r_{-1,0}(\pi)$ in the $(\be_{-1},\be_0)$-plane  
    is an involution fixed by $\tau_h$, so that
    $\partial_h(r_{-1,0}(\pi)) = e$. We conclude that
    \[ Z_1(G) = Z_2(G)= Z_3(G)= \{e\}.\]
    In particular,  $W_0' \in \cW_+$ also holds in this case,
    and there are no other twisted complements of $W_0$ in $\cW_+$.
  \end{itemize}

Now we discuss the Euler elements of the conformal Lie algebra. For $d \geq 3$, we consider the natural inclusions
\[ \so_{2,1}(\R) \into \so_{2,2}(\R) \into \so_{2,d}(\R)
=\left\{\left(\begin{matrix}
A&B\\B^\top&D
\end{matrix}\right): A \in \so_2(\R),D\in \so_d(\RR), B\in M_{2,d}(\RR)\right\}. \] 
We write $\be_{-1}, \ldots, \be_{d}$ for  the standard basis of $\R^{2,d}$.
The Lorentz Lie algebra $\so_{1,d-1}(\R)$ is embedded in $\so_{2,d}(\R)$
as those matrices annihilating
$\be_{-1}$ and $\be_{d}$
(cf. \cite{HN12}).
The boost Euler elements $h_{j,k} $ are defined by
\[ h_{j,i}\be_i=\be_j,\; h_{j,i}\be_j=\be_i,\;h_{j,i}\be_\ell=0, \quad
  \mbox{ for } \quad j=-1,0,\  i \geq 1, \ \ell\neq j, i.\]

In $\fsl_2(\R)$ the orbits of orthogonal pairs of Euler elements
are represented by the pairs $(h_0, \pm k_0)$ (Example~\ref{ex:sl2}). 
We call an Euler element in $\fsl_2(\R)^{\oplus 2}$ {\it regular} 
if it is a regular element, i.e., if its centralizer is abelian.
Then there is only one orbit of regular Euler elements in
$\fsl_2(\R)^{\oplus 2}$,
represented by $h := (h_0, h_0)$, and there are $4$ orbits of
regular orthogonal pairs, represented by
\begin{equation}
  \label{eq:pairsinsl2}
 ((h_0, h_0), (k_0, k_0)),\quad
 ((h_0, h_0), (k_0, -k_0)),\quad
 ((h_0, h_0), (-k_0, k_0)),\quad
 ((h_0, h_0), (-k_0, -k_0)). 
\end{equation}

The following lemma clarifies Lemma C.1 in \cite{MNO25}
which suggested that $\fsl_2(\R)^{\oplus 2}$ contains only $3$ conjugacy classes
of orthogonal pairs of Euler elements.

  \begin{lem} \mlabel{lem:lemc1}
 In $\so_{2,2}(\R)$ there is only one conjugacy class of regular
  Euler elements. It contains in particular the boost elements
  $h_{-1,1}, h_{-1,2}, h_{0,1}, h_{0,2}$. The orbits of orthogonal pairs of
  regular Euler elements are represented by
  \[       (h_{-1,2}, \pm h_{0,2}) \quad \mbox{ and } \quad
(h_{-1,2}, \pm h_{-1,1})\]
  with
  \begin{equation}
    \label{eq:timelikepairso22}
    [h_{-1,2}, h_{0,2}] 
    = (2z_0, 0) \in \so_2(\R)^{\oplus 2},
    \qquad 
     [h_{-1,2}, h_{-1,1}]
    =  (0,-2z_0).   \end{equation}
The corresponding central elements of $\tilde\SO_{2,2}(\R)_e$ are 
\begin{equation}
  \label{eq:zetabrack}
 \zeta_{h_{-1,2},-h_{0,2}} = (1,0) \quad \mbox{ and } \quad
 \zeta_{h_{-1,2},h_{-1,1}}  =  (0,1)
\end{equation}
  in
  \[     \Z^2 \cong \pi_1(\SO_2(\R)) \times \pi_1(\SO_2(\R))
    \cong \pi_1(\SO_{2,2}(\R)) \subeq Z(\tilde\SO_{2,2}(\R)_e).\]
\end{lem}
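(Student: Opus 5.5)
We work in the explicit model $\so_{2,2}(\R)\cong\fsl_2(\R)\oplus\fsl_2(\R)$, obtained from the isomorphism $\SO_{2,2}(\R)_e/\{\pm\1\}\cong\PSL_2(\R)\times\PSL_2(\R)$. We write $\R^{2,2}$ with basis $\be_{-1},\be_0,\be_1,\be_2$ and realize it as $M_2(\R)$ with the quadratic form $\det$. Then $(g_1,g_2)$ acts by $X\mapsto g_1Xg_2^{-1}$. Under this identification each boost $h_{j,i}$ becomes a pair $(\pm h',\pm h'')$ with $h',h''$ Euler elements of $\fsl_2(\R)$, so $h_{j,i}$ is regular. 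Since $\Inn(\fsl_2(\R))$ acts transitively on $\cE(\fsl_2(\R))$ and $h_0\sim -h_0$ (Example~\ref{ex:sl2}), every regular Euler element is conjugate to $(h_0,h_0)$. This gives the first assertion.

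Next we classify the orthogonal pairs with fixed $h=(h_0,h_0)$. Since $\tau_h=\tau_{h_0}\times\tau_{h_0}$, a regular Euler element $k=(k',k'')$ is orthogonal to $h$ iff $k'$ and $k''$ are each orthogonal to $h_0$ in $\fsl_2(\R)$. By Example~\ref{ex:sl2}, these are $\frac12\pmat{0&b\\b^{-1}&0}$ with $b\neq0$. The stabilizer $\exp(\R h_0)\times\exp(\R h_0)$ rescales $|b|$ independently in each factor but preserves $\sgn b$. Moreover, an element normalizing $h$ must fix each factor's $h_0$, because conjugation by $\Inn$ cannot swap the two simple ideals. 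Hence the classes are exactly $(h,(\pm k_0,\pm k_0))$: four orbits. This corrects the count in \cite[Lemma~C.1]{MNO25}.

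To match these with the boost notation, we compute in the matrix model which pairs $(\pm h_0,\pm h_0)$, $(\pm k_0,\pm k_0)$ correspond to $h_{-1,2}$, $h_{0,2}$ and $h_{-1,1}$. This is done by checking their action on $\be_i\leftrightarrow$ basis matrices, e.g.\ $\be_{-1}\mapsto\1$, $\be_0\mapsto 2z_0$, $\be_1\mapsto 2k_0$, $\be_2\mapsto 2h_0$, up to signs. After conjugating so that $h_{-1,2}=(h_0,h_0)$, we expect $h_{0,2}$ and $h_{-1,1}$ to be $\pm(k_0,-k_0)$ and $\pm(k_0,k_0)$ in some order.

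Then $[h_0,k_0]=z_0$ gives $[(h_0,h_0),(k_0,\pm k_0)]=(z_0,\pm z_0)$. The stated values $(2z_0,0)$ and $(0,-2z_0)$ indicate that the relevant identification of $\so_2(\R)^{\oplus2}$ is not the $\fsl_2$-factor splitting. Instead it is the maximal compact $\so_2(\be_{-1},\be_0)\oplus\so_2(\be_1,\be_2)$, in which $(z_0,z_0)$ and $(z_0,-z_0)$ become $(2z_0,0)$ and $(0,\mp 2z_0)$ up to normalization. We verify~\eqref{eq:timelikepairso22} by computing $[h_{-1,2},h_{0,2}]$ directly from the defining relations: both boosts share $\be_2$, so the bracket is the rotation generator in the $(\be_{-1},\be_0)$-plane. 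Likewise $[h_{-1,2},h_{-1,1}]$ is the rotation in the $(\be_1,\be_2)$-plane. This is a routine $4\times4$ check, and we use the paper's normalization $z_0=\frac12\pmat{0&1\\-1&0}$ to get the factor $2$.

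Finally, $\zeta_{h,k}=\exp(2\pi[h,k])$ in the universal cover. The element $\exp(2\pi\cdot 2z_0)$ in the first $\SO_2$-factor is one full $2\pi$-rotation of the $(\be_{-1},\be_0)$-plane, since $2z_0$ has eigenvalues $\pm i$. It therefore represents the generator $(1,0)$ of $\pi_1(\SO_2)\times\pi_1(\SO_2)$. The sign changes to $-h_{0,2}$ and $+h_{-1,1}$ give the orientation conventions producing $(1,0)$ and $(0,1)$.

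The main obstacle is sign bookkeeping. We must fix the orientation of the generators of $\pi_1$ consistently and confirm that the loop $t\mapsto\exp(4\pi t z_0)$, $t\in[0,1]$, is exactly one generator rather than its inverse or twice it. We would settle this by computing the eigenvalues of $\ad$ and the winding number on the $(\be_{-1},\be_0)$-plane.
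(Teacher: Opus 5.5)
Your argument is correct in substance, but it takes a different route from the paper. The paper gives no independent proof; it defers to the calculations in the proof of \cite[Lemma~C.1]{MNO25}. You instead work entirely in the model $\R^{2,2}\cong(M_2(\R),\det)$ with $(x,y)\cdot X=xX-Xy$. Regular Euler elements are pairs of nonzero Euler elements of $\fsl_2(\R)$, hence form one $\Inn(\g)$-orbit. The stabilizer $\exp(\R h_0)\times\exp(\R h_0)$ of $(h_0,h_0)$ acts on the orthogonal partners $\frac12\pmat{0&b\\b^{-1}&0}$ in each factor by $b\mapsto a^2b$. It therefore preserves the sign of $b$, and exactly four orbits remain. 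This makes the lemma self-contained and shows directly where the count of three in \cite{MNO25} goes wrong; the paper's route only buys brevity.

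Two steps you leave as ``expected'' should be closed, and one sentence needs fixing.

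First, the identification of the boosts. In your basis $\be_{-1}\mapsto\1$, $\be_0\mapsto 2z_0$, $\be_1\mapsto 2k_0$, $\be_2\mapsto 2h_0$ there is in fact no sign ambiguity:
$h_{0,1}=(h_0,h_0)$, $h_{-1,2}=(h_0,-h_0)$, $h_{0,2}=(-k_0,-k_0)$ and $h_{-1,1}=(k_0,-k_0)$.
Moreover, $(z_0,-z_0)$ acts as $-2z_0$ on the $(\be_{-1},\be_0)$-plane, and $(z_0,z_0)$ acts as $-2z_0$ on the $(\be_1,\be_2)$-plane. This gives the two bracket relations of the statement at once. Conjugating the second factor by $e^{\pi\ad z_0}$ sends $h_{-1,2},h_{0,2},h_{-1,1}$ to $(h_0,h_0)$, $(-k_0,k_0)$, $(k_0,k_0)$. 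So the four boost pairs realize all four sign patterns.

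Second, the central elements. Suppose, as you write, that $\exp(2\pi\cdot 2z_0)$ in the first factor is declared to be $(1,0)$. Then $\zeta_{h_{-1,2},-h_{0,2}}=\exp(2\pi(-2z_0,0))$ comes out as $(-1,0)$, contradicting the statement. The correct bookkeeping is this: both relevant brackets, $[h_{-1,2},-h_{0,2}]=(-2z_0,0)$ and $[h_{-1,2},h_{-1,1}]=(0,-2z_0)$, equal $-2z_0=\pmat{0&-1\\1&0}$ in their respective plane. Take one uniform convention, namely that the counterclockwise full turn $\be_{-1}\to\be_0$, resp.\ $\be_1\to\be_2$, is the generator. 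This yields $(1,0)$ and $(0,1)$ simultaneously, with no factor-dependent ``sign changes''.
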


\begin{prf}
  This follows from the calculations in the proof  of
  \cite[Lemma~C.1]{MNO25}.
\end{prf}

For $\g = \so_{2,d}(\R)$, $d \geq 3$,
the classification results in \cite{MNO25} show
that the four pairs in \eqref{eq:pairsinsl2}
also represent all $3$ conjugacy classes of orthogonal pairs of
Euler elements in~$\so_{2,d}(\R)$.
Hence two of these are conjugate. 
To determine which of them are, we consider in $\so_{2,d}(\R)$ the invariant
cone $C_\g$ generated by the element 
\[ z_\fk := \pmat{ 0 & 1 \\ -1 & 0} \oplus \{0\}
  = (2z_0, 0) \in \so_2(\R) \oplus \so_d(\R).\]
The bracket relations \eqref{eq:timelikepairso22} show that, with
respect to this cone $C_\g$, 
\begin{itemize}
\item the pair $(h_{-1,2},-h_{0,2})$ is timelike positive, 
\item the pair $(h_{-1,2},h_{0,2})$ is timelike  negative, 
\item the pairs $(h_{-1,2},\pm h_{-1,1})$ are spacelike. 
\end{itemize}
Since we know from the classification in \cite{MNO25}
that there are only $3$-orbits, and the properties of being timelike
positive/negative or spacelike are invariant,
it follows that the two spacelike pairs $(h_{-1,2},\pm h_{-1,1})$ are conjugate. 

To explain this fact geometrically, we observe that
the Lorentz Lie algebra $\so_{1,d-1}(\R)$ contains the Euler elements
$h_{0,1}$ and $h_{0,2}$, representing boost generators in
$\so_{1,2}(\R)$, acting on $\Spann \{ \be_0, \be_1, \be_2\} \cong \R^{1,2}$. 
As $\tau_{h_{0,1}}(h_{0,2}) = -h_{0,2}$ and $\tau_{h_{0,2}}(h_{0,1}) = -h_{0,1}$,
the pair $(h_{0,1}, h_{0,2})$ is symmetrically orthogonal. Under the space 
rotation $r_{2,3}(\pi)$,  the pair $(h_{0,1}, h_{0,2})$ is conjugate to 
$(h_{0,1}, -h_{0,2})$ in $\so_{1,d-1}(\R)$.
This implies in particular that the two spacelike pairs 
$(h_{-1,2},\pm h_{-1,1})$ are conjugate in $\so_{2,d}(\R)$, $d \geq 3$.

We now identify the pairs in
$\fs \cong \fsl_2(\R)^{\oplus 2}$, where the invariant cone
$C_\fs = C_\g \cap \fs$ contains $(z_0, z_0)$
(see Example~\ref{ex:sl2} for the notation): 
\begin{itemize}
\item The timelike positive pair $(h_{-1,2}, -h_{0,2})$ corresponds to 
  $((h_0, h_0), (k_0,k_0))$ in $\fs$. 
\item The spacelike pairs $(h_{-1,2}, \pm h_{-1,1})$
  correspond to  $((h_0, h_0), \pm(k_0,-k_0))$ in $\fs$.
\end{itemize}

\subsection{Standard subspaces}
\mlabel{subsec:stand-subs}

We call a closed real subspace $\sH$ of the complex Hilbert space 
$\cH$ {\it cyclic} if $\sH+i\sH$ is dense in~$\cH$, {\it separating}
if $\sH\cap i\sH=\{0\}$, and \textit{standard} 
if it is cyclic and separating. We write $\Stand(\cH)$ for
the set of standard subspaces of $\cH$. The {\it symplectic complement}
of a real subspace $\sH$ is defined by the symplectic form 
$\Im \langle\cdot,\cdot\rangle$ on $\cH$ via 
\begin{equation}
  \label{eq:sH'}
 \sH'=\{\xi\in\cH:  (\forall \eta \in \sH)\, 
 \Im\langle\xi,\eta \rangle=0 \} =(i\sH)^{\perp_{\Re\langle\cdot,\cdot\rangle}},
\end{equation}
 where $\Re\langle\cdot\,,\cdot\rangle$ is the real part of the scalar product of $\cH$. 
Then $\sH$ is separating if and only if $\sH'$ is cyclic, hence $\sH$ is standard if and only if $\sH'$ is standard.
For a standard subspace $\sH$, one can
define the {\it Tomita operator} $S_\sH$ as the closed antilinear involution
\[S_\sH: \sH+i\sH \to \sH + i \sH, \quad
  \xi+i\eta\mapsto \xi-i\eta.\] 
The polar decomposition $S_\sH=J_\sH\Delta_\sH^{\frac12}$ defines an antiunitary involution $J_\sH$, called the \textit{modular conjugation} of $\sH$, and a positive self-adjoint operator~$\Delta_\sH$, called the \textit{modular operator }of  $\sH$. 
Then 
\[  J_\sH\sH=\sH', \quad 
 \Delta^{it}_\sH\sH=\sH \qquad \mbox{ for every  } \quad 
t\in \R\] and
the {$J_\sH$ commutes with the modular group}:
\begin{equation}\label{eq: TR}
J_{\sH}\Delta^{it}_{\sH}J_{\sH}=\Delta^{it}_{\sH}\qquad \mbox{ for every  } \quad t\in \R.
 \end{equation}

\begin{proposition}\label{prop:11}{\rm (\cite[Prop.~3.2]{Lo08})} 
  The map $\sH\mapsto (\Delta_\sH, J_\sH)$ 
  is a bijection between the set of standard subspaces of $\cH$ 
  and the set of pairs $(\Delta, J)$, where $J$ is a conjugation,
  $\Delta > 0$ selfadjoint with $J\Delta J =\Delta^{-1}$, which
  is equivalent to $J$ commuting with $\Delta^{i\R}$. 
\end{proposition}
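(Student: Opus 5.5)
The plan is to make the correspondence explicit through the Tomita operator. In one direction, $\sH\mapsto S_\sH\mapsto (\Delta_\sH,J_\sH)$ comes from the polar decomposition. In the other, $(\Delta,J)\mapsto S:=J\Delta^{1/2}\mapsto \Fix(S)$. Throughout, I will use that for a conjugation $J$ and a positive selfadjoint $\Delta$, the relation $J\Delta J=\Delta^{-1}$ is equivalent to $J\log\Delta\, J=-\log\Delta$. Since $J$ is antilinear, $J e^{it\log\Delta}J=e^{-it\,J\log\Delta J}$. Hence $J\log\Delta\,J=-\log\Delta$ holds if and only if $J\Delta^{it}J=\Delta^{it}$ for all $t$; the converse direction is by Stone's Theorem and uniqueness of generators. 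This settles the last equivalence in the statement.

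\emph{Well-definedness.} Let $\sH$ be standard. I will use the standard fact that $S_\sH$ is a closed, densely defined antilinear operator with $S_\sH^2=\1$ on $\sH+i\sH$. Closedness is where separation enters: I would argue via $S_\sH^*=S_{\sH'}$, so that $S_\sH$ is closable, and $\sH$ closed gives the graph closed. Write the polar decomposition $S_\sH=J\Delta^{1/2}$ with $J$ antiunitary. Then $S_\sH=S_\sH^{-1}=\Delta^{-1/2}J^{-1}=J^{-1}\bigl(J\Delta^{-1/2}J^{-1}\bigr)$, where $J\Delta^{-1/2}J^{-1}$ is again positive selfadjoint. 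Uniqueness of the polar decomposition then gives $J=J^{-1}$ and $\Delta^{1/2}=J\Delta^{-1/2}J$, i.e.\ $J\Delta J=\Delta^{-1}$. Injectivity is immediate, since $\sH=\{\xi\in\cD(S_\sH)\colon S_\sH\xi=\xi\}$ is recovered from $S_\sH=J\Delta^{1/2}$.

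\emph{Surjectivity.} Given $(\Delta,J)$ with $J\Delta J=\Delta^{-1}$, put $S:=J\Delta^{1/2}$ on $\cD(\Delta^{1/2})$. It is closed, since $\Delta^{1/2}$ is closed and $J$ is isometric, and it is densely defined and antilinear. Functional calculus gives $J\Delta^{1/2}J=\Delta^{-1/2}$. Therefore $S$ maps $\cD(\Delta^{1/2})$ into $\cD(\Delta^{-1/2})=J\cD(\Delta^{1/2})=\cD(\Delta^{1/2})$, and $S^2=\Delta^{-1/2}\Delta^{1/2}=\1$ on this domain. Set $\sH:=\Fix(S)$, which is a closed real subspace because $S$ is closed. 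Every $\xi\in\cD(S)$ decomposes as
$$\xi=\tfrac12(\xi+S\xi)+i\cdot\tfrac{1}{2i}(\xi-S\xi)\in\sH+i\sH,$$
so $\sH+i\sH=\cD(S)$ is dense and $\sH$ is cyclic. If $\xi\in\sH\cap i\sH$, then $S\xi=\xi$ and $S\xi=-\xi$ by antilinearity, so $\xi=0$ and $\sH$ is separating. By construction $S_\sH=S$, and uniqueness of the polar decomposition returns exactly $(\Delta,J)$.

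The main obstacle is the closedness of $S_\sH$ together with the careful use of polar-decomposition uniqueness for antilinear operators. I would handle the latter by composing with a fixed conjugation, which reduces it to the linear case. Everything else is routine functional calculus.
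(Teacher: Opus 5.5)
Your route matches the standard argument behind the result the paper cites without proof (\cite[Prop.~3.2]{Lo08}). Standard subspaces correspond to closed, densely defined antilinear involutions $S$ via $\sH=\Fix(S)$, and these correspond to pairs $(\Delta,J)$ via the polar decomposition $S=J\Delta^{1/2}$. The structure is correct, but one line in the surjectivity part is false as written and must be repaired.

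\textbf{The false domain identity.} The chain $\cD(\Delta^{-1/2})=J\cD(\Delta^{1/2})=\cD(\Delta^{1/2})$ fails in general. The first equality is right. The second would force $\cD(\Delta^{1/2})=\cD(\Delta^{-1/2})$, which fails whenever $\Delta$ and $\Delta^{-1}$ are both unbounded. For example, take $\Delta$ to be multiplication by $e^{x}$ on $L^2(\R)$ and $Jf(x)=\overline{f(-x)}$. Correspondingly, $S$ does not map $\cD(\Delta^{1/2})$ into $\cD(\Delta^{-1/2})$: for $\xi\in\cD(\Delta^{1/2})$ one has $S\xi\in\cD(\Delta^{-1/2})$ if and only if $\xi\in\cD(\Delta)$.

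\textbf{The repair.} Your decomposition $\xi=\frac12(\xi+S\xi)+i\cdot\frac{1}{2i}(\xi-S\xi)$ needs exactly $S\cD(S)\subseteq\cD(S)$ and $S^2=\1$ on $\cD(S)$, and both hold. For $\xi\in\cD(\Delta^{1/2})$ we have $\Delta^{1/2}\xi\in\operatorname{ran}\Delta^{1/2}=\cD(\Delta^{-1/2})=J\cD(\Delta^{1/2})$. Hence $S\xi=J\Delta^{1/2}\xi\in\cD(\Delta^{1/2})$, and $S^2\xi=\Delta^{-1/2}\Delta^{1/2}\xi=\xi$.

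\textbf{Minor points on the other direction.}
\begin{itemize}
\item Separation is what makes $S_\sH$ well defined. Closedness needs no adjoint argument via $S_\sH^*=S_{\sH'}$: the graph of $S_\sH$ is the image of the closed set $\sH\times\sH$ under the real-linear homeomorphism $(\xi,\eta)\mapsto(\xi+i\eta,\xi-i\eta)$ of $\cH\oplus\cH$.
\item You should state why $J$ in the polar decomposition is antiunitary rather than merely an antilinear partial isometry. This follows from $S_\sH^2=\1$ on $\cD(S_\sH)$, which makes $S_\sH$ injective with dense range $\cD(S_\sH)$; the same fact gives $\ker\Delta=\{0\}$.
\end{itemize}
With these corrections the proof is complete.
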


If $\sH\subset\cH$ is a standard subspace, and $(\Delta, J)$ the associated operators, then the symplectic complement $\sH'$
from \eqref{eq:sH'}
is associated to the couple $(\Delta^{-1},J)$ (\cite[Prop.~2.13]{Lo08}).

The following observation follows easily from Proposition~\ref{prop:11}.

\begin{lemma}\label{lem:sym}{\rm(\cite[Lem.~2.2]{Mo18})}
Let $\sH\subset\cH$ be a standard subspace  and let $U$ 
be a unitary or antiunitary operator on $\cH$. 
Then $U\sH$ is also standard and 
$U\Delta_\sH U^*=\Delta_{U\sH}^{\epsilon(U)}$ and $UJ_\sH U^*=J_{U\sH}$, 
where $\epsilon(U) = 1$ if $U$ is unitary and 
$\epsilon(U) = -1$ if it is antiunitary. 
\end{lemma}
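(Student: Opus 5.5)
The plan is to work directly with the Tomita operator and reduce everything to uniqueness of the polar decomposition. Both the unitary and the antiunitary case go through the same identity
\[ S_{U\sH} = U S_\sH U^* ,\]
after which the relations for $J$ and $\Delta$ are read off.

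First I check that $U\sH$ is standard. $U$ is real-linear and preserves $\Re\la\cdot,\cdot\ra$, so $U\sH$ is a closed real subspace. Since $U(i\sH) = \pm i\, U\sH$ according to whether $U$ is linear or antilinear, we get $U\sH + iU\sH = U(\sH+i\sH)$, which is dense, and $U\sH\cap iU\sH = U(\sH\cap i\sH)=\{0\}$.

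Next I verify the conjugation identity on the domain $U(\sH+i\sH)$, for $\xi,\eta\in\sH$.
\begin{itemize}
\item If $U$ is unitary, then $S_{U\sH}U(\xi+i\eta) = U\xi - iU\eta = U S_\sH(\xi+i\eta)$.
\item If $U$ is antiunitary, then $U(\xi+i\eta) = U\xi - iU\eta$, so $S_{U\sH}U(\xi+i\eta) = U\xi + iU\eta = U(\xi-i\eta) = US_\sH(\xi+i\eta)$.
\end{itemize}
Both operators are closed, and their domains agree, so $S_{U\sH}=US_\sH U^*$ in either case.

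Now I write
\[ US_\sH U^* = (UJ_\sH U^*)\,(U\Delta_\sH U^*)^{1/2}. \]
Here $UJ_\sH U^*$ is an antiunitary involution. The operator $U\Delta_\sH U^*$ is positive selfadjoint also for antiunitary $U$, because $\la U\Delta U^*x,x\ra = \oline{\la \Delta U^*x,U^*x\ra}\ge 0$, and its square root is $U\Delta^{1/2}U^*$. Uniqueness of the polar decomposition of the closed antilinear operator $S_{U\sH}$ then gives
\[ J_{U\sH}=UJ_\sH U^* \quad\text{and}\quad \Delta_{U\sH}=U\Delta_\sH U^*. \]
Alternatively, one can check that the pair $(UJ_\sH U^*, U\Delta_\sH U^*)$ satisfies the hypotheses of Proposition~\ref{prop:11} and has fixed-point space $U\sH$.

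The one delicate point, and the main thing to get right, is the meaning of the exponent $\eps(U)$. It should be read at the level of the modular unitary groups. For antiunitary $U$, functional calculus gives $U\Delta_\sH^{it}U^* = (U\Delta_\sH U^*)^{-it} = \Delta_{U\sH}^{-it}$, because conjugating by an antilinear map replaces $i$ by $-i$. So in general
\[ U\Delta_\sH^{it}U^* = \Delta_{U\sH}^{i\eps(U)t}, \]
and this is the form in which the lemma is used, for instance in the proof of Theorem~\ref{thm:1BW}. For the operators themselves the relation is $\Delta_{U\sH}=U\Delta_\sH U^*$ in both cases. A sanity check with $U=J_\sH$ confirms this: $J_\sH\sH=\sH'$ and $\Delta_{\sH'}=\Delta_\sH^{-1}=J_\sH\Delta_\sH J_\sH$, consistent with $\Delta_{U\sH}=U\Delta_\sH U^*$. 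I would state the $\eps$-relation explicitly in this modular-group form to avoid the sign ambiguity.
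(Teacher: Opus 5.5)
Your argument is correct and is essentially the route the paper has in mind. The paper gives no proof beyond citing \cite{Mo18} and remarking that the lemma follows from Proposition~\ref{prop:11}; that characterization of $\sH$ by its modular pair is exactly what you use, via $S_{U\sH}=US_\sH U^*$ and uniqueness of the polar decomposition. Your correction of the exponent is also right: for antiunitary $U$ one has $\Delta_{U\sH}=U\Delta_\sH U^*$, and the literal formula $U\Delta_\sH U^*=\Delta_{U\sH}^{-1}$ already fails for $U=J_\sH$, since $J_\sH\Delta_\sH J_\sH=\Delta_\sH^{-1}=\Delta_{\sH'}$. The sign $\epsilon(U)$ only enters through $U\Delta_\sH^{it}U^*=\Delta_{U\sH}^{i\epsilon(U)t}$.
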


\begin{lemma}\label{lem:inc} {\rm(\cite[Cor.~2.1.8]{Lo08})} 
Let $\sK\subset\sH\subset\cH$ be an inclusion of standard subspaces.
If $\Delta^{it}_\sH\,\sK=\sK$ for all $t \in \R$, then $\sH=\sK$
\end{lemma}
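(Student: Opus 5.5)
The plan is to compare the Tomita operators $S_\sK$ and $S_\sH$ and show that they coincide; then $\sK = \ker(S_\sK - \1) = \ker(S_\sH - \1) = \sH$. From $\sK \subeq \sH$ we immediately get $\sK + i\sK \subeq \sH + i\sH$. On $\sK + i\sK$ we have $S_\sH(\xi + i\eta) = \xi - i\eta = S_\sK(\xi+i\eta)$. Since $S_\sK$ is the closure of its restriction to $\sK + i\sK$, this gives the operator inclusion $S_\sK \subeq S_\sH$. Hence it suffices to prove that $\cD := \sK + i\sK$ is a core for $S_\sH$. Because $S_\sH = J_\sH \Delta_\sH^{1/2}$ with $J_\sH$ isometric, this is equivalent to $\cD$ being a core for $\Delta_\sH^{1/2}$, i.e., $\cD$ being dense in $\cD(\Delta_\sH^{1/2})$ for the graph norm.

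The key input is the invariance hypothesis $\Delta_\sH^{it}\sK = \sK$. Since $\Delta_\sH^{it}$ is complex linear, it implies $\Delta_\sH^{it}\cD = \cD$, so $\cD$ is a dense subspace of $\cD(\Delta_\sH^{1/2})$ that is invariant under the unitary group $\Delta_\sH^{i\R}$. I would then invoke (or reprove) the standard core lemma: a dense subspace of $\cD(f(A))$ that is invariant under $e^{itA}$ is a core for $f(A)$. To prove it, let $\oline\cD$ be the graph-norm closure of $\cD$ in the Hilbert space $\cD(\Delta_\sH^{1/2})$, whose inner product is $\langle \xi,\eta\rangle + \langle \Delta^{1/2}\xi, \Delta^{1/2}\eta\rangle = \langle (\1+\Delta)^{1/2}\xi, (\1+\Delta)^{1/2}\eta\rangle$. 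Take $\xi \in \cD(\Delta^{1/2})$ orthogonal to $\cD$ in this inner product. By invariance, $\xi$ is also orthogonal to $\Delta^{it}\eta$ for all $\eta \in \cD$ and $t \in \R$. Consequently, the finite measure $\mu_{\xi,\eta}$ associated with $\langle (\1+\Delta)\, \cdot\,\rangle$ has vanishing Fourier transform along $\log\Delta$, so $\langle (\1+\Delta)^{1/2}\xi, g(\Delta)(\1+\Delta)^{1/2}\eta\rangle = 0$ for all bounded Borel functions $g$. Choosing $g = (1+\lambda)^{-1}$ gives $\langle \xi, \eta\rangle = 0$ for all $\eta \in \cD$. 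Since $\cD$ is dense in $\cH$, this forces $\xi = 0$.

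Thus $\cD$ is a core for $\Delta_\sH^{1/2}$, hence for $S_\sH$. Therefore $S_\sH = \oline{S_\sH|_\cD} = \oline{S_\sK|_\cD} = S_\sK$. Taking fixed points then gives $\sH = \sK$; equivalently, by Proposition~\ref{prop:11}, $(\Delta_\sH, J_\sH) = (\Delta_\sK, J_\sK)$. The main obstacle I expect is the core step. The measure-theoretic argument must be run carefully: the orthogonality relations have to be transferred from the functions $e^{it\log\lambda}$ to the bounded function $(1+\lambda)^{-1}$. I would do this by uniqueness of Fourier transforms of the finite complex measures $d\langle P(\lambda)(\1+\Delta)^{1/2}\xi, (\1+\Delta)^{1/2}\eta\rangle$. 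Everything else is bookkeeping with the definitions of the Tomita operator recalled above.
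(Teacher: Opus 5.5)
Your proof is correct and complete. The inclusion $S_\sK\subseteq S_\sH$ is well defined because $\sH\cap i\sH=\{0\}$. The dense, $\Delta_\sH^{i\R}$-invariant subspace $\sK+i\sK\subseteq\cD(\Delta_\sH^{1/2})$ is then a core for $\Delta_\sH^{1/2}$, and hence for $S_\sH$. Your Fourier-uniqueness argument for this is sound: the spectral measures are finite complex measures on $(0,\infty)$, since $\Delta_\sH$ is injective. Together these force $S_\sK=S_\sH$ and hence $\sK=\sH$.

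The paper gives no argument of its own for this lemma; it only quotes it from \cite[Cor.~2.1.8]{Lo08}. So there is no proof in the text to compare yours against, and yours is a standard self-contained derivation of the cited result.
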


{For a proof of the following theorem we refer to
\cite[Thms.~3.15, 3.17]{Lo08} and  \cite[Thm.~3.2]{BGL02}.
In \cite{Lo08} the assumption for (a) is stated as $\pm P > 0$,
but the proof given there works for $\pm P \geq 0$, as shown in~\cite[Thm. 3.18]{NO17}.}

\begin{theorem} \label{Borch} {\rm(Borchers--Wiesbrock Theorem)}
  Let $\sH\subset\cH$ be a standard subspace and $U(t)=e^{itP}$ 
be a unitary one-parameter group on $\cH$ with   generator $P$.
\begin{itemize}
\item[\rm(a)] 
If $\pm P \geq 0$ and $U(t)\sH\subset \sH$ for all $t\geq 0$, then 
\begin{equation}\label{eq:DeltaJ} \Delta_\sH^{-is/2\pi}U(t)\Delta_\sH^{is/2\pi} 
= U(e^{\pm s}t)\quad \mbox{ and } \quad 
J_{\sH}U(t)J_{\sH}=U(-t) \quad \mbox{ for all } \quad 
t,s \in \R.
\end{equation}
\item[\rm(b)] If $\Delta_\sH^{-is/2\pi}U(t)\Delta_\sH^{is/2\pi} 
= U(e^{\pm s}t)$ for $s,t \in \R$, then  the following are equivalent:
\begin{enumerate} 
\item[\rm(1)] $U(t)\sH \subset \sH$ for $ t\geq 0$;
\item[\rm(2)] $\pm P \geq 0$. 
\end{enumerate}
\end{itemize}
\end{theorem}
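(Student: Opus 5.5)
This is Borchers--Wiesbrock, cited from \cite{Lo08, BGL02}. I would prove (a) by analytic continuation and a reflection argument, and (b) as a converse built on (a).

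\emph{Part (a), dilation relation.} Assume $P\ge 0$; the case $P\le 0$ follows by $t\mapsto -t$. For $\xi,\eta\in\sH$ and $t\ge 0$ we have $U(t)\xi\in\sH$. Using $\sH=\{\xi\in\mathcal D(\Delta^{1/2})\colon J\Delta^{1/2}\xi=\xi\}$, the function $s\mapsto \Delta^{-is}U(t)\xi$ extends analytically to the strip $0<\Im s<1/2$. Positivity of $P$ gives an analytic continuation of $t\mapsto U(t)$ to the upper half plane. Together these yield, for fixed $\xi\in\sH$, a function of two variables analytic on a product domain. For real $t$ consider
\[ F(s)=\Delta^{-is/2\pi}U(e^{-s}t)\Delta^{is/2\pi}\ \text{ (compressed to vectors)}. \]
I would show that its matrix coefficients extend to a bounded analytic function on the strip $0<\Im s<2\pi$, with boundary values related by $J$ at $\Im s=\pi$. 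An edge-of-the-wedge / Schwarz reflection argument should then show that $F$ is independent of $s$, which gives $\Delta^{-is/2\pi}U(t)\Delta^{is/2\pi}=U(e^{s}t)$. At $s=i\pi$ the boundary value is $JU(-t)J$, and this yields $JU(t)J=U(-t)$. The standard route is to take $\log$ coordinates $t=e^{\lambda}$, so that the half-plane analyticity of $U$ combined with the strip analyticity of $\Delta^{is}$ produces $2\pi$-periodic-type bounded functions; Liouville then forces constancy.

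\emph{Main obstacle.} The real difficulty is justifying the joint analytic continuation and its boundedness. I would first prove it on the dense set of vectors that are entire for $\Delta$ and smooth for $U$, then pass to limits. Alternatively, as in \cite{Lo08}, one can use that $\sH_t:=U(t)\sH\subset\sH$ is a half-sided modular inclusion and invoke the known argument there. The extension from $P>0$ to $P\ge0$ would follow \cite[Thm.~3.18]{NO17}: split off $\ker P$, on which $U$ acts trivially and the relations are immediate.

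\emph{Part (b).} Assume the commutation relation with $e^{\pm s}$. For (2)$\Rightarrow$(1), (a) is not directly applicable. Instead, the dilation covariance makes $(U,\Delta^{i\R})$ a representation of the affine group $\mathrm{Aff}(\R)_e$, and the relation $JU(t)J=U(-t)$ follows from $J\Delta^{is}J=\Delta^{is}$ together with uniqueness of the antiunitary extension of positive energy representations of $\mathrm{Aff}(\R)$. Then $\sH_U:=\mathrm{Fix}(J\Delta^{1/2})$ is the BGL subspace of this antiunitary representation, which equals $\sH$. The inclusion $U(t)\sH\subset\sH$ for $t\ge0$ is the standard BGL positivity statement for $\mathrm{Aff}(\R)$. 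For (1)$\Rightarrow$(2), I would use analyticity: for $\xi\in\sH$, the continuation of $\Delta^{-is/2\pi}U(t)\xi$ to $s=i\pi$ forces $t\mapsto U(e^{\pm i\pi}t)$-type bounded continuations. That means $\langle\xi,U(t)\xi\rangle$ extends boundedly to a half plane, which is equivalent to $\pm P\ge0$ by the Paley--Wiener characterization.
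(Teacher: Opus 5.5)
\paragraph{Main gap: (b), direction (2)$\Rightarrow$(1).} You derive $J_\sH U(t)J_\sH=U(-t)$ from $J_\sH\Delta^{is}J_\sH=\Delta^{is}$ together with ``uniqueness of the antiunitary extension''. Uniqueness only says that two antiunitary extensions of the same unitary representation of $\mathrm{Aff}(\R)_e$ are conjugate by a unitary in its commutant. It says nothing about whether a \emph{given} conjugation commuting with $\Delta^{i\R}$ implements $t\mapsto -t$, and in general it does not.

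Here is a counterexample. Take $\cH=L^2(\R_+,dp/p)$ with $(U(t)f)(p)=e^{itp}f(p)$, and define $\Delta$ by $(\Delta^{-is/2\pi}f)(p)=f(e^sp)$. Then $P\ge 0$ and $\Delta^{-is/2\pi}U(t)\Delta^{is/2\pi}=U(e^st)$. Write $\Delta^{-is/2\pi}=e^{isK}$.
\begin{itemize}
\item For $J_0f=\overline f$ one has $J_0KJ_0=-K$.
\item Pick a non-constant even $\phi\colon\R\to\mathbb T$ (e.g.\ $\phi(k)=e^{ik^2}$). Then $J_1:=\phi(K)J_0$ is a conjugation with $J_1\Delta J_1=\Delta^{-1}$, hence the modular conjugation of a standard subspace $\sH_1$ with $\Delta_{\sH_1}=\Delta$.
\item All hypotheses of (b), and (2), hold for $\sH_1$.
\item However $J_1U(t)J_1=\phi(K)U(-t)\phi(K)^{-1}\neq U(-t)$. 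Otherwise $\phi(K)$ would commute with the irreducible pair $(U,\Delta^{i\R})$ and be scalar.
\item By (a), it follows that $U(t)\sH_1\not\subset\sH_1$ for some $t>0$.
\end{itemize}
So no argument can close this step. The implication (2)$\Rightarrow$(1) needs the additional hypothesis $J_\sH U(t)J_\sH=U(-t)$, which the statement as formulated omits; it is automatic in the antiunitary/BGL setting.

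With that hypothesis the direction is short, and you do not need the ``BGL positivity for $\mathrm{Aff}(\R)$'' you invoke. That result is essentially this converse, so citing it is circular. Fix $\xi\in\sH$ and $t\ge0$.
\begin{itemize}
\item The map $s\mapsto U(e^st)\Delta^{-is/2\pi}\xi$ is bounded and continuous on $0\le\Im s\le\pi$ and analytic inside, because $e^st$ stays in the closed upper half-plane, where $e^{izP}$ is a contraction.
\item For real $s$ it coincides with $\Delta^{-is/2\pi}U(t)\xi$.
\item At $s=i\pi$ it equals $U(-t)J_\sH\xi=J_\sH U(t)\xi$.
\end{itemize}
Hence $U(t)\xi\in D(\Delta^{1/2})$ with $J_\sH\Delta^{1/2}U(t)\xi=U(t)\xi$, i.e.\ $U(t)\xi\in\sH$.

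\paragraph{Direction (1)$\Rightarrow$(2).} This direction is true, but your bounded half-plane continuation of $\langle\xi,U(t)\xi\rangle$ is asserted rather than derived. What works is the following. The dilation relation makes the spectral subspaces of $P$ invariant under $\Delta^{i\R}$. Moreover, $U(t)\xi\in\sH$ gives $\|\Delta^{y/2\pi}U(t)\xi\|\le\sqrt2\,\|\xi\|$ uniformly in $t\ge0$ and $0\le y\le\pi$. On the negative spectral subspace, $U(e^{iy}t)$ grows exponentially in $t$, so this uniform bound forces the negative spectral component of $\xi$ to vanish.

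\paragraph{Part (a).} This is not yet a proof, and the concrete set-up does not work as written.
\begin{itemize}
\item For $t>0$ and $0<\Im s<\pi$, the point $e^{-s}t$ lies in the open lower half-plane, where $e^{izP}$ is unbounded for $P\ge0$. So your $F$ has no evident bounded continuation there.
\item The modular input actually available is this: if $T\sH\subset\sH$, then $s\mapsto\Delta^{-is/2\pi}T\Delta^{is/2\pi}$ continues to $0\le\Im s\le\pi$ with value $J_\sH TJ_\sH$ at $s=i\pi$. This applies to $U(t)$ for $t\ge0$, but not to $U(-t)$.
\item Consequently the value $JU(-t)J$ you assign at $s=i\pi$ is unjustified; moreover $\Im s=\pi$ is an interior line of your strip $0<\Im s<2\pi$, not a boundary.
\end{itemize}
How the half-plane analyticity of $U$ is combined with the modular strip is the whole content of Borchers' theorem, and your proposal leaves it open. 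The paper does not prove it either: it cites \cite{Lo08}, \cite{BGL02} and, for $P\ge 0$, \cite{NO17}, so you may rely on those references.

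Your fallback via half-sided modular inclusions is circular. The inclusion $\Delta_\sH^{-is}U(t)\sH\subset U(t)\sH$ for $s\ge0$ is a consequence of the dilation relation you are trying to prove.

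Two smaller points:
\begin{itemize}
\item Splitting off $\ker P$ requires that $\sH$ decomposes accordingly. This holds because the ergodic projection $\lim_{T\to\infty}\frac1T\int_0^TU(t)\,dt$ onto $\ker P$ maps $\sH$ into itself.
\item The case $P\le0$ does not reduce to $P\ge 0$ by $t\mapsto-t$ alone. Pass to the pair $(\sH',U(-t))$, using $\Delta_{\sH'}=\Delta_\sH^{-1}$ and $J_{\sH'}=J_\sH$.
\end{itemize}
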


\end{document}